\documentclass[a4paper,12pt,reqno,superscriptaddress,showkeys,nofootinbib]{revtex4}
\usepackage[centertags]{amsmath}
\usepackage{amsfonts}
\usepackage{amssymb}
\usepackage{amsthm}
\usepackage{newlfont}
\usepackage{stmaryrd}
\usepackage{mathrsfs}
\usepackage{euscript}
\usepackage{siunitx}
\usepackage{physics}
\usepackage{algorithm2e}
\usepackage{graphicx,subcaption}
\usepackage{enumitem}
\usepackage{natbib}
\usepackage{graphicx}
\usepackage{color}
\usepackage{floatrow}
\usepackage{caption}
\usepackage{hyperref}
\usepackage{hhline}
\usepackage{hyperref}
\usepackage{cleveref}

\theoremstyle{plain}
  \newtheorem{theorem}{Theorem}[section]
  \newtheorem{corollary}[theorem]{Corollary}
  \newtheorem{proposition}[theorem]{Proposition}
  \newtheorem{lemma}[theorem]{Lemma}
\theoremstyle{definition}
  \newtheorem{definition}{Definition}[section]

\theoremstyle{remark}

  \newtheorem{example}[theorem]{Example}

\numberwithin{equation}{section}



 \let\be=\beta  
\let\ve=\varepsilon   
 \let\la=\lambda


\newcommand{\caV}{{\mathcal V}}


\newcommand{\bbR}{{\mathbb R}}

\newcommand{\opunit}{\text{1}\kern-0.22em\text{l}}



\DeclareMathAlphabet{\mathpzc}{OT1}{pzc}{m}{it}



\newcommand{\id}{\textrm{d}}

\usepackage{floatrow}
\usepackage{caption}

\begin{document}
\title{The vanishing of excess heat for\\nonequilibrium processes reaching zero ambient temperature}
\author{Faezeh Khodabandehlou}
	\affiliation{Instituut voor Theoretische Fysica, KU Leuven, Belgium}
	\author{Christian Maes}
\affiliation{Instituut voor Theoretische Fysica, KU Leuven, Belgium}
\author{Irene Maes}
	\affiliation{Departement Wiskunde, KU Leuven, Belgium}	%
		\author{Karel Neto\v{c}n\'{y}}
		\affiliation{Institute of Physics, Czech Academy of Sciences, Prague, Czech Republic}
	\keywords{}
	
\keywords{Nernst postulate, excess heat, nonequilibrium heat capacity, matrix-forest theorem.}

\begin{abstract}
We present the mathematical ingredients for an extension of the Third Law of Thermodynamics (Nernst heat postulate) to nonequilibrium processes.  The central quantity is the excess heat which measures the quasistatic addition to the steady dissipative power when a parameter in the dynamics is changed slowly.  We prove for a class of driven Markov jump processes that it vanishes at zero environment temperature.  Furthermore, the nonequilibrium  heat capacity goes to zero with temperature as well.  Main ingredients in the proof are the matrix-forest theorem for the relaxation behavior of the heat flux, and the matrix-tree theorem giving the low-temperature asymptotics of the stationary probability. The main new condition for the extended Third Law requires the absence of major (low-temperature induced) delays in the relaxation to the steady dissipative structure.
\end{abstract}
\maketitle

\section{Introduction}

The Nernst Postulate (1907) states that the change of entropy in any isothermal process approaches zero as the temperature reaches absolute zero\footnote{Its historical origin lies in the variational principle of Thomsen and Berthelot, which was an empirical precursor of the Gibbs variational principle, \cite{callen}.}. It evolved into the Planck version of the Third Law of Thermodynamics, stating that the entropy of a perfect crystal of a pure substance vanishes at absolute zero. That Third Law differs from the First and Second Laws, as it does not automatically follow from more microscopic considerations.  In fact, it sometimes fails, as exemplified by ice models in \cite{paul}. Nevertheless, theorems on the validity of the Third Law have been obtained for a large class of equilibria, cf. \cite{aL}:
for quantum and classical lattice systems, the entropy density at absolute zero temperature is directly related to the degeneracy of the ground state
corresponding to boundary conditions with the highest degeneracy.  The Third Law holds for example for any ferromagnetic Ising system.  Counterexamples include dimer systems and (spin) ice models, \cite{kast,temp,lieb,paul}.\\

It is natural to ask about thermal properties of nonequilibria as well. Then appears the question of low-temperature asymptotics and a possible formulation of an extended Third Law.
Note that the temperature (which goes to zero) need not be physically associated to the degrees of freedom of the system but rather to a thermal bath or large equilibrium environment weakly coupled to the system.   The main point  to remember is that steady nonequilibrium systems are open (and possibly small) and constantly dissipate heat into the (much larger) environment.  When parameters change, such as the temperature of that heat bath or the volume of the system, the heat flux may change.  In other words, when connecting two nonequilibrium conditions via a quasistatic transformation, an excess heat may flow.  The present paper rigorously defines that notion of excess heat, and we prove, under certain conditions, that this excess heat vanishes at zero ambient temperature.  We do that for irreducible (continuous time) Markov jump processes with finite state space.  To connect their mathematics with thermal properties, we require the interpretation of local detailed balance, which identifies the heat during a transition with the logarithmic ratio of forward to backward rates.\\

In Section \ref{ms} we start with the setup, definitions and a presentation of the main results.  In particular, we specify the Markov jump processes with the physical interpretation of heat and dissipated power in a thermal bath. The heat flux is parameterized by the inverse temperature $\beta$ of the bath and by $n$ other (external) real parameters $\alpha$, summarized in $\lambda = (\beta^{-1},\alpha)\in \bbR^{n+1}$. We show that when making a quasistatic transformation over a curve $\Gamma$ in that parameter space, the expected excess heat to the heat bath equals
\begin{equation}\label{geoi}
    Q(\Gamma) = \int_\Gamma \id \lambda\cdot D_\lambda\, \qquad D_\lambda =  \langle \nabla_\lambda V_\lambda\rangle_\lambda^s
    \end{equation}
    in terms of the so-called quasipotential $V_\lambda(x)$.  The notation $\langle\cdot\rangle_\lambda^s$ indicates an average in the stationary probability distribution $\rho_\lambda^s $ at parameter values $\lambda$.\\
Toward the end of Section \ref{ms} we  already informally discuss our main result, that $D_\lambda$ vanishes as the temperature of the thermal bath tends to absolute zero.  In fact, $D_\lambda = (-C_\alpha(\beta), D_\beta(\alpha))$ splits up in two components, depending on whether $\beta$ is kept constant (in $D_\beta(\alpha)$), or if $\alpha$ is kept constant (in $C_\alpha(\beta)$) during the quasistatic transformation.  Theorem \ref{t1} gives conditions for $D_\beta(\alpha)$ to vanish at absolute temperature.  The heat capacity
\begin{equation}\label{hccc}
C_\alpha(\beta) =  \beta^2 \langle \frac{\partial}{\partial \beta} V_\lambda\rangle_\lambda
\end{equation}
is the variation with temperature of the excess heat toward the system.  We show that $C_\alpha(\beta)$ tends to zero as well for inverse temperature $\beta\uparrow \infty$ [Theorem \ref{t2}].\\
The conditions for both Theorems are introduced at the end of Section \ref{ms}.  We already discuss  the static condition in Section \ref{drho}, to obtain $\nabla_\lambda \rho_\lambda^s \rightarrow 0$ when $\beta\uparrow \infty$.\\
Section \ref{graph} introduces the graph-theoretic elements needed for the proof of the boundedness of the quasipotential.  It can be stated in terms of the matrix-forest theorem.  The sufficient conditions are illustrated in Section \ref{exam} with simple examples, to show that they are on target.\\
The actual proofs of the main theorems start in Section \ref{proofs}.  It first concentrates on showing the geometric result \eqref{geoi}.   Section \ref{proofsbp} shows the boundedness of $V_\lambda$ in $\beta\uparrow \infty$.\\
A summary of results and arguments is given in Section \ref{conc}.\\
The Appendix makes the explicit links with the matrix-forest theorem.\\

We end this introduction by giving some additional context and background.  Steady state thermodynamics has been started in a number of papers like \cite{prig,oono,kom2}, where it attempts to remove the condition of close-to-equilibrium that was prominent in much of  irreversible thermodynamics \cite{dGM}.  In particular, the idea of excess heat as used in the present paper has been discussed in \cite{oono,kom2}.  Then, around 2011, nonequilibrium heat capacities were explicitly introduced and examples were discussed in \cite{epl,cejp,cal,jir}.  Other definitions of nonequilibrium heat capacity have been proposed in various papers, including~\cite{man,subas,dls}.\\
For more than a decade then, it remained very much an open question whether and when those heat capacities would vanish at absolute zero. Only with the present paper, a precise answer is given.\\
A physics presentation of the mathematical results contained here is found in \cite{nernst}.  In particular, we discuss there how our setup covers certain quantum features and indeed fits the Nernst Postulate as usually presented in thermochemistry. We also give there a heuristic argument to show that Third Law behavior follows when relaxation times do not exceed the dissipation time.   More illustrations and calculations of nonequilibrium heat capacities are found in \cite{simon,pritha,drazin}.  The present paper gives the rigorous version including all mathematical details concerning the quasistatic limit of the excess heat, and the graphical-expression of the quasipotential that leads to the proof of its boundedness.  That has not appeared elsewhere.

\section{Setup and main results}\label{ms}

\subsection{Markov jump process}\label{mpr}
Consider a simple, connected and finite graph $G = (\cal V(G), \cal E(G))$ with vertex set $\cal V(G) = \cal V$ and edge set $ \cal E(G) = \cal E$.  Vertices are written as $x,y,\dots$; edges are denoted by $e:=\{x,y\}$ when unoriented and $\overline{e}:=(x,y)$ is an oriented (or, directed) edge which starts in $x$ and ends in $y$.\\
We consider a Markov jump process $X_t = X_t^\lambda \in \cal V$ for times 
 $t\geq 0$ and with rates $k_\lambda(x,y)>0$ for the transition $x$ to $y$ when $\{x,y\} \in \cal E$, and otherwise the rates are zero.  The $\lambda$ refers to the dependence of the process on $n+1$ real parameters $\mathbf{\lambda} = (\beta^{-1},\alpha), \alpha = (\alpha_i, i=1,\dots,n)$ with $\beta\geq 0$ interpreted as the inverse temperature of a thermal bath, and other parameters $\alpha\in {\cal A}$ for some open set ${\cal A} \subset \mathbb{R}^n$.  The graph $G$ is fixed throughout, and does not depend on $\lambda$. We assume that all the transition rates $k_\lambda(x,y)$ are smooth in $\lambda$, which implies smoothness of all derived quantities.\\ 
  By irreducibility, there is a unique stationary probability distribution $\rho^s_\lambda(x)>0,\,  x\in \cal V$, solution of
the stationary Master Equations,
\begin{equation}\label{maeq}
\sum_y k_{\mathbf \lambda}(x,y)\rho^s_\mathbf{\lambda}(x) = \sum_y k_{\mathbf \lambda}(y,x)\rho^s_\mathbf{\lambda}(y)
\end{equation}
Stationary expectations with respect to $\rho^s_\lambda$ are written as $\langle f \rangle^s_\lambda = \sum_x f(x) \rho^s_\lambda(x)$ for functions $f(x), x\in \cal V$.\\
The backward generator $L_\lambda$ is the $|\cal V|\times |\cal V|$-matrix having elements $L_\lambda(x,y)=k_\lambda(x,y)$ and $L_\lambda(x,x)=-\sum_y k_\lambda(x,y)$.  We ignore the $\lambda-$dependence in the notations if no confusion can arise.\\

As physical orientation, we think of the vertices as states or configurations of an open system.  The transitions between states  are  possibly accompanied by exchanges of energy or particles with the environment.  The $\beta$ is the inverse temperature of the environment and the parameters $\alpha$ may appear in (interaction or self-) energies, or can quantify external parameters such as  spatial volume or boundary conditions. 

\subsection{Excess heat}
We assume that
\begin{equation}\label{q(z,z')}
\frac 1{\beta}\log\frac{k_\mathbf{\lambda}(x,y)}{k_\lambda(y,x)} = q_\alpha(x,y)
\end{equation}
does not depend on $\beta$.  The $q_\alpha(x,y)$ are obviously antisymmetric.  Following the physical condition of local detailed balance \cite{ldb}, $q_\alpha(x,y)$ is interpreted as the heat to the thermal bath (at inverse temperature $\beta$) in the transition $x$ to $ y$.\\  We then write
\begin{equation}\label{power}
    {\cal P}_\lambda(x) := \sum_y k_\lambda(x,y) q_\alpha(x,y).
\end{equation}
for the expected instantaneous power when in $x$.  By convexity $\langle \cal P_\lambda \rangle_\lambda^s \geq 0$, and $\langle \cal P_\lambda \rangle_\lambda^s$ is called the stationary dissipated power.
An important quantity will be the \emph{quasipotential $V_\lambda$}, a function on $\cal V$ defined as
\begin{eqnarray}\label{prv}
    V_\lambda(x) &:=& \int_0^{+\infty} \id t\,
[\langle {\cal P}_\lambda(X_t)\,|\,X_0=x\rangle_\lambda - \langle \cal P_\lambda \rangle^s_\lambda]\\
&=& \int_0^{+\infty} \id t\,e^{tL}[{\cal P}_\lambda - \langle \cal P_\lambda \rangle^s_\lambda] (x)\label{pla}
\end{eqnarray}
where in the last equality appears the semigroup $S(t) = e^{tL}$ for which $\langle g(X_t)\,|\,X_0=x\rangle_\lambda = S(t)g\,(x)$ for an arbitrary function $g$.
The quasipotential will appear in the quasistatic expression of excess heat, to which we turn next.\\

Given a smooth time-dependence $\lambda(t), 0\leq t\leq 1$, of the parameters, we call $\Gamma$ its image as a curve in parameter space $\bbR^+\times \cal A$. For  a quasistatic process, we write $\lambda^\ve$ and consider a protocol where the system evolves under $\lambda^\ve(t) :=\lambda(\ve t), 0\leq t\leq \ve^{-1}$ on $\Gamma$. The $\ve$ is the rate of change in the parameter  protocol where $\lambda$ evolves.  The limit where $\ve\downarrow 0$ will make the process to become quasistatic. For such a time-dependent process, at every moment $t$ the distribution is $\rho_t^\ve$, solving the time-dependent Master equation,
\begin{equation}\label{time-dependent master eq}
    \frac{\partial}{\partial t}\rho^\ve_t = L^\dagger_{\lambda(\ve t)}\rho_t^\ve
\end{equation}
for the corresponding forward generator $L_\lambda^\dagger$  (transpose of $L_\lambda$), $L_\lambda^\dagger\rho_\lambda^s=0$.\\
For a given protocol $\lambda^{\ve}$ at rate $\ve>0$ over times $t \in [0, \ve^{-1}]$, the expected excess heat towards the thermal bath at inverse temperature $\beta$ is defined by
\begin{equation}\label{qai}
 Q_\ve := \int_0^{1/\ve} \id t\,\sum_x \big({\cal P}_{\lambda(\ve t)}(x)\rho^\ve_t(x) - {\cal P}_{\lambda(\ve t)}(x)\rho^s_{\lambda(\ve t)}(x)\big).   
\end{equation}
The quasistatic asymptotics of \eqref{qai} yields the ``geometric'' expression
    \begin{proposition}\label{pr1}
            \begin{equation}\label{geo}
    \lim_{\ve\downarrow 0} Q_\ve = Q(\Gamma) = \int_\Gamma \id \lambda\cdot D_\lambda
    \end{equation}
    with
    \begin{equation}\label{qpo}
    D_\lambda = \langle \nabla_\lambda V_\lambda\rangle_\lambda^s
    \end{equation}
    in terms of the quasipotential \eqref{prv}.
    \end{proposition}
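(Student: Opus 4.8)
The plan is to turn \eqref{qai} into an exact identity by exploiting the Poisson equation solved by the quasipotential, and only then to pass to the quasistatic limit using the uniform relaxation of the instantaneous dynamics. Two facts about $V_\lambda$ drive everything, and I would record them first. Differentiating the semigroup representation \eqref{pla} and using irreducibility (so that $S(t)[{\cal P}_\lambda-\langle{\cal P}_\lambda\rangle^s_\lambda]\to 0$ as $t\to\infty$), the quasipotential solves
\[
L_\lambda V_\lambda = \langle{\cal P}_\lambda\rangle^s_\lambda - {\cal P}_\lambda .
\]
Moreover, since $\langle S(t)f\rangle^s_\lambda=\langle f\rangle^s_\lambda$ for every $f$ by stationarity of $\rho^s_\lambda$, integrating \eqref{prv} against $\rho^s_\lambda$ gives the centering $\langle V_\lambda\rangle^s_\lambda = 0$ for all $\lambda$, which is what will make all boundary contributions disappear.

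Next I would rewrite the integrand of \eqref{qai}. Writing $h^\ve_t := \rho^\ve_t-\rho^s_{\lambda(\ve t)}$ and noting $\sum_x h^\ve_t(x)=0$, the Poisson equation together with the duality $\sum_x (L_\lambda f)(x)g(x)=\sum_x f(x)(L^\dagger_\lambda g)(x)$ and the relation $L^\dagger_{\lambda(\ve t)}h^\ve_t = L^\dagger_{\lambda(\ve t)}\rho^\ve_t=\partial_t\rho^\ve_t$ (using $L^\dagger_\lambda\rho^s_\lambda=0$) converts the excess power into a total time derivative,
\[
\sum_x {\cal P}_{\lambda(\ve t)}(x)\,h^\ve_t(x) = -\sum_x V_{\lambda(\ve t)}(x)\,\partial_t\rho^\ve_t(x).
\]
Integrating by parts in $t$ over $[0,\ve^{-1}]$ and substituting the slow time $s=\ve t$, with $\mu^\ve_s:=\rho^\ve_{s/\ve}$ and $\dot\lambda=\id\lambda/\id s$, yields the exact identity
\[
Q_\ve = -\Big[\,\sum_x V_{\lambda(s)}(x)\,\mu^\ve_s(x)\,\Big]_{s=0}^{s=1} + \int_0^1 \id s\;\dot\lambda(s)\cdot\sum_x \nabla_\lambda V_{\lambda(s)}(x)\,\mu^\ve_s(x).
\]

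It then remains to let $\ve\downarrow 0$. The key input is the adiabatic convergence $\mu^\ve_s\to\rho^s_{\lambda(s)}$ for each $s\in(0,1]$. Because the state space is finite and every $L_\lambda$ is irreducible, the generators have a spectral gap bounded below uniformly on the compact curve $\Gamma$; writing $\partial_t h^\ve_t = L^\dagger_{\lambda(\ve t)}h^\ve_t - \ve\,\dot\lambda\cdot\nabla_\lambda\rho^s_{\lambda(\ve t)}$ and applying Duhamel's formula with this uniform gap shows that $h^\ve_t$ stays of order $\ve$ away from an initial boundary layer, giving the claimed convergence. Granting it, dominated convergence (the integrand is bounded uniformly in $\ve,s$ since probabilities are bounded and $\nabla_\lambda V_\lambda$ is continuous on $\Gamma$) sends the integral to $\int_0^1 \id s\,\dot\lambda(s)\cdot\langle\nabla_\lambda V_{\lambda(s)}\rangle^s_{\lambda(s)}=\int_\Gamma \id\lambda\cdot D_\lambda$. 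For the endpoints, $\mu^\ve_1=\rho^\ve_{1/\ve}\to\rho^s_{\lambda(1)}$ by the same relaxation, and with the natural stationary initial condition $\rho^\ve_0=\rho^s_{\lambda(0)}$, the centering $\langle V_\lambda\rangle^s_\lambda=0$ makes both boundary terms vanish, leaving $\lim_{\ve\downarrow 0}Q_\ve=\int_\Gamma \id\lambda\cdot D_\lambda$.

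I expect the adiabatic estimate $\mu^\ve_s\to\rho^s_{\lambda(s)}$, with enough uniformity to justify dominated convergence and the $s=1$ endpoint, to be the only genuine obstacle: one must control the Duhamel integral against the time-dependent generator, use the uniform spectral gap to absorb the $O(\ve)$ source term $\ve\,\dot\lambda\cdot\nabla_\lambda\rho^s$, and check that the initial boundary layer contributes negligibly to both the bulk integral and the $s=0$ endpoint. The remaining ingredients — the Poisson equation, the centering of $V_\lambda$, and the integration by parts — are exact and routine.
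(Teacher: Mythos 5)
Your proof is correct, but it takes a genuinely different route from the paper's. The paper solves the time-dependent master equation perturbatively, $\rho^\ve_t = \rho^s_{\lambda(\ve t)} + \ve F_t^{(\ve)} + O(\ve^2)$, identifies the first-order correction through the restricted pseudo-inverse $1/L^\dagger_\lambda$ acting on $\nabla_\lambda\rho^s_\lambda$, converts the time integral of the deviation into the geometric integral $\int_\Gamma \id\lambda\cdot (1/L^\dagger_\lambda)\nabla_\lambda\rho^s_\lambda$, and only then pairs with ${\cal P}_\lambda$, uses duality to bring in $V_\lambda$, and integrates by parts in $\lambda$ along $\Gamma$, the endpoint terms vanishing by $\langle V_\lambda\rangle^s_\lambda=0$. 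You instead keep $\ve$ fixed, use the Poisson equation plus duality to rewrite the excess power exactly as $-\sum_x V_{\lambda(\ve t)}(x)\,\partial_t\rho^\ve_t(x)$, integrate by parts in time, and pass to the limit only at the end. What your route buys is that the identity before the limit is exact, so you need only the zeroth-order adiabatic statement $\mu^\ve_s\to\rho^s_{\lambda(s)}$ together with uniform boundedness for dominated convergence; the paper's argument, by contrast, integrates a deviation of size $O(\ve)$ over a time window of length $1/\ve$ and therefore implicitly requires the first-order expansion with a controlled $o(\ve)$ remainder. The cost is that you must establish the adiabatic convergence honestly (uniform spectral gap on the compact curve plus Duhamel, with the stationary initial condition $\rho^\ve_0=\rho^s_{\lambda(0)}$ eliminating the boundary layer) --- you correctly single this out as the only analytic obstacle, a point the paper's proof glosses over by simply asserting smoothness around $\ve=0$. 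Both arguments share the same essential mechanism: the Poisson equation $L_\lambda V_\lambda = \langle{\cal P}_\lambda\rangle^s_\lambda - {\cal P}_\lambda$, the duality $\sum_x (L_\lambda f)(x)\,g(x)=\sum_x f(x)\,(L^\dagger_\lambda g)(x)$, and the centering $\langle V_\lambda\rangle^s_\lambda=0$ that kills the endpoint contributions.
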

Note that there is no problem with the regularity of the quasipotential as function of $\lambda$.  In fact, from \eqref{pla}, $V_\lambda$ solves the Poisson equation
\begin{equation}\label{poi}
LV_\lambda(x)  = f\,,\qquad 
\langle V_\la\rangle^s_\lambda =0
\end{equation}
with source $f(x) = \langle {\cal P}_\lambda \rangle_\lambda^s - {\cal P}_\lambda(x)$ satisfying the centrality condition $\langle f\rangle^s_\lambda =0$.  Regularity of the solution of  such Poisson equations has been studied in many contexts; see e.g. \cite{poiss} \cite{pm}.\\

We call \eqref{geo} geometric because the integral is over the curve $\Gamma$ in parameter space.  We have
\begin{eqnarray}\label{cl}
D_\lambda &=& -\sum_x V_\lambda(x)\,\nabla_\lambda \rho_\lambda^s(x) =
-\big\langle V_\lambda\,\nabla_\lambda \log \rho^s_\lambda \big\rangle_\lambda^s
\nonumber\\
&=:&  \left(-C_\alpha(\beta),D_\beta(\alpha)\right)\label{cla}
    \end{eqnarray}
where the first component is minus the heat capacity, {\it cf}. \eqref{hccc}, 
\begin{equation}\label{hca}
C_\alpha(\beta) = -\Big\langle \frac{\partial}{\partial \beta^{-1}} V_\lambda \Big\rangle^s_\lambda = \big\langle V_\lambda\,\frac{\partial}{\partial \beta^{-1}} \log \rho^s_\lambda \big\rangle_\lambda^s
\end{equation}
which fixes the parameters $\alpha$ in the variation.  
The remaining $n-$component $D_\beta(\alpha)$ has  fixed (large) $\beta$ over the curve $\Gamma$ in \eqref{geo}:
\begin{equation}
 D_\beta(\alpha) =  -\big\langle V_\lambda\,\nabla_\alpha \log \rho^s_\lambda \big\rangle_\lambda^s  
\end{equation}
which is physically related to latent heat.\\
From \eqref{cla}, the strategy of finding sufficient conditions  for $D_\lambda\rightarrow 0$ as $\beta\uparrow \infty$, is clearly suggested:  we want to find conditions so that, (1) that for all $x\in \cal V$,
\begin{equation}\label{stac}
\lim_{\beta\uparrow\infty}\nabla_\lambda \rho_\lambda^s(x) =0
\end{equation}
and (2) that the quasipotential is uniformly bounded in $\beta$: there is a constant $c_v$ so that for all $x$,
\begin{equation}\label{dynaco}
|V_\lambda(x)| < c_v, \,\quad\alpha \in {\cal A}
\end{equation}
We start with the first condition \eqref{stac} in the next section, involving the low-temperature asymptotics of $\rho_\lambda^s$.  The study of the second condition starts with introducing graph elements in  Section \ref{graph} and proving the boundedness in Section \ref{proofsbp} with Proposition \ref{bdd}.  
However, before continuing, it is interesting to check with the equilibrium situation.  Equilibrium dynamics is a detailed balance dynamics, characterized by the existence of a potential function $V_\alpha$ on $\cal V$ such that $q_\alpha(x, y) = V_\alpha(x) - V_\alpha(y)$.  Then, the quasipotential is actually equal to $V_\alpha(x) - \langle V_\alpha\rangle^s_\lambda$ and the excess heat reduces to the standard heat. Note then that  in the equilibrium case, only  condition \eqref{stac} is needed, and the
extended Third Law becomes the standard Third Law. The boundedness \eqref{dynaco} only enters when the system is out of equilibrium, but it is essential (as can be seen from examples) even arbitrarily
close to equilibrium.

\subsection{Low-temperature stationary measure}\label{drho}
Since we are dealing with systems having a finite number of states, and no estimates on the behavior of the excess heat as function of the number of particles are attempted, we only need to worry for \eqref{stac} about the low-temperature asymptotics of the stationary probability.  We recall first the general result on the low-temperature structure of stationary measures for Markov jump processes; see \cite{heatb,lowT}.\\

The Kirchhoff formula for the stationary distribution  $\rho^{s}_\lambda$ reads,
\begin{eqnarray}\label{kir}
    \rho^s(x) &=&\frac{w(x)}{W}, \quad w(x)=\sum_{ T\in \cal T} w( T_x)\\
    w( T_x)&=& \prod_{\overline{e}=(z,z')\in  T_x} k(z,z')\label{kir2}
\end{eqnarray}
where  $W$ is the normalization and the sum in the weights $w(x)$ is over all spanning trees $ T$ in the graph, \cite{kir}, take $\cal T$ as the set of all spanning trees in graph $G$.  For a given spanning tree $T$, $ T_x$ is its oriented version with all edges directed to $x$.  The weight $w(T_x)$ is the product of transition rates over the oriented edges $\overline{e}=(z,z')$ in that oriented spanning tree with root $x$.\\ 
We introduce
\begin{equation}
 \phi_\alpha(x,y):=\lim_{\beta \rightarrow \infty }
   \frac{1}{\beta}\log  k_\lambda(x,y)   
\end{equation}
and for an oriented subgraph $\cal G$ of $G$, write  $\phi_\alpha(\cal G):=\sum_{(z,z')\in \cal E(\cal G)}\phi_\alpha(z,z')$.  Define
\begin{align}\label{phi}
    \phi_\alpha(z):=\max _{T\in \cal T} \phi_\alpha(T_z), \quad \phi^*_\alpha:=\max_{z\in \cal V} \phi_\alpha(z)
\end{align}
We call $x$ a dominant state whenever $\phi_\alpha(x) = \phi^*_\alpha$.\\
As shown in \cite{heatb}, formul{\ae} \eqref{kir}--\eqref{kir2} lead to the low-temperature asymptotics, 
\begin{equation}\label{lowkir}
    \rho^s_\lambda(x) = B_\lambda(x)\,e^{-\beta [\phi^*_\alpha - \phi_\alpha(x)]}( 1+ O(e^{-\delta \beta}))
\end{equation} 
where $B_\lambda(x)$ is subexponential in $\beta$, i.e.\ $\log B_\lambda(x) = o(\beta)$, and
$\delta>0$, uniformly in  $\alpha\in \cal A$. See also \cite{lowT,intr}.\\

The asymptotics \eqref{lowkir} implies that two dominant states may still have different probabilities in the $\beta\uparrow\infty$ -limit, determined by the prefactor $B_\lambda(x)$.\\

Recall that the parameters $\alpha$ are unrelated to temperature; see Section \ref{mpr}.  In the conditions below, it is important that the zero-temperature behavior is obtained uniformly in $\alpha \in \cal A$.   \\

\textbf{Condition 1a}: There is a 
probability distribution $\rho^0$ independent of $\alpha\in \cal A$ so that 
\begin{equation}\label{c12}
\lim_{\beta\uparrow\infty} \rho^s_\lambda(x) = \rho^{0}(x)
\end{equation}
uniformly in $\alpha$.\\

Condition 1a is implied when the maximizer of $\phi^*_\alpha$ in equation \eqref{phi} is unique but in \eqref{c12} we do not require that the correction is exponentially small in large $\beta$.
\begin{proposition}\label{condition1.2}
    Assume   Condition 1a.  Then,  $\lim_\beta \nabla_\alpha \rho^s_\lambda(x) = 0$ for all $x\in \cal V$.
\end{proposition}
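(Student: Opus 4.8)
The plan is to prove the componentwise statement $\partial_{\alpha_i}\rho^s_\lambda(x)\to 0$ for each coordinate $\alpha_i$ and each $x\in\cal V$, by differentiating the Kirchhoff representation \eqref{kir}--\eqref{kir2} and showing that the resulting expression, although it may grow polynomially in $\beta$, is multiplied by a factor that is either exponentially small or pinned to an $\alpha$-independent limit. Writing $\partial_{\alpha_i}\rho^s_\lambda(x)=\rho^s_\lambda(x)\,\partial_{\alpha_i}\log\rho^s_\lambda(x)$ and using $w(T_x)=\prod_{(z,z')\in T_x}k_\lambda(z,z')$, I would first record the identity
\begin{equation*}
\partial_{\alpha_i}\log\rho^s_\lambda(x)=\langle G_i\rangle_x-\sum_{y}\rho^s_\lambda(y)\,\langle G_i\rangle_y
\end{equation*}
where $\langle\,\cdot\,\rangle_x$ averages over spanning trees weighted by the rooted weights $w(T_x)$ and $G_i(T_x)=\sum_{(z,z')\in T_x}\partial_{\alpha_i}\log k_\lambda(z,z')$ sums the logarithmic rate derivatives over the edges oriented toward $x$. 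This exhibits $\partial_{\alpha_i}\log\rho^s_\lambda(x)$ as the deviation of the tree-average $\langle G_i\rangle_x$ from its stationary mean.

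The second step is a growth estimate. Since $\phi_\alpha(z,z')=\lim_{\beta\to\infty}\beta^{-1}\log k_\lambda(z,z')$ exists and is smooth in $\alpha$, the smoothness of the rates gives $\partial_{\alpha_i}\log k_\lambda(z,z')=O(\beta)$ uniformly in $\alpha\in\cal A$; as each spanning tree carries $|\cal V|-1$ edges, the averages $\langle G_i\rangle_x$, and hence $\partial_{\alpha_i}\log\rho^s_\lambda(x)$, are $O(\beta)$ as well. I would then split $\cal V$ into the dominant set $\cal D=\{x:\phi_\alpha(x)=\phi^*_\alpha\}$ and its complement. For a non-dominant $x$ the asymptotics \eqref{lowkir} gives $\rho^s_\lambda(x)\le e^{-\beta[\phi^*_\alpha-\phi_\alpha(x)]+o(\beta)}$ with $\phi^*_\alpha-\phi_\alpha(x)>0$ bounded below uniformly in $\alpha$, so the exponential decay defeats the polynomial factor and $\partial_{\alpha_i}\rho^s_\lambda(x)\to 0$ uniformly.

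It remains to treat the dominant states, which is the heart of the matter. When the maximizer in \eqref{phi} is unique---the situation that, as noted after \eqref{c12}, already guarantees Condition 1a---there is a single dominant state $x^*$ and the normalization $\sum_x\rho^s_\lambda(x)=1$ differentiates to $\partial_{\alpha_i}\rho^s_\lambda(x^*)=-\sum_{y\neq x^*}\partial_{\alpha_i}\rho^s_\lambda(y)\to 0$ by the previous step, finishing the proof. The main obstacle is the general case of several dominant states: there the normalization controls only $\sum_{x\in\cal D}\partial_{\alpha_i}\rho^s_\lambda(x)$, and one must show in addition that the individual derivatives vanish. For this I would use that Condition 1a forces the limiting ratios $\rho^s_\lambda(x)/\rho^s_\lambda(y)$, $x,y\in\cal D$, to be $\alpha$-independent constants; since these ratios are governed by the subexponential prefactors $B_\lambda$ of \eqref{lowkir}, the content to be extracted is that the $\alpha$-gradient of $\log B_\lambda$ stays bounded and its $O(\beta)$ part cancels between numerator and denominator, so that the convergence $\rho^s_\lambda\to\rho^0$ upgrades from $C^0$ to $C^1$ in $\alpha$. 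Establishing this promotion---that uniform convergence to an $\alpha$-independent limit, together with the tree structure of $B_\lambda$, forces the $\alpha$-derivatives to vanish---is the delicate point, and is precisely where the regularity of the rates in $\lambda$ (jointly in $\beta^{-1}$ and $\alpha$) must be invoked.
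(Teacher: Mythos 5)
Your proposal takes a genuinely different route from the paper, but it has a genuine gap, and it sits exactly where you yourself flag ``the delicate point.'' The paper's proof is two lines: smoothness of the rates in $\alpha$ for each $\beta$ (plus finiteness of ${\cal V}$) makes $\rho^s_\lambda$ smooth in $\alpha$, and the \emph{uniformity} in $\alpha$ of the convergence in Condition 1a to the $\alpha$-independent limit $\rho^0$ is invoked to exchange $\lim_\beta$ with $\nabla_\alpha$, so that $\lim_\beta \nabla_\alpha\rho^s_\lambda(x) = \nabla_\alpha\rho^0(x) = 0$. Your Kirchhoff-based route does dispose of the non-dominant states (exponential decay of $\rho^s_\lambda(x)$ in \eqref{lowkir} beats controlled growth of the logarithmic derivative) and of the case of a unique maximizer in \eqref{phi} (by differentiating the normalization). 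But Condition 1a is strictly weaker than uniqueness of the dominant state --- the paper says so immediately after \eqref{c12} --- and in the case of several dominant states your argument ends with a statement of what must be shown (that the convergence $\rho^s_\lambda\to\rho^0$ upgrades from $C^0$ to $C^1$ in $\alpha$, via cancellations in the prefactors $B_\lambda$) rather than a proof of it. That upgrade \emph{is} the proposition; nothing is actually extracted from ``the tree structure of $B_\lambda$,'' so the only non-elementary case is left unproven.

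A secondary weakness: your growth estimate asserts that smoothness of the rates together with existence of $\phi_\alpha(z,z')$ gives $\partial_{\alpha_i}\log k_\lambda(z,z') = O(\beta)$ uniformly in $\alpha$. Smoothness in $\alpha$ at each fixed $\beta$ places no constraint on how $\alpha$-derivatives behave as $\beta\uparrow\infty$: a subexponential prefactor such as $\exp\big(\sqrt{\beta}\,\sin(e^{\beta}\alpha)\big)$ is smooth and does not affect $\phi_\alpha$, yet its logarithmic $\alpha$-derivative grows like $\sqrt{\beta}\,e^{\beta}$, which can overwhelm the factor $e^{-\beta[\phi^*_\alpha-\phi_\alpha(x)]}$. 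So even the non-dominant-state step quietly assumes a uniform regularity of the rates in $(\beta,\alpha)$ beyond the paper's stated hypotheses. To be fair, the paper's own limit exchange also leans on an unspoken regularity of this kind, but it places the entire burden on the uniformity hypothesis of Condition 1a, which is explicitly assumed; your version shifts that burden onto unproven structural claims about the tree weights, and therefore does not close.
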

\begin{proof}
By the assumed smoothness of the transition rates in $\alpha$ for every $\beta$ and since the number of vertices is finite, we have that the stationary distribution $\rho^s_{\lambda}$ is smooth as well.  
By the uniform limit, we can exchange the limits ``$\lim_\beta \nabla_\alpha= \nabla_\alpha \lim_\beta$'' and we have the required $\nabla_\alpha \rho^s_\lambda(x) \rightarrow 0$ as $\beta\uparrow\infty$.
\end{proof}
In the following condition, we no longer need that the zero-temperature limit of the stationary distribution is $\alpha-$independent, but the convergence speed must be faster than $\beta^{-1}\,$, uniformly in $\alpha\in \cal A$.\\

\textbf{Condition 1b}: There is a probability distribution $\rho_\alpha^o$ so that
\begin{equation}\label{c2}
\lim_{\beta\uparrow\infty} \beta\left(\rho^s_\lambda(x) - \rho^o_\alpha(x)\right) = 0
\end{equation}
uniformly in $\alpha\in \cal A$, for all $x \in \cal V$.\\

As a direct consequence:
\begin{proposition}\label{condition2}
 Condition 1b implies that $\lim_\beta \frac{\partial \rho^s_\lambda(x)}{\partial \beta^{-1}}  = 0$ for all $x\in \cal V$.
\end{proposition}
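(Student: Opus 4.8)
The plan is to set $T:=\beta^{-1}$ and study $g(T):=\rho^s_\lambda(x)$ as a function of $T$ for fixed $\alpha$, so that $\partial\rho^s_\lambda/\partial\beta^{-1}=g'(T)$ and $\beta\uparrow\infty$ becomes $T\downarrow 0$. First I would read off from Condition 1b, \eqref{c2}, that $\lim_{T\downarrow 0}\big(\rho^s_\lambda(x)-\rho^o_\alpha(x)\big)/T=0$; in particular $\rho^s_\lambda(x)\to\rho^o_\alpha(x)$, so $g$ extends continuously to $T=0$ with $g(0)=\rho^o_\alpha(x)$ and the ratio in \eqref{c2} is exactly the difference quotient $(g(T)-g(0))/T$. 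Hence \eqref{c2} states precisely that the one-sided derivative of $g$ at the endpoint $T=0$ exists and equals zero.

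The only gap is that the claim concerns $\lim_{T\downarrow 0}g'(T)$, the limit of the derivative, which for a general function need not coincide with the endpoint derivative $g'(0)$: a vanishing difference quotient is compatible with a derivative that oscillates without limit. To bridge this I would prove that $\lim_{T\downarrow 0}g'(T)$ exists; once this is known, the mean value theorem closes the argument, since for each small $T$ there is $\xi_T\in(0,T)$ with $(g(T)-g(0))/T=g'(\xi_T)$, and letting $T\downarrow 0$ forces $\lim_{T\downarrow 0}g'(T)=\lim_{T\downarrow 0}(g(T)-g(0))/T=0$ by the first paragraph.

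To establish existence of $\lim_{T\downarrow 0}g'(T)$ I would not differentiate the rates directly — they may diverge as $T\downarrow 0$ — but work from the Kirchhoff representation \eqref{kir}--\eqref{kir2} together with the low-temperature asymptotics \eqref{lowkir}, splitting $\caV$ into dominant states ($\phi_\alpha(x)=\phi^*_\alpha$) and the rest. For a non-dominant $x$ the gap $\Delta_\alpha(x):=\phi^*_\alpha-\phi_\alpha(x)$ is strictly positive, and since $\partial_T$ acting on $e^{-\beta\Delta_\alpha(x)}=e^{-\Delta_\alpha(x)/T}$ produces at worst a factor $T^{-2}e^{-\Delta_\alpha(x)/T}\to 0$, the whole contribution together with the $O(e^{-\delta\beta})$ correction is exponentially flat and $g'(T)\to 0$ on its own, needing no hypothesis. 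For a dominant $x$ the exponential is absent, $g(T)$ agrees with the subexponential prefactor $B_\lambda(x)$ up to an exponentially flat term, and existence of $\lim_{T\downarrow 0}g'(T)$ reduces to the $C^1$-regularity of $B_\lambda(x)$ at $T=0$. Here I would invoke the Arrhenius form of the rates — that the prefactors $k_\lambda(z,z')\,e^{-\beta\phi_\alpha(z,z')}$ extend to smooth, strictly positive functions of $T$ up to $T=0$ — so that $B_\lambda(x)$, being a ratio of finite sums of products of such prefactors with strictly positive denominator $W>0$, is itself $C^1$ at $T=0$.

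The main obstacle is precisely this last regularity step. Condition 1b alone delivers the difference quotient, hence the endpoint derivative $g'(0)=0$, but not the existence of $\lim_{T\downarrow 0}g'(T)$; the subexponential bound $\log B_\lambda=o(\beta)$ of \eqref{lowkir} does not even guarantee a finite limit for $B_\lambda$, let alone a $C^1$ one. What one must extract from the smooth Arrhenius structure and the Kirchhoff ratio — where the divergent exponential weights cancel — is that the prefactors $B_\lambda$ extend $C^1$ (not merely continuously) to $T=0$, and that this holds uniformly in $\alpha\in\caA$ so the conclusion $\partial\rho^s_\lambda(x)/\partial\beta^{-1}\to 0$ is valid throughout $\caA$. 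With that regularity in hand, the mean value theorem of the second paragraph and Condition 1b together pin $\lim_{T\downarrow 0}g'(T)=0$, which is the assertion.
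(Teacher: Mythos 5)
Your diagnosis is correct, and it is worth saying plainly: the paper does not actually prove Proposition \ref{condition2}. It is introduced only by the phrase ``as a direct consequence'' of Condition 1b, with no proof environment following --- in other words, the paper silently identifies the quantity that \eqref{c2} controls, the difference quotient $\beta\big(\rho^s_\lambda(x)-\rho^o_\alpha(x)\big)=\big(\rho^s_\lambda(x)-\rho^o_\alpha(x)\big)/T$ with $T=\beta^{-1}$, with the quantity the proposition asserts something about, namely $\lim_{T\downarrow 0}\partial\rho^s_\lambda(x)/\partial T$. That identification is exactly the step you refuse to make in your second paragraph, and you are right to refuse it: for functions that are merely smooth on $T>0$, a vanishing endpoint difference quotient does not control the limit of the derivative. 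Moreover, this is not a hypothetical worry inside the paper's stated hypotheses. Take a driven three-cycle with all heats equal to $q\neq 0$, unit symmetric prefactors on two edges and prefactor $a(T)=1+T^{2}\sin\big(e^{1/T}\big)$ on the third; these rates are smooth in $\lambda$ for $\beta<\infty$ and satisfy \eqref{q(z,z')}. The Kirchhoff formula \eqref{kir} gives $\rho^s_\lambda(x)-\tfrac13=O(T^{2})$ for every state, so Condition 1b holds, while $\partial\rho^s_\lambda(x)/\partial T$ is asymptotic to a nonzero multiple of $a'(T)$ and oscillates unboundedly as $T\downarrow 0$. So the proposition, read literally as ``Condition 1b plus smoothness at finite $\beta$ implies the conclusion,'' is false; some regularity of the prefactors up to $T=0$ is being assumed implicitly.

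Given that, your proposed completion has the right shape and is essentially the only reasonable one: the mean value theorem reduces the claim to the existence of $\lim_{T\downarrow 0}\partial\rho^s_\lambda(x)/\partial T$, and that existence is what a $C^1$-up-to-$T=0$ Arrhenius structure of the rates delivers through the Kirchhoff ratio (non-dominant states contribute exponentially flat terms; dominant states are governed by the prefactors $B_\lambda$). You are also right that this input cannot be extracted from \eqref{c2} or from the subexponential bound $\log B_\lambda=o(\beta)$ in \eqref{lowkir}: subexponentiality of $B_\lambda$ does not even bound $B_\lambda'$. The one thing to change in your write-up is to stop presenting this as an unresolved obstacle and instead state it as an explicit additional hypothesis (the prefactors $k_\lambda(x,y)\,e^{-\beta\phi_\alpha(x,y)}$ extend to $C^1$, strictly positive functions of $T$ at $T=0$, uniformly in $\alpha\in\mathcal{A}$); with it your argument is a complete and rigorous proof, and without something of that kind no proof exists --- including the paper's.
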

We can summarize the situation so far as follows. Recall the Definitions \eqref{qpo} and \eqref{hca}, of excess heat and heat capacity.
\begin{proposition}
    When \eqref{dynaco} (the quasipotential is uniformly bounded for $\beta\uparrow\infty$), and Condition 1a are satisfied, then  $D_\beta(\alpha)\rightarrow 0$ as $\beta\uparrow \infty$. When \eqref{dynaco} holds together with Condition 1b, then the heat capacity $C_\alpha(\beta) \rightarrow 0$ as $\beta\uparrow \infty$.
\end{proposition}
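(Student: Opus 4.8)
The plan is to combine the two expressions for the components of $D_\lambda$ derived in \eqref{cla}--\eqref{hca} with the uniform bound \eqref{dynaco} and the convergence results of Propositions \ref{condition1.2} and \ref{condition2}. Recall that
\begin{equation*}
D_\beta(\alpha) = -\sum_x V_\lambda(x)\,\nabla_\alpha \rho_\lambda^s(x), \qquad
C_\alpha(\beta) = -\sum_x V_\lambda(x)\,\frac{\partial}{\partial \beta^{-1}}\rho_\lambda^s(x).
\end{equation*}
The key point is that each component is a \emph{finite} sum over $x\in\cal V$ of products in which one factor, $V_\lambda(x)$, is uniformly bounded, while the other factor, a derivative of the stationary distribution, tends to zero. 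Since the state space is finite, there are no issues of interchanging limits with infinite sums, and a uniformly bounded quantity times a vanishing one has a vanishing product.

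First I would treat $D_\beta(\alpha)$. By hypothesis, \eqref{dynaco} gives $|V_\lambda(x)| < c_v$ uniformly in $\alpha\in\cal A$ for all large $\beta$. By Proposition \ref{condition1.2}, Condition 1a yields $\nabla_\alpha \rho_\lambda^s(x) \to 0$ as $\beta\uparrow\infty$ for every $x\in\cal V$. Hence each summand $V_\lambda(x)\,\nabla_\alpha \rho_\lambda^s(x)$ is bounded in absolute value by $c_v\,|\nabla_\alpha \rho_\lambda^s(x)|$, which tends to zero. Summing the finitely many vertex contributions and applying the triangle inequality gives $|D_\beta(\alpha)| \le c_v \sum_x |\nabla_\alpha \rho_\lambda^s(x)| \to 0$, establishing $D_\beta(\alpha)\to 0$.

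Next I would treat the heat capacity $C_\alpha(\beta)$ by the identical argument, now invoking Proposition \ref{condition2}: Condition 1b guarantees $\partial \rho_\lambda^s(x)/\partial \beta^{-1} \to 0$ as $\beta\uparrow\infty$ for each $x$. Again using \eqref{dynaco} to bound $|V_\lambda(x)| < c_v$, we obtain $|C_\alpha(\beta)| \le c_v \sum_x |\partial \rho_\lambda^s(x)/\partial \beta^{-1}| \to 0$.

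There is essentially no obstacle internal to this proposition: the entire content has been isolated into the earlier lemmas, and what remains is the elementary observation that a bounded-times-vanishing sum over a finite index set vanishes. The one point requiring a little care is to keep the bound on $V_\lambda$ genuinely uniform in $\beta$ (that is the force of \eqref{dynaco}), so that it can be pulled out of the limit; the real mathematical work—proving that uniform bound via the matrix-forest theorem, and proving the convergence of the stationary measure via the matrix-tree asymptotics \eqref{lowkir}—lives in the cited results and not here. I would therefore present this as a short deduction, flagging that it is the uniform-in-$\beta$ nature of \eqref{dynaco}, rather than any delicate estimate, that makes the interchange of bound and limit legitimate.
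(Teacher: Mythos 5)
Your proof is correct and takes essentially the same route as the paper, which treats this proposition as an immediate consequence of the finite-sum representation \eqref{cl}, the uniform bound \eqref{dynaco}, and Propositions \ref{condition1.2} and \ref{condition2}, exactly the bounded-times-vanishing argument over the finite state space that you give. The only slip is the sign in your recalled formula for the heat capacity---by \eqref{cla} and \eqref{hca} it is $C_\alpha(\beta) = +\sum_x V_\lambda(x)\,\partial \rho^s_\lambda(x)/\partial \beta^{-1}$---but since only $|C_\alpha(\beta)|$ enters your estimate, this does not affect the conclusion.
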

For the rest of the paper we can thus concentrate on the boundedness of $V_\lambda$. Proofs are collected in Section \ref{proofs}. It is important here to understand the role of graphical representations as in the next section.  We give some intuition to motivate that approach:\\
In the illustrations of heat capacities in the literature so far, it was already observed how their behavior as function of temperature may inform us about dynamical aspects as well.  The main change from equilibrium is indeed that nonequilibrium heat capacities are able to pick up the dependence of the relaxation of heat and dissipated power on temperature and other parameters.  The low-temperature behavior of heat capacities is therefore not only or no longer only informing us about static fluctuations but information about {\it dynamical accessibility} enters as well, which is naturally expressed as a weighted graph-property. That is seen most spectacularly in the behavior of heat capacities in the immediate neighborhood of absolute zero.  As we will prove and illustrate, all states must remain ``relatively well-connected,'' which is a dynamical condition. Even when the invariant measure at zero temperature is concentrating on one unique state, the heat capacity can still be different from zero (and even diverge).  That happens at parameter values where the relaxation behavior of the heat gets pathological; we can speak here of a localization-phenomenon where there is a local delay in the relaxation to the stationary dissipation.   Such ``accessibility'' and ``no-delay'' can be captured more precisely by the graphical elements that we next introduce, and will be illustrated via examples also in Appendix \ref{exam}.  The dynamical Condition 2 in the beginning of Section \ref{mair} will be summarizing all we need.

\section{Graph elements}\label{graph}
We recall some standard notions from graph theory \cite{bala}.  Remember that we have  a connected and finite simple  graph $G = (\cal V, \cal E)$ with vertices $x,y,\dots$ and edges denoted by $e:=\{x,y\}$ when unoriented and $\overline{e}:=(x,y)$ if oriented (or, directed) from $x$ to $y$.\\
A graph $H$ is called a \textit{subgraph} of  $G$ if $\cal V(H) \subseteq \cal V(G)$ and $\cal E(H) \subseteq \cal E(G)$. We call $H$ a {\it spanning} subgraph if $\cal V(H) = \cal V(G)$ and  $\cal E(H) \subseteq \cal E(G)$. If every edge of $H$ is assigned a direction  then  an oriented subgraph of $G$  is created and it  is  shown by $\overline{H}$.\\
A \textit{path} in  graph $G$ is an alternating sequence of vertices ($x_i$) and directed edges ($\overline{e}_j$): $x_0\,\overline{e}_1\,x_1, \overline{e}_2\, x_2...\overline{e}_n\,x_n$ in which edge $\overline{e}_i$ starting from vertex $x_{i-1}$ and ending to vertex $x_i$ and also  all visited vertices and edges are different. For example, $(x,(x,y),y,(y,z),z)$ is a path from $x$ to $z$.\\
A \emph{loop} denoted by $\ell$ is an alternating sequence of vertices ($x_i$) and edges ($e_j$): $x_0\,e_1\,x_1, e_2\, x_2...e_n\,x_0$ in which edge $e_i$ ends in $x_{i-1}$ and  $x_i$. In the sequence   the initial and the final vertex are the same and the other vertices and edges are all different.\\
An \textit{oriented loop} denoted by $\overline{\ell}$ is an alternating sequence of vertices ($x_i$) and directed edges ($\overline{e}_j$): $x_0\,\overline{e}_1\,x_1,\overline{e}_2\, x_2...\overline{e}_n\,x_0$ in which edge $\overline{e}_i=(x_{i-1},x_i)$. In the sequence   the initial and the final vertex are the same and the other vertices and edges are all different. For example, $(x,(x,y),y,(y,z),z,(z,x),x)$ is an oriented loop. \\

A connected simple graph without a loop is called  a \textit{tree}; a single vertex being also considered as  a tree.  A \textit{rooted tree} is an oriented tree such that all edges are directed toward a specific vertex  called \emph{root}. In  a tree  rooted in $x$, there is always a unique path from any other vertex to $x$ and there is no edge going out from $x$. A spanning subgraph of $G$ without any loop is called a \emph{spanning tree} in $G$, and when rooted at some $x$ then it is called a \emph{rooted spanning tree}. The set of all spanning trees in $G$ is denoted by $\cal T$. For more clarity, we reserve the symbol $T$ (or $T_x$) for spanning trees (or rooted spanning trees) in $G$, and the symbol $\tau$ (or $\tau_x$) to denote trees (or rooted trees) which are not spanning.  See example~\ref{tree} and Fig.~\ref{ext}. 

\begin{definition}\label{tl}
    \begin{enumerate}[label=(\alph*)]
 \item A \emph{tree-loop} is a graph made by a loop and  trees connected to the loop; see Fig.~\ref{treecycle}.  Clearly, a loop is considered to be a tree-loop in which every vertex on the loop is considered to be a tree.  However, a tree is not considered to be a (special) tree-loop.
         \begin{figure}[H]
    \centering
    \includegraphics[scale=0.35]{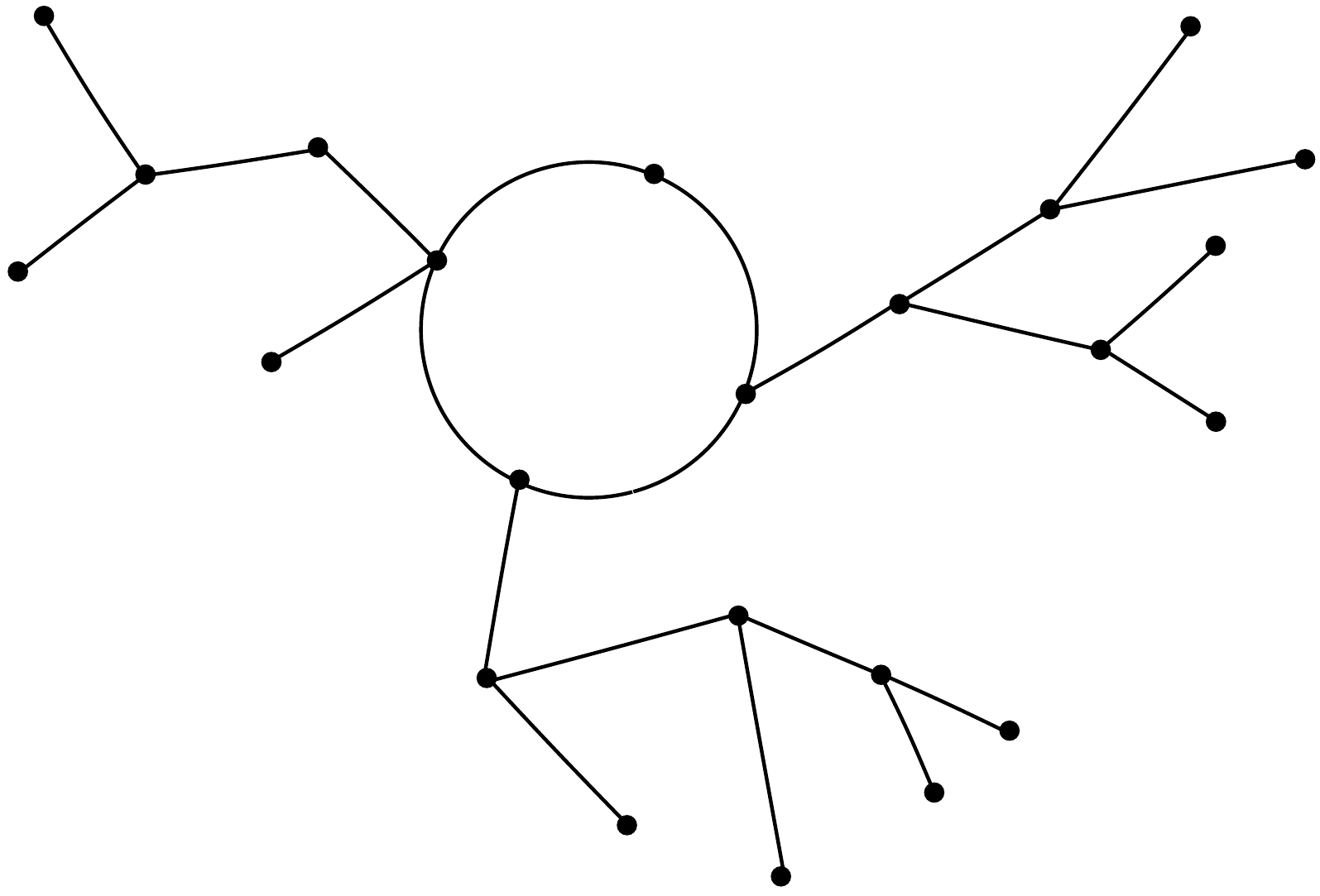}
   \caption{A tree-loop with four trees, one of which is a single vertex.}
    \label{treecycle}
    \end{figure}
 An \emph{oriented tree-loop} is a graph consisting of an oriented loop and rooted trees such that the root of every tree is located on the loop; see Fig.~\ref{otl}.
        \begin{figure}[H]
     \centering
     \begin{subfigure}{0.49\textwidth}
         \centering
         \includegraphics[scale = 0.35]{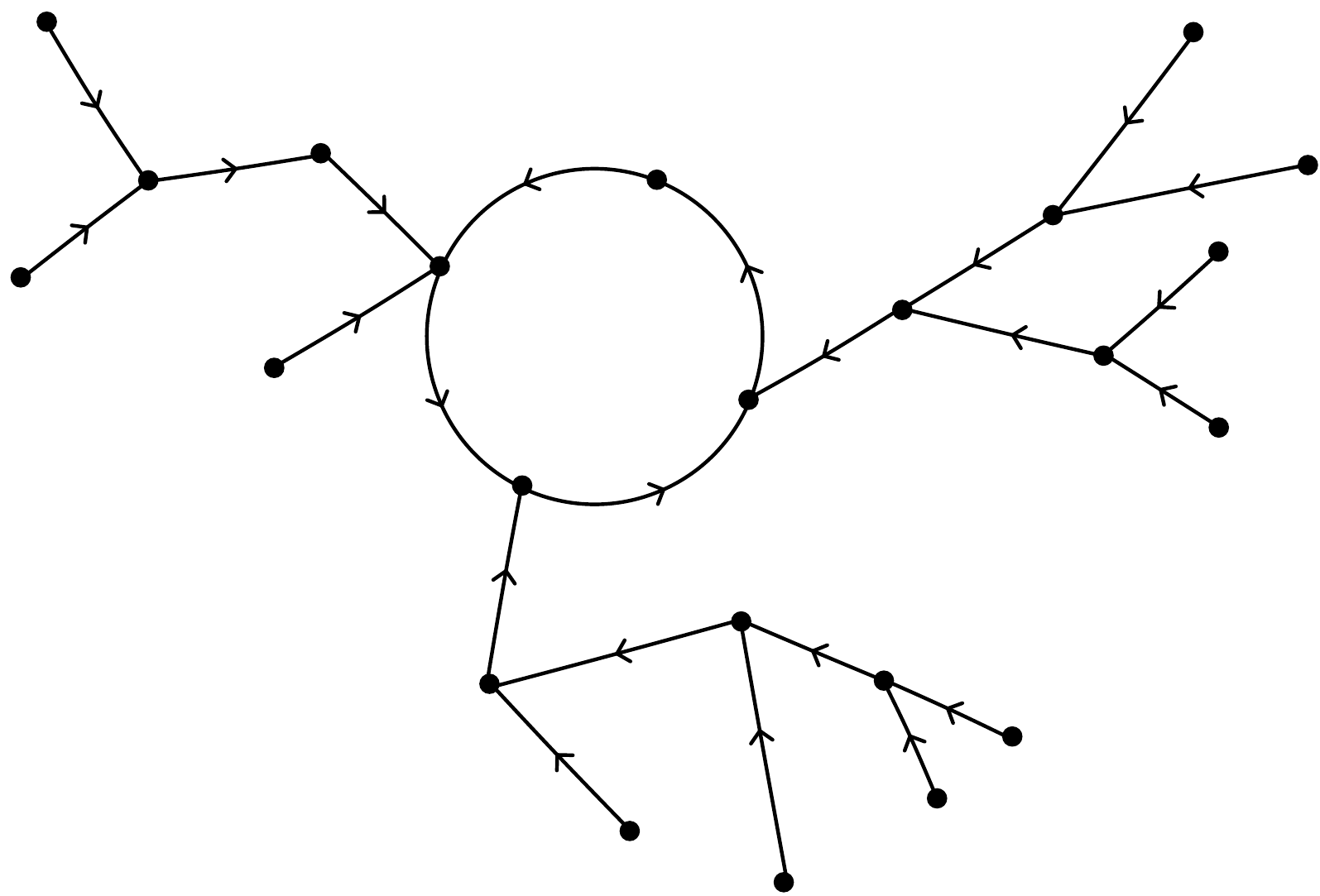}
         \caption{counter clockwise tree-loop}
         \label{fig:counter}
     \end{subfigure}
     \hfill
     \begin{subfigure}{0.49\textwidth}
         \centering
         \includegraphics[scale = 0.35]{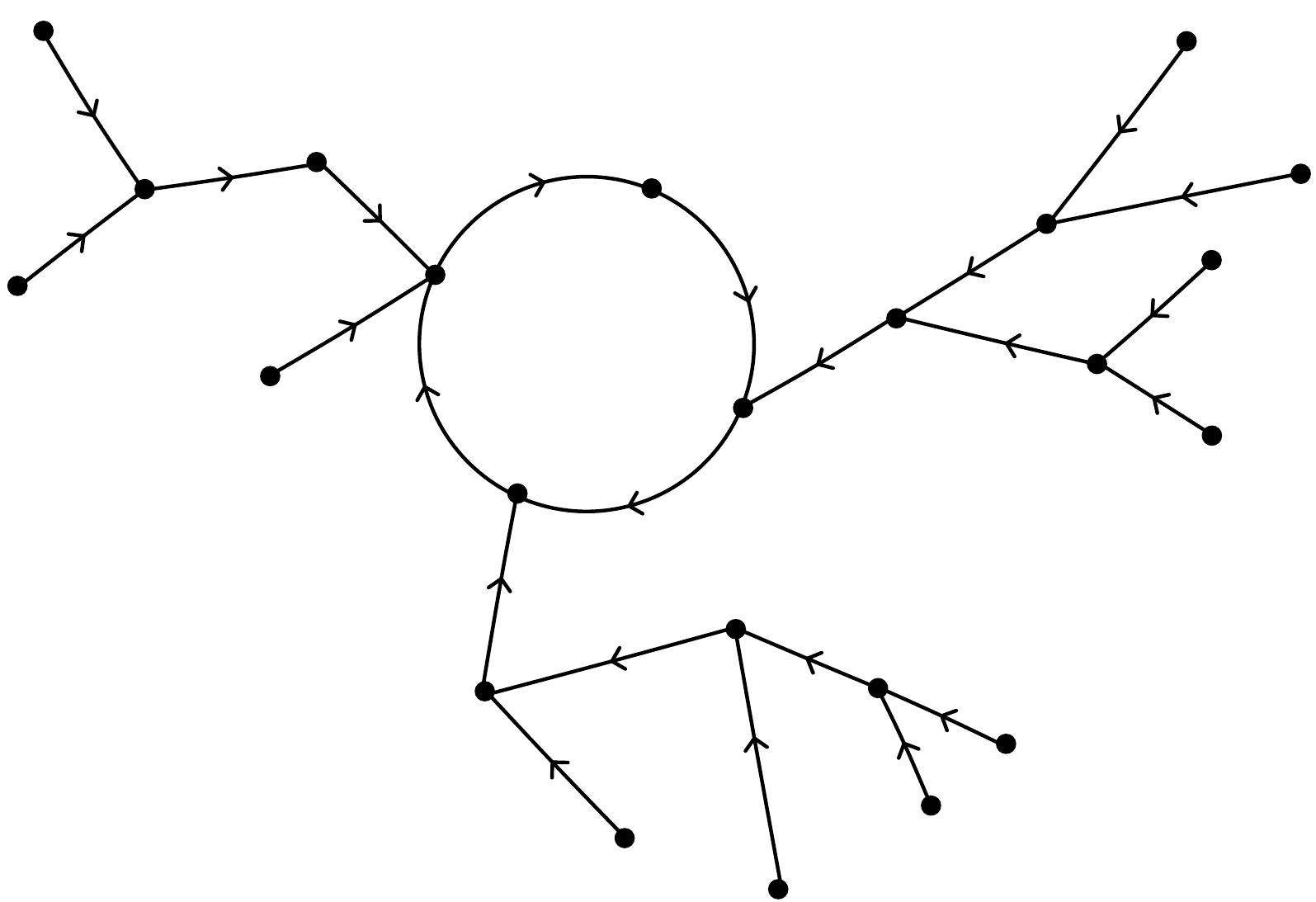}
         \caption{clockwise tree-loop}
         \label{fig:clock}
     \end{subfigure}
        \caption{\small{Two oriented tree-loops which only differ in the orientation on the loop. Remark that in every tree-loop the roots of the trees are located on the loop}.}\label{otl}
        \label{fig:tree}
\end{figure}
\item A \emph{spanning tree-loop} in graph $G$ is a  spanning subgraph of $G$ which is a tree-loop. An  \emph{oriented spanning tree-loop} in graph $G$ is a spanning tree-loop of $G$ which is oriented such that trees are rooted in the loop and the loop is directed in  either clockwise or counter clockwise direction.  We denote the set of all spanning tree-loops in the graph $G$ by $\cal H$. For a specific edge $e\in \cal E(G)$,   $\cal H_e$ is  set of all spanning tree-loops of $G$ such that the edge $e$ is part of a tree. Put $O(\cal H_e)$ as  the set of all oriented tree-loops including the edge $e$ located on a tree. See Example \ref{not}, and more examples in Appendix \ref{exam}.
    \end{enumerate}
\end{definition}

\begin{definition}\label{tlt}
    \begin{enumerate}[label=(\alph*)]
 \item A \emph{tree-loop-tree} is a graph consisting of two disconnected simple graphs, one of which is a  tree and the other one is a tree-loop.  We can make it by removing an edge from the tree-parts in a tree-loop; see Fig.~\ref{otct}. 

         \begin{figure}[H]
            \centering
            \includegraphics[scale=0.35]{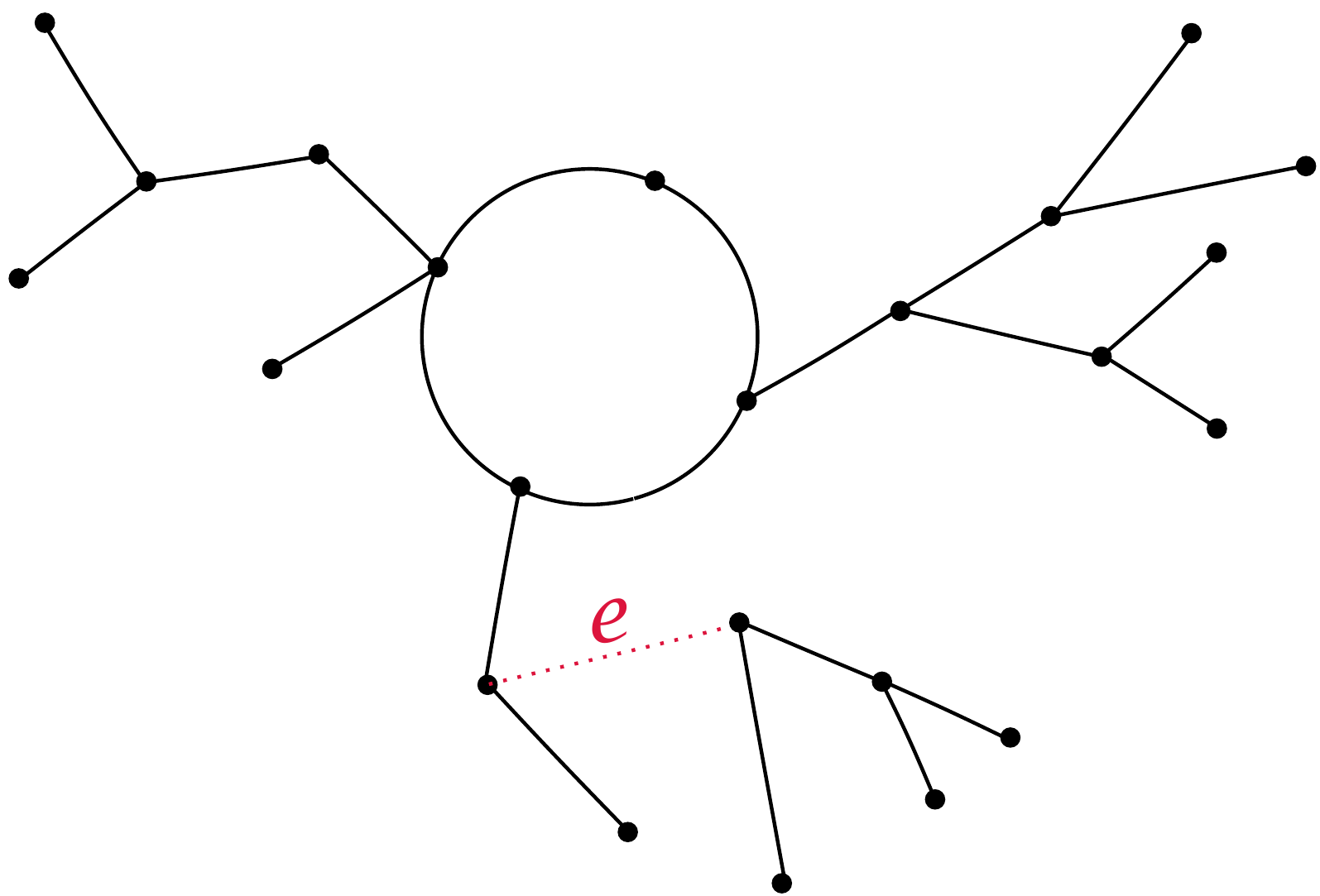}
            \caption{\small{A  tree-loop-tree made by removing the edge $e$ from a tree-loop.}}\label{otct}
        \end{figure}
        
 An \emph{oriented tree-loop-tree} is a tree-loop-tree where every edge has a direction. The directions of the edges are such that the tree-loop part is an oriented tree-loop and the tree part is a rooted tree; see Fig.~\ref{otct+}.  In the present paper every (oriented) tree-loop-tree will be a \emph{spanning} (oriented) tree-loop-tree, defined as follows.
 \begin{figure}[H]
            \centering
            \includegraphics[scale=0.35]{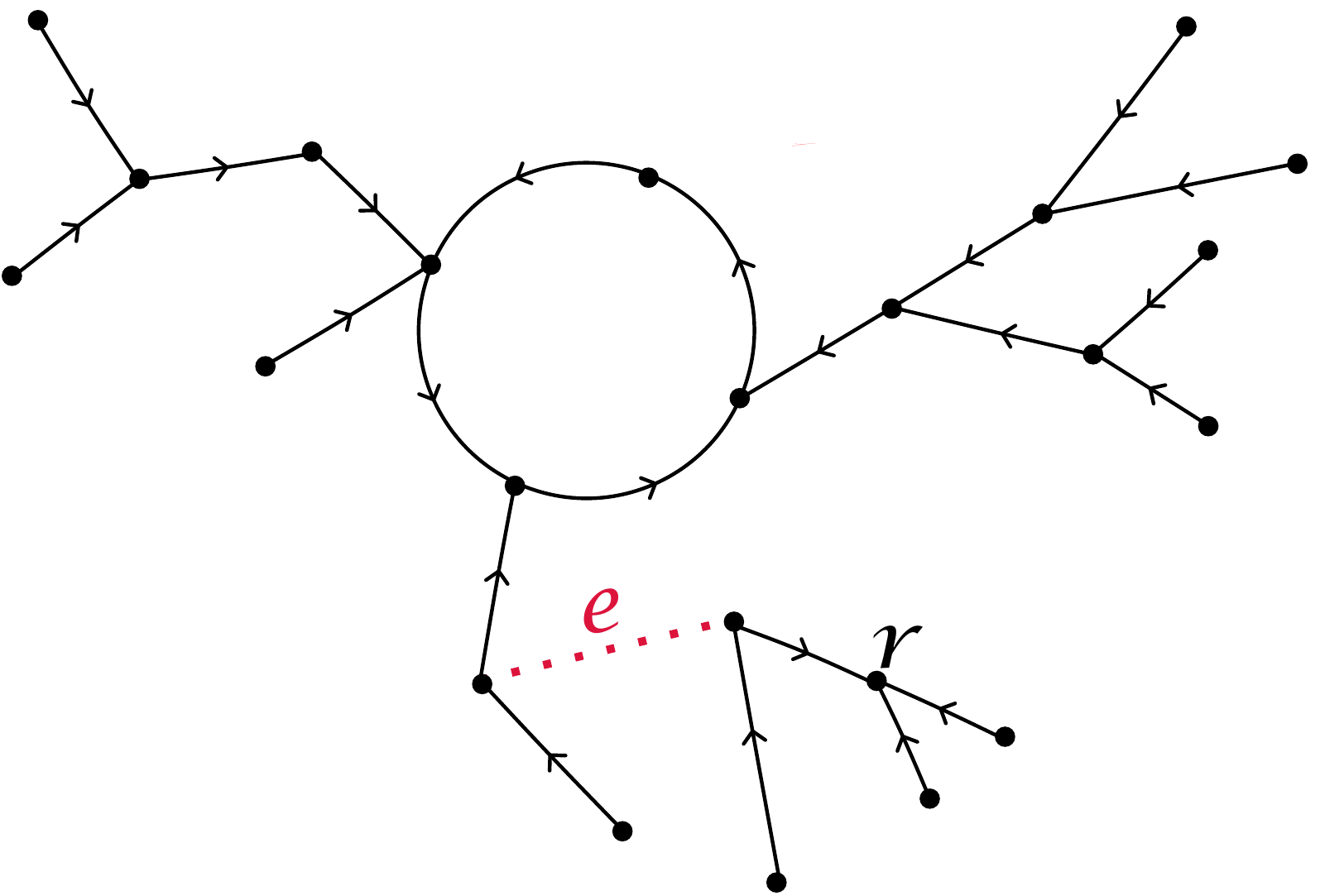}
             \caption{\small{An oriented tree-loop-tree made by removing the edge $e$  from a tree-loop and adding orientations. The orientation in the loop is counter-clockwise and the tree part is rooted in $r$.}}\label{otct+}
        \end{figure} 
 \item A \emph{spanning tree-loop-tree} in the graph $G$ is a spanning subgraph of $G$, which is a tree-loop-tree. An \emph{oriented spanning tree-loop-tree} in graph $G$ is a spanning tree-loop-tree in the graph $G$ such that the tree part is a rooted tree and the tree-loop part is an  oriented tree-loop.  The set of all spanning tree-loop-trees in the graph  $G$, which are  made by removing the edge $e$ from the tree-loops in the set  $\cal H_e$ is denoted by $\cal H^{(e)}$.  $O(\cal H^{(e)})$ is the set of all oriented tree-loop-trees from  $\cal H^{(e)}$. Notice that every tree-loop-tree from $\cal H^{(e)}$ has different possible orientations. If $H\in \cal H_e$ is a spanning tree-loop, then $O( H^{(e)})$ denotes the set of all oriented spanning tree-loop-trees which are made by removing the edge $e$ of a specific tree-loop (denoted by $H$) and then giving it all possible orientations  to the resulting tree-loop-tree; see Example \ref{not}. (There are two different orientations possible in the loop and different orientations are possible in the separate tree depending on the location of the root in that tree.) In summary, $H^{(e)} \in \cal H^{(e)}$ and $O(H^{(e)})\subseteq O(\cal H^{(e)})$. We define $\sigma_H(\overline{e}) = -1$ if $\overline{e}$ is oriented towards the loop of $H$ and $\sigma_H(\overline{e}) = +1$ otherwise.
        \end{enumerate}
\end{definition}

We define the weight  $w(\overline{e})$ for an oriented edge as $w(\overline{e})=k(\overline{e})$ . The weight of an oriented subgraph $\overline{H}$ with $\cal E(\overline{H}) \neq \emptyset$ is 
\begin{equation}\label{ww}
w(\overline{H})= \prod_{\overline{e} \in \cal E(\overline{H})}w(\overline{e}).
\end{equation}
If there is no edge in $\overline{H}$ but it still has vertices, then we define $w(\overline{H})=1$. The weight of the empty set is $w(\emptyset) = 0$. We define $W=\sum_{x \in \cal V} \sum_{T\in \cal T} w(T_x)$ as the sum over the weights of all rooted spanning  trees in the given graph. 
\begin{example}\label{not}
We illustrate  Definitions \ref{tl} and \ref {tlt}. Consider the graph in Fig.~\ref{exgraph}. 
\end{example}
\begin{figure}[H]
	\includegraphics[scale=0.5]{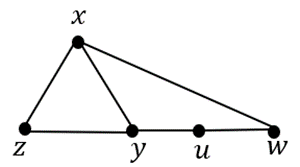}
	\caption{\small{Model graph for Example~\ref{not}}}
 \label{exgraph}
\end{figure}
Put
\begin{equation}e_1:=\{x,w\},\, e_2:=\{w,u\},\,e_3:=\{u,y\}\, e_4:=\{y,z\},\, e_5:=\{z,x\},\, e_6:=\{x,y\}.\end{equation}
The graph  has six possible spanning tree-loops, shown in Fig.~\ref{extl}. 
\begin{figure}[H]
	\includegraphics[scale=0.5]{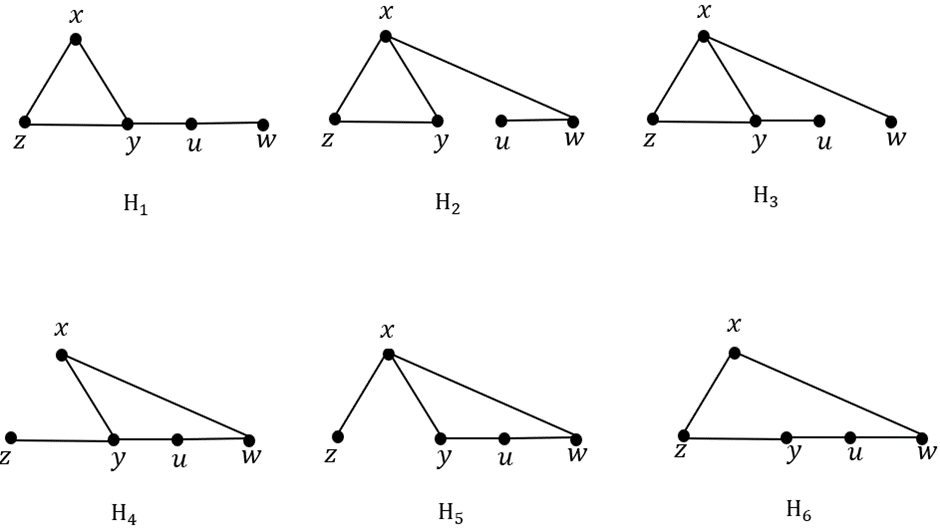}
	\caption{\small{All tree-loops of the graph in Fig.~\ref{exgraph}}} \label{extl}
\end{figure}
 For every edge, a set of spanning tree-loops  exists such that all components include that edge on the tree,
 \begin{align*}
    \cal H_{e_1}=\{H_2,H_3\},\, \,\cal H_{e_2}=\{H_1,H_2\},\, \,\cal H_{e_3}=\{H_1,H_3\},\, \, \cal H_{e_4}=\{H_4\},\, \, \cal H_{e_5}=\{H_5\},\, \,\cal H_{e_6}=\emptyset.
 \end{align*}
In Fig.~\ref{oextl} we show the set $O(H_1^{(e_3)})$ created by removing the edge $e_3$ from the tree-loop $H_1$.
\begin{figure}[H]
	\includegraphics[scale=0.45]{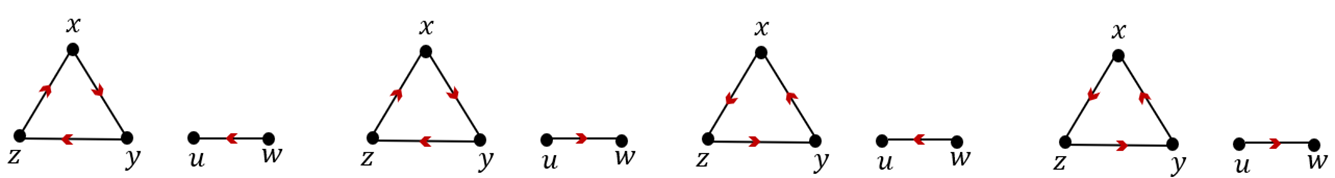}
	\caption{\small{The components of the set $O(H_1^{(e_3)})$.}} \label{oextl}
\end{figure}

\section{Main results}\label{mair}
To state the main results, 
we need\\ 
\textbf{Condition 2}: For every edge $e$ in the graph $G$ and for every element $\overline{H}$ of the set $O(\cal H^{(e)})$ (see Definition.~\ref{tlt}(b))
 \begin{equation}\label{c11}
\phi( \overline{H}) \leq \phi^* .
 \end{equation}
 That condition will be interpreted as a `no-delay' condition; see Example \ref{nod}.\\

The main result is a Nernst heat theorem for the given context.  Recall Proposition \ref{pr1}.
\begin{theorem}\label{t1}
Recall definition \eqref{qpo}. Under Conditions 1a and 2,
\begin{equation}\label{mr1}
\lim_{\beta\uparrow\infty} D_\beta(\alpha) = 0
\end{equation}
\end{theorem}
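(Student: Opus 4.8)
The plan is to reduce the statement to the two ingredients already isolated in the discussion following \eqref{cla}: the static convergence $\nabla_\alpha \rho^s_\lambda \to 0$ and the uniform-in-$\beta$ boundedness of the quasipotential. I start from the $\alpha$-component of the geometric expression, namely
\begin{equation*}
D_\beta(\alpha) = -\sum_x V_\lambda(x)\,\nabla_\alpha \rho^s_\lambda(x),
\end{equation*}
and read it as a finite sum (since $\cal V$ is finite) of products of a factor $V_\lambda(x)$ that I will show is uniformly bounded and a factor $\nabla_\alpha \rho^s_\lambda(x)$ that I will show tends to zero. The boundedness of the first factor is the dynamical input (Condition 2); the vanishing of the second is the static input (Condition 1a).

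For the static factor I would invoke Condition 1a through Proposition \ref{condition1.2}: the uniform-in-$\alpha$ convergence $\rho^s_\lambda \to \rho^0$ permits the exchange $\lim_\beta \nabla_\alpha = \nabla_\alpha \lim_\beta$, so that $\nabla_\alpha \rho^s_\lambda(x) \to 0$ as $\beta\uparrow\infty$ for every $x \in \cal V$. For the dynamical factor I would establish the uniform bound \eqref{dynaco}, that $|V_\lambda(x)| < c_v$ for all $x$ and all $\alpha \in \cal A$. This is where the graph elements of Section \ref{graph} are needed: solving the Poisson equation \eqref{poi} for $V_\lambda$ by the matrix-forest theorem expresses $V_\lambda(x)$ as a ratio whose denominator is the spanning-tree weight sum $W$ governing $\rho^s$ (of exponential order $e^{\beta \phi^*_\alpha}$ by \eqref{lowkir}) and whose numerator is a signed sum of weights $w(\overline{H})$ over oriented spanning tree-loop-trees $\overline{H} \in O(\cal H^{(e)})$, the signs being the $\sigma_H(\overline{e})$ of Definition \ref{tlt}. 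Each numerator term has low-temperature exponential order $e^{\beta \phi_\alpha(\overline{H})}$, and Condition 2 is precisely the inequality $\phi_\alpha(\overline{H}) \leq \phi^*_\alpha$, so that no numerator term outgrows the denominator. This is the content of Proposition \ref{bdd}, which delivers the $\beta$-uniform bound on $V_\lambda$.

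Combining the two, each summand $V_\lambda(x)\,\nabla_\alpha \rho^s_\lambda(x)$ is a bounded quantity times a quantity vanishing as $\beta\uparrow\infty$, and since there are only $|\cal V|$ of them, $D_\beta(\alpha) \to 0$. I expect the main obstacle to be the boundedness step rather than the limit exchange: one must set up the graphical representation of the solution of \eqref{poi} with care — tracking the signs $\sigma_H$ and the bookkeeping of which edge $e$ is removed to pass from $\cal H_e$ to $\cal H^{(e)}$ — and then verify that, under Condition 2, the exponential estimates on numerator and denominator combine to a genuinely bounded ratio, which requires controlling the subexponential prefactors $B_\lambda(x)$ uniformly in $\alpha \in \cal A$ and not merely matching exponential rates.
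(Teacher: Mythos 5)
Your proposal is correct and takes essentially the same route as the paper's own proof: the finite-sum reading of $D_\beta(\alpha) = -\sum_x V_\lambda(x)\,\nabla_\alpha \rho^s_\lambda(x)$, the static ingredient from Condition 1a via Proposition~\ref{condition1.2}, and the dynamical ingredient from Condition 2 via the graphical representation (Proposition~\ref{dec}) feeding the uniform bound of Proposition~\ref{bdd} --- the only cosmetic difference being that the paper bounds the edge differences $V(\overline{e})$ (Proposition~\ref{boundedinedge}) and then anchors a bound on $V_\lambda(x)$ itself through the centrality condition $\langle V_\lambda\rangle^s_\lambda = 0$ and connectedness, rather than bounding $V_\lambda(x)$ directly from a vertex-level forest formula. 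Your closing concern about controlling subexponential prefactors when $\phi(\overline{H}) = \phi^*$ is well placed, since the paper's Proposition~\ref{boundedinedge} argues only at the level of exponential rates.
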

In Appendix \ref{exam}, it is illustrated how Condition 2 is natural.\\
We next consider the heat capacity \eqref{hca} in the limit $\beta\uparrow \infty$.
\begin{theorem}\label{t2}
Recall definition \eqref{hca}.
Under Conditions 1b and 2,
\begin{equation}\label{mr2}
\lim_{\beta\uparrow\infty} C_\alpha(\beta) = 0. 
\end{equation}
\end{theorem}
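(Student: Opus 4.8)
The plan is to reduce the statement to two independent facts and combine them over the finite state space $\cal V$. Starting from \eqref{hca}, I would write the heat capacity in the form already recorded as the first component of \eqref{cla},
\[
C_\alpha(\beta) = \big\langle V_\lambda\,\tfrac{\partial}{\partial \beta^{-1}} \log \rho^s_\lambda \big\rangle^s_\lambda
= \sum_{x\in\cal V} V_\lambda(x)\,\frac{\partial}{\partial \beta^{-1}}\rho^s_\lambda(x).
\]
This exhibits $C_\alpha(\beta)$ as a finite sum of products of the quasipotential $V_\lambda(x)$ with the temperature-derivative of the stationary weight, so it suffices to control the two factors separately.

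The second factor is handled by Condition 1b: Proposition \ref{condition2} gives $\lim_\beta \frac{\partial}{\partial\beta^{-1}}\rho^s_\lambda(x)=0$ for every $x\in\cal V$. Granting in addition the uniform bound \eqref{dynaco} on the quasipotential, the triangle inequality yields $|C_\alpha(\beta)| \leq c_v \sum_{x}\big|\frac{\partial}{\partial\beta^{-1}}\rho^s_\lambda(x)\big| \to 0$, using that $\cal V$ is finite. This combination is precisely the content of the summarizing proposition preceding Section \ref{graph}, so the only genuinely new ingredient needed for Theorem \ref{t2} is the derivation of the uniform bound \eqref{dynaco} from Condition 2.

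That derivation is the heart of the matter and the main obstacle; it is carried out in Section \ref{proofsbp} as Proposition \ref{bdd}, and I would organise it through the graphical machinery of Section \ref{graph}. Writing $V_\lambda$ via the Poisson equation \eqref{poi} with source $\langle{\cal P}_\lambda\rangle^s_\lambda-{\cal P}_\lambda$, the matrix-forest theorem represents $V_\lambda(x)$ as a ratio: the denominator is $W=\sum_{y}\sum_{T\in\cal T}w(T_y)$, while the numerator is a signed sum over oriented spanning tree-loop-trees $\overline H\in O(\cal H^{(e)})$, each weighted by $w(\overline H)$, by the sign $\sigma_H(\overline e)$, and by the $\beta$-independent heat $q_\alpha$ circulating around its loop. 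The crucial step is to match exponential orders: with $k_\lambda(x,y)\asymp e^{\beta\phi_\alpha(x,y)}$ up to subexponential prefactors, the denominator scales as $e^{\beta\phi^*_\alpha}$ by the very definition \eqref{phi}, whereas each numerator term scales as $e^{\beta\phi(\overline H)}$ times a $\beta$-independent factor (the tree-loop-trees carry the same number of edges as the spanning trees, so the comparison is dimensionally consistent). Condition 2, namely $\phi(\overline H)\leq\phi^*$, then dominates every numerator term by a fixed multiple of the leading denominator order, and since the signs can only produce cancellations the triangle inequality gives $|V_\lambda(x)|<c_v$ uniformly in $\beta$ and in $\alpha\in\cal A$.

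The delicate point I would watch is that equal leading exponents $e^{\beta\phi^*}$ in numerator and denominator really produce an $O(1)$ ratio, and not a subexponentially growing one. This needs the subexponential prefactor of the dominant rooted spanning tree in $W$ to be bounded below, uniformly in $\alpha$, so that it controls the prefactors of the numerator terms; the uniform subexponential estimate \eqref{lowkir} is what supplies this. Conceptually, this is the \emph{no-delay} reading of Condition 2: it forbids any oriented tree-loop-tree from being exponentially more favorable than the best rooted spanning tree, which is exactly the localization/delay mechanism that would otherwise let $V_\lambda$, and hence the heat capacity, diverge at low temperature.
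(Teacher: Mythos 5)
Your reduction is exactly the paper's: write $C_\alpha(\beta)=\sum_x V_\lambda(x)\,\frac{\partial}{\partial\beta^{-1}}\rho^s_\lambda(x)$, make the second factor vanish by Condition 1b (Proposition \ref{condition2}), bound the first factor uniformly via Condition 2, and conclude by finiteness of $\mathcal{V}$; this is precisely how the paper proves Theorem \ref{t2} from Propositions \ref{condition2} and \ref{bdd}. The outline is correct and the overall route is the same.

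One step in your sketch of the boundedness argument is blurred in a way that matters. The signed sum over oriented tree-loop-trees $\overline{H}\in O(\mathcal{H}^{(e)})$ with signs $\sigma_H(\overline{e})$ is the paper's representation (Proposition \ref{dec}) of the \emph{difference} $V(\overline{e})=V_\lambda(x)-V_\lambda(y)$ across an edge, not of $V_\lambda(x)$ itself (and it also carries a tree part $V_{\text{tree}}(\overline{e})$, which is the trivially bounded piece). Bounding that ratio under Condition 2 therefore only bounds edge differences; your final "triangle inequality gives $|V_\lambda(x)|<c_v$" skips the anchoring step that turns bounded differences into bounded values. The paper supplies it in Proposition \ref{bdd}: the centrality $\langle V_\lambda\rangle^s_\lambda=0$ forces the existence of $x_0,x_1$ with $V_\lambda(x_0)\leq 0\leq V_\lambda(x_1)$, and telescoping the bounded edge differences along paths from $x_0$ and $x_1$ then sandwiches $V_\lambda(x)$ for every $x$. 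This is routine on a finite connected graph, but it is a genuinely needed step, not a consequence of the ratio estimate alone. Finally, the prefactor subtlety you flag (equal leading exponents $\phi(\overline{H})=\phi^*$ leaving a ratio of subexponential factors) is a real point which the paper itself treats only "in logarithmic sense"; however, your proposed cure via \eqref{lowkir} does not quite do the job, since \eqref{lowkir} controls the stationary distribution rather than the ratio of tree-loop-tree weights to $W$ --- what is needed is uniform (in $\alpha$) comparability of the subexponential prefactors of the rates entering $w(\overline{H})$ and the dominant rooted spanning tree in $W$.
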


We end the section by stating the key to the main results, which is a graphical representation of the difference-quasipotential over any edge $\overline{e}=(x,y)$.  
\begin{proposition}\label{dec}
The difference-quasipotential $V(\overline{e}) = V_\lambda(x) - V_\lambda(y)$ (from \eqref{prv}) can be decomposed as
$V(\overline{e}) = V_{\text{tree}}(\overline{e}) +V_{\text{loop}}(\overline{e})$, with
\begin{equation}\label{treevedge}
V_{\text {tree}}(\overline{e})=\dfrac{1}{W}\; \sum_{T \in \cal T} \,q_{T}(x\rightarrow y ) \sum_{u\in \cal V} w(T_u)
\end{equation}
where the first sum is over all spanning trees, $q_{T}(x\rightarrow y ) $ is the sum of $q(z,z')$ (defined in \eqref{q(z,z')}) over the path from $x$ to $y$ located on the tree $T$,
and
\begin{align}\label{loopvedge}
V_{\text{loop}}(\overline{e})=\frac{1}{W}\,\sum_{H\in \cal H_{e}}\,\sigma_H(\overline{e})\, \sum_{\overline{H}\in O(H^{(e)})} q(\overline{\ell})\,w(\overline{H})
\end{align}
 in which the oriented loop $\overline{\ell}$ is  the one in $\overline{H}$ and   $\sigma_H(\overline{e})=\pm 1$ depending on whether $\overline{e}$ looks away or towards the loop (see Definition \ref{tlt}(b)).  
\end{proposition}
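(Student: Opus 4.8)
The plan is to reduce the statement to a single combinatorial identity for the deviation matrix and then feed in the special structure of ${\cal P}_\lambda$. First I would use the Poisson equation \eqref{poi}: since $L$ has a one-dimensional kernel on an irreducible chain, $V_\lambda$ is fixed up to an additive constant by its edge-differences $V(\overline e)$, and one may write $V_\lambda = -\mathsf{D}\,f$ with source $f = \langle {\cal P}_\lambda\rangle^s_\lambda - {\cal P}_\lambda$ and $\mathsf{D} = \int_0^\infty (S(t)-\Pi)\,\id t$ the deviation matrix ($\Pi g = \langle g\rangle^s_\lambda\,\funit$). Because $\mathsf{D}$ has vanishing row sums, the constant part $\langle {\cal P}_\lambda\rangle^s_\lambda$ of $f$ drops out of any row-difference, leaving
\begin{equation*}
V(\overline e) = V_\lambda(x)-V_\lambda(y) = \sum_{z}\big[\mathsf{D}(x,z)-\mathsf{D}(y,z)\big]\,{\cal P}_\lambda(z) = \sum_{z,w}\big[\mathsf{D}(x,z)-\mathsf{D}(y,z)\big]\,k(z,w)\,q_\alpha(z,w).
\end{equation*}
Everything is thereby reduced to a forest expansion of the row-difference $\mathsf{D}(x,\cdot)-\mathsf{D}(y,\cdot)$, weighted against the directed edges $(z,w)$ carrying $q_\alpha(z,w)$.

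Second, I would invoke the matrix-forest theorem (to which the Appendix is devoted) to expand $W\,\mathsf{D}(x,z)$ as a signed sum over two-component spanning forests of $G$, one component rooted at $z$ and the other anchored at $x$; the difference $\mathsf{D}(x,z)-\mathsf{D}(y,z)$ then localizes on forests that ``see'' the pair $x,y$. The decisive structural point is that multiplying such an acyclic spanning structure by the single extra directed edge $k(z,w)$ appends exactly one oriented edge. Adding one edge to a spanning forest does precisely one of two things: it reconnects the two trees into a single spanning tree (the acyclic case), or it closes a unique cycle inside a component (the cyclic case). The cyclic case produces an oriented spanning tree-loop, i.e. after reinstating the distinguished edge $e$, an element of $O(H^{(e)})$ with weight $w(\overline H)$; the acyclic case produces a rooted spanning tree carrying a transported $q_\alpha$-value along a tree path.

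Third, I would sort the resulting terms by this dichotomy and invoke the antisymmetry of $q_\alpha$. In the cyclic terms the transported $q_\alpha$ around the closed cycle telescopes into the loop circulation $q(\overline\ell)$, and summing over all rootings of the trees hanging off the loop reconstitutes $\sum_{\overline H\in O(H^{(e)})} q(\overline\ell)\,w(\overline H)$, the orientation of the appended edge relative to the loop being recorded by the sign $\sigma_H(\overline e)=\pm1$; this is $V_{\text{loop}}(\overline e)$. In the acyclic terms the row-difference at $x$ versus $y$ forces the transported $q_\alpha$ to run along the unique tree path from $x$ to $y$, collapsing by telescoping to $q_T(x\rightarrow y)$, while the freedom in placing the root of the spanning tree yields the factor $\sum_{u\in\cal V} w(T_u)$; this is $V_{\text{tree}}(\overline e)$. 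Dividing by $W$ throughout matches the stated normalization, and since both the claimed expression and $V_\lambda$ carry the same edge-differences, consistency with \eqref{poi} can be double-checked by verifying that the right-hand side is curl-free and solves $LV_\lambda=f$.

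The main obstacle is the combinatorial bookkeeping in the middle step: proving that the signed forest expansion of $\mathsf{D}(x,z)-\mathsf{D}(y,z)$, once multiplied by $k(z,w)$, regroups \emph{exactly} into spanning tree-loops and tree-loop-trees with no residual terms, and --- the genuinely delicate point --- that all signs inherited from the matrix-forest expansion collapse into the single geometric sign $\sigma_H(\overline e)$ determined by whether the removed edge points toward or away from the loop. Establishing this cancellation of spurious orientations, together with the correct handling of the $x,y$-dependence so that only the tree path $x\rightarrow y$ survives in the acyclic sector, is where the real work lies; the remainder is telescoping via the antisymmetry of $q_\alpha$ and collecting the normalization $W$.
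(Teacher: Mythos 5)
Your outline is viable, but note that it is not the route of the paper's primary proof of Proposition \ref{dec}; it is, nearly step for step, the paper's own \emph{alternative} proof in Appendix \ref{mf}. There, the role of your deviation matrix $\mathsf{D}$ is played by a resolvent limit expanded through the matrix-forest theorem, giving $V(x) = \frac{1}{W}\sum_y w({\cal F}^{x\rightarrow y})f(y)$ (the Corollary around \eqref{potentialforest}); your observation that multiplying a two-component forest by one extra directed edge either reconnects it into a rooted spanning tree or closes a cycle into an oriented tree-loop-tree is exactly Lemma \ref{twocases}; the telescoping of $q$ into $q_T(x\rightarrow y)$ and $q(\overline{\ell})$ is Lemmas \ref{forestcase} and \ref{lemma2}; and the fact that the constant $\langle {\cal P}\rangle$-contribution drops out of edge differences is Lemma \ref{diffforest}. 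By contrast, the paper's main-text proof (Section \ref{proofsbp}) is a guess-and-verify argument: it takes \eqref{treevedge}--\eqref{loopvedge} as given, proves they are curl-free (sum to zero around every oriented loop), and then checks via Lemma \ref{aveg} and Proposition \ref{verification} that they satisfy $\sum_{\overline{e}=(x,\cdot)}k(\overline{e})\big(-V(\overline{e})+q(\overline{e})\big)=\langle{\cal P}\rangle$, which is the Poisson equation \eqref{poi}; uniqueness of its solution up to an additive constant then forces the identification. Your route buys a constructive derivation that explains where the tree and loop terms come from; the verification route is much shorter but presupposes the answer.

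Two caveats on your write-up. First, the step you yourself flag as ``where the real work lies'' --- the regrouping into tree-loop-trees and the emergence of $\sigma_H(\overline{e})$ --- is genuinely the bulk of the proof and is left undone; completing it requires precisely the chain of lemmas above. Reassuringly, one worry you raise is spurious: for a Markov generator the Chebotarev--Agaev forest weights are all positive (products of rates), so there are no signs ``inherited from the matrix-forest expansion'' to cancel; the only signs come from the antisymmetry of $q$ and from the orientation bookkeeping, which is where $\sigma_H(\overline{e})$ is born. Second, a small conflation: the cyclic case produces an oriented tree-loop-\emph{tree}, i.e.\ an element of $O(H^{(e)})$ obtained by \emph{removing} $e$ from a spanning tree-loop $H\in{\cal H}_e$ (Definition \ref{tlt}(b)), not a spanning tree-loop with $e$ reinstated.
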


\underline{Remark}: There are other representations of $V_{\text{tree}}(\overline{e})$ such as from observing that
\begin{equation}
V_{\text{tree}}(\overline{e})=q(\overline{e})+\frac{1}{W}\,\sum_{T\in \cal T} \,\big(q_{T}(x\rightarrow y )+q(y,x)\big) \sum_{u\in \cal V} w(T_u)
\end{equation}
where the second term makes a loop. That is useful for interpretation as it reduces to $V(\overline{e}) = V_{\text{tree}}(\overline{e})=q(\overline{e})$ in the case of equilibrium dynamics, i.e.~such that the sum of heat 
$q(z,z')$ along edges of any loop vanishes (the ``global'' detailed balance). More generally, only heat-carrying loops contribute.

\section{Proofs}\label{proofs}

\subsection{Quasistatic analysis}\label{qa}
Note that always $\sum_x L^\dagger\rho (x) = \sum_x L1(x)\,\rho (x)= 0$ because $L1(x)=0$.  Hence, when $L^\dagger\rho = \nu$, then $\sum_x \nu(x) = 0$ must be true necessarily.  It means that the (nonexisting) inverse $1/L^\dagger$ can only be defined, if at all, on those $\nu$ such that $\sum_x\nu(x) =0$.  It implies
\begin{equation}
\sum_x \int_0^\infty \id t\,e^{tL}[f(x)-\langle f\rangle^s]\,\nu(x) 
 = \sum_x \int_0^\infty \id t\,e^{tL}f(x)\,\nu(x). 
\end{equation}
Therefore, we can actually define $(1/L^\dagger)\,\nu$  via the requirement that for all functions $f$,
\begin{equation}
\sum_x f(x) \,(\frac 1{L^\dagger}\nu) (x) = 
-\sum_x \int_0^\infty \id t\,e^{tL}[f(x)-\langle f\rangle^s]\,\nu(x) 
\end{equation}
or
\begin{equation}
\frac 1{L^\dagger}\nu(x) = -\sum_y \int_0^\infty \id t\,\big[\text{Prob}[X_t=x|X_0=y]-\rho^s(x)\big]\,\nu(y)
\end{equation}
which is indeed well-defined (and we do not bother introducing new notation for that restricted inverse).\\

We want to solve the time-dependent master equation perturbatively,
\begin{equation}
\rho^\ve_t = \rho^s_{\lambda(\ve t)} + \ve\,F_t^{(\ve)} + O(\ve^2)
\end{equation}
as everything is smooth around $\ve=0$.  Since time gets rescaled by $\ve$ we know automatically that $\frac{\partial}{\partial t}\,F_t^{(\ve)} = O(\ve)$.  On the other hand, 
\begin{equation}
\frac{\partial}{\partial t}\,F_t^{(\ve)} = \frac{\partial}{\partial t}\,[\rho^\ve_t - \rho^s_{\lambda(\ve t)}] = L^\dagger_{\lambda(\ve t)}F_t^{(\ve)} - \ve\,\dot{\lambda}(\ve t)\cdot \nabla_\lambda \rho^s_{\lambda(\ve t)}.
\end{equation}
Hence, 
\begin{equation}
F_t^{(\ve)} = \dot{\lambda}(\ve t)\cdot\frac 1{ L^\dagger_{\lambda}} \nabla_\lambda\rho^s_\lambda. 
\end{equation}
As a consequence,
\begin{equation}\label{mm}
\int_0^{ 1/\ve}(\rho^\ve_{t} - \rho^s_{\lambda(\ve t)})\id t = \int_\Gamma\id \lambda\cdot \frac 1{L_\lambda^\dagger}\nabla_\lambda\rho^s + O(\ve)
\end{equation}
where the integral on the right-hand side is purely geometrical, invariant under a reparametrization of time $t\mapsto \gamma(t)$ for a smooth function $\gamma$ with $\gamma'(t)>0$.\\

Suppose we have a function $f_\lambda$ on $\cal V$. We can always write $f_\lambda = -L_\lambda V_\lambda + \langle f_\lambda\rangle^s_\lambda$.  Define the excess
$f_\lambda^\text{exc} := f_\lambda - \langle f_\lambda\rangle^s_\lambda$, with time-integral
\begin{equation}
I_\ve := \int_0^{1/\ve} f^\text{exc}_{\lambda(\ve t)}(x_t)\, \id t .
\end{equation}
In the case of $f_\lambda = {\cal P}_\lambda$, from \eqref{qai} we have $Q_\ve = \langle I_\ve\rangle^\ve$.

    \begin{proof}[Proof of Proposition \ref{pr1}]
    From \eqref{mm},
       \begin{eqnarray}
\lim_{\ve\downarrow 0}\langle I_\ve\rangle^\ve &=& \lim_{\ve\downarrow 0}\int_0^{1/\ve} \id t\,\sum_x \big(f_{\lambda(\ve t)}(x)\rho^\ve_t(x) - f_{\lambda(\ve t)}(x)\rho^s_{\lambda(\ve t)}(x)\big)\nonumber\\
&=& \int\id \lambda\cdot\sum_x f_\lambda(x)\, \frac 1{L_\lambda^\dagger}\nabla_\lambda \rho^s_\lambda(x)\nonumber\\
&=& -\int\id \lambda\cdot\sum_x V_{\lambda}(x)\, \nabla_\lambda \rho^s_\lambda(x)\label{rc}\\
&=& -\int\id \lambda\cdot \left[\nabla_\lambda\sum_x V_{\lambda}(x)\, \rho^s_\lambda(x) - \sum_x\rho^s_\lambda(x) \nabla_\lambda V_{\lambda}(x) \right]
\end{eqnarray}
and
\begin{equation}
\lim_{\ve\downarrow 0} \Big\langle \int_0^{1/\ve} f^\text{exc}_{\lambda(\ve t)}(x_t)\, \id t\Big\rangle^\ve   =
\langle V_a\rangle_a -\langle V_b\rangle_b + \int \id \lambda\cdot\langle \nabla_\lambda V_\lambda\rangle_\lambda
\end{equation}
when $a, b$ are respectively the initial and final points in the protocol.  But, $\langle V_b\rangle_b =\langle V_a\rangle_a =0$ so that we end up with
\begin{equation}
\lim_{\ve\downarrow 0} \Big\langle \int_0^{1/\ve} f^\text{exc}_{\lambda(\ve t)}(x_t)\, \id t\Big\rangle^\ve   =
\int \id \lambda\cdot\langle \nabla_\lambda V_\lambda\rangle_\lambda
\end{equation}
which indeed means that the left-hand side is geometrical and in the case $f_\lambda = {\cal P}_\lambda$, that identifies the thermal-response coefficient
\begin{equation}
    D_\lambda := \big\langle\nabla_\lambda V_\lambda\big\rangle_\lambda
\end{equation}
as wished.
    \end{proof}

The equality \eqref{rc} shows that the excess heat under a quasistatic protocol equals
\begin{equation}
\lim_{\ve\downarrow 0} \Big\langle \int_0^{1/\ve} f^\text{exc}_{\lambda(\ve t)}(x_t)\, \id t\Big\rangle^\ve  = -\int\id \lambda\cdot\sum_x  V_\lambda(x)\, \nabla_\lambda \rho^s_\lambda(x).
\end{equation}
We have thus reached
\begin{equation}\label{mthe}
\lim_{\ve\downarrow 0} Q_\ve =  \int\id \lambda\cdot  \sum_x V_{\lambda}(x)
    \, \nabla_\lambda \rho^s_\lambda(x).
    \end{equation}
As a consequence, to prove Theorem \ref{t1}, the uniform boundedness of the quasipotential $V$ combined with the vanishing of $\nabla_\lambda \rho^s_\lambda(x)$ as $\beta\uparrow \infty$ suffice. We start with the boundedness of $V$ in the next section.

\subsection{Boundedness of the quasipotential}\label{proofsbp}

In this section we give a direct proof of Proposition~\ref{dec} which provides a graphical representation of the difference-quasipotential $V(\overline{e})$. An alternative proof from the matrix-forest theorem is left to Appendix~\ref{mf}.  

First we prove that the difference-quasipotential $V(\bar{e})$ given by~\eqref{treevedge}--\eqref{loopvedge} indeed defines a potential.
\begin{lemma}
    For all  loops $\ell$ in the graph $G$, and all oriented loops $\overline{\ell}$,  $\sum_{\bar{e}\in\overline{\ell}} V(\bar{e}) =0$.
    \end{lemma}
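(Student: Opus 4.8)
The plan is to treat the two pieces of the decomposition in Proposition~\ref{dec} separately, showing that each of $\sum_{\overline e\in\overline\ell}V_{\text{tree}}(\overline e)$ and $\sum_{\overline e\in\overline\ell}V_{\text{loop}}(\overline e)$ vanishes for an arbitrary oriented loop $\overline\ell$ in $G$. Adding the two then gives $\sum_{\overline e\in\overline\ell}V(\overline e)=0$.

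For the tree part I would fix a spanning tree $T\in\mathcal T$ and observe that the prefactor $\tfrac1W\sum_u w(T_u)$ in \eqref{treevedge} is independent of $\overline e$. Since $T$ is acyclic, integrating the antisymmetric function $q$ along tree paths from a fixed root defines a single-valued vertex potential $\psi_T$: any edge traversed twice in the walk from $x$ to the root to $y$ cancels by antisymmetry of $q$, so $q_T(x\to y)=\psi_T(x)-\psi_T(y)$ for every pair. This telescoping difference sums to zero around the closed loop $\overline\ell$ for each $T$ individually, hence $\sum_{\overline e\in\overline\ell}V_{\text{tree}}(\overline e)=0$.

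The substantive step is the loop part. Here I would interchange the order of summation in \eqref{loopvedge} and regroup the triple sum over $(\overline e,H,\overline H)$ according to the oriented spanning tree-loop-tree $\overline H$. Fix such an $\overline H$ and let its two connected components partition the vertices as $\mathcal V=\mathcal V(S)\sqcup\mathcal V(L)$, where $S$ is the separate tree and $L$ the tree-loop component. For this fixed $\overline H$ both $w(\overline H)$ and the heat $q$ around the oriented loop of $\overline H$ are constant. A given $\overline H$ is produced in the sum exactly by those edges $e$ of $\overline\ell$ that bridge $S$ and $L$: re-adding any such cut edge to the underlying graph of $\overline H$ reconstitutes a spanning tree-loop $H\in\mathcal H_e$ with $e$ on a tree part, and $\overline H\in O(H^{(e)})$ because $O(H^{(e)})$ ranges over all orientations of the tree-loop-tree. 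Hence the coefficient multiplying the constant $q(\text{loop of }\overline H)\,w(\overline H)$ is $\sum_e\sigma_H(\overline e)$, summed over the cut edges $e$ of $\overline\ell$ taken with the orientation $\overline e$ inherited from $\overline\ell$. Since $\sigma_H(\overline e)=-1$ when $\overline e$ points from $S$ into $L$ (toward the loop) and $+1$ otherwise, this coefficient equals the number of $L\to S$ crossings minus the number of $S\to L$ crossings made by $\overline\ell$ across the cut. A closed oriented loop crosses any vertex bipartition equally often in each direction, so the coefficient is $0$ and $\sum_{\overline e\in\overline\ell}V_{\text{loop}}(\overline e)=0$.

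The main obstacle is the bookkeeping in this regrouping rather than any analytic estimate: one must verify that each oriented tree-loop-tree $\overline H$ is generated by precisely the cut edges of $\overline\ell$, that re-adding such an edge always yields a legitimate $H\in\mathcal H_e$ with $\overline H\in O(H^{(e)})$ and no second loop, and that $\sigma_H(\overline e)$ depends on $\overline e$ only through the direction in which it crosses the $S$--$L$ cut. Once this correspondence and the sign convention are pinned down, the vanishing reduces to the elementary fact that a closed loop enters and leaves $\mathcal V(S)$ the same number of times.
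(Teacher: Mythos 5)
Your proof is correct. For the tree part it coincides with the paper's argument: the paper uses the additivity $q_T(x\to y)+q_T(y\to z)=q_T(x\to z)$ for each fixed spanning tree $T$, which is exactly your observation that $q_T$ is the difference of a single-valued vertex potential $\psi_T$, so the sum telescopes to zero around any closed loop. For the loop part you arrive at the same cancellation as the paper, but by a cleaner and more complete route. The paper pairs terms via an edge swap between two \emph{consecutive} edges $\overline{e}_1=(z,x)$, $\overline{e}_2=(x,y)$ of $\overline{\ell}$ (removing $e_1$ from a tree-loop $H_{e_1}$ and inserting $e_2$ gives $H_{e_2}$ with $H^{(e_1)}=H^{(e_2)}$ and opposite sign $\sigma$), and then asserts that ``the scenario repeats for every edge.'' As written, that pairing only covers the situation where two cut-crossing edges share a vertex; in general the edges of $\ell$ that bridge the two components of a fixed tree-loop-tree need not be adjacent on $\ell$, so the paper's argument is really a sketch of the required bijection. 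Your regrouping of the triple sum in \eqref{loopvedge} by the oriented tree-loop-tree $\overline{H}$, together with the observation that $\sigma_H(\overline{e})$ records only the direction in which $\overline{e}$ crosses the cut between the separate tree $S$ and the tree-loop component $L$, reduces the claim to the elementary fact that a closed oriented loop crosses any vertex bipartition equally often in each direction. This needs no adjacency assumption, absorbs the paper's separate one-loop case (there $\mathcal{H}_e=\emptyset$ for every $e\in\ell$, so your sum is empty) without a case distinction, and supplies precisely the pairing that the paper leaves implicit; the bookkeeping you flag as the main obstacle (that re-adding a bridging edge of $\overline{H}$ always yields a legitimate $H\in\mathcal{H}_e$ with $\overline{H}\in O(H^{(e)})$, and that non-bridging edges contribute nothing) does check out, since an edge joining the two components of a spanning tree-loop-tree creates no second loop and can never lie on the loop of the resulting tree-loop.
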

\begin{proof}
    Let $\overline{\ell}$ be the oriented loop $((x_1,x_2),x_2,(x_2,x_3), \dots , (x_{n-1},x_n),x_n,(x_n,x_1))$. We start by showing that $\sum_{\bar{e}\in\overline{\ell}} V_{tree}(\bar{e}) =0$. Put $x_{n +1} = x_1$. Then
    \begin{align}
        \sum_{\bar{e}\in\overline{\ell}} V_{tree}(\bar{e}) &= \frac{1}{W} \sum_{T\in \cal T} \sum_{u \in \cal V} w(T_u) \left( \sum_{i = 1}^n q_{T}(x_i \to x_{i +1})\right).
    \end{align}
    Since for every fixed tree $T$ and for any three vertices $x,y,z$, $q_T(x
    \to y) + q_T(y \to z) = q_T(x \to z)$, the sum $\sum_{i = 1}^n q_{T}(x_i \to x_{i +1}) = 0$ and thus $\sum_{\bar{e}\in\overline{\ell}} V_{tree}(\bar{e}) = 0$.\\
   Consider a graph $G$ which has only one loop $\ell$.  The edges  located on the loop in the underlining graph are not participating in some tree  of spanning tree-loops of the  graph $G$. Hence, $\cal H_{\overline {e}}=\emptyset$ for all $\overline{e}\in \overline{\ell}$ and then  $\sum_{\overline{e} \in \overline{\ell}}V_{loop}(\overline{e})=0$ (see \eqref{loopvedge}).\\
Assume next that the graph has more than one loop.  In that case, there are edges on a loop for which there is a chance to participate on the tree part of a spanning tree-loop. We fix an oriented loop $\overline{\ell}$ and sum over its edges. Notice that there is more than one path between every two states on the loop $\ell$. The oriented loop $\overline{\ell}$ has at least three edges.  We call two of them  $\overline{e}_1=(z,x), \overline{e}_2=(x,y)$. For the other loop there is a spanning tree-loop $H_{e_1}\not \ni e_2$ such that $e_1$ is located on a tree and $\overline{e}_1$ is leaving the loop so that $\sigma_{H_{e_1}}=+1$. Consider a tree-loop made by removing the edge $e_1$ from $H_{e_1}$ and adding the edge $e_2$. The new spanning tree-loop is denoted by $H_{e_2}\not \ni e_1$. When $\overline{e}_2$ in $H_{e_2}$ is oriented towards the loop, then $\sigma_{H}(\overline{e}_2)=-1$, while $H^{(e_1)}=H^{(e_2)}$. That scenario repeats for every edge in $\ell$: we claim that for every $H^{(e)}$ with the sign $\sigma_{H}(\overline{e})$ there is a corresponding $H^{(e')}$ with  sign  $-\sigma_{H}(\overline{e'})$ such that the tree-loop-trees are the same.  Therefore, $\sum_{\overline{e}\in \overline{\ell}}V_{\text{loop}(\overline{e})}=0$.
\end{proof}

Next, we give a graphical representation of \eqref{power} in the following
\begin{lemma}\label{aveg}
The average of ${\cal P}(x)$ in \eqref{power} is
	\begin{equation}
	\left\langle \cal P\right\rangle =\frac{1}{W}\sum_{\overline{H}\in O(\cal H) }w(\overline{H})q(\overline{\ell})
	\end{equation}
where $O(\cal H)$ is the set of all oriented spanning tree-loops in graph $G$. The  loop $\overline{\ell}$ is  the same  as the loop in $\overline{H}$. 
\end{lemma}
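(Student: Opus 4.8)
The plan is to insert the Kirchhoff representation into the stationary average and reorganize the resulting sum over (spanning tree, directed edge) pairs into a sum over oriented spanning tree-loops. Combining the definition \eqref{power} with \eqref{kir}--\eqref{kir2} in $\langle\mathcal{P}\rangle = \sum_x \mathcal{P}(x)\rho^s(x)$ yields
$$
\langle\mathcal{P}\rangle = \frac{1}{W}\sum_{T\in\mathcal{T}}\;\sum_{\substack{(x,y)\\ \{x,y\}\in\mathcal{E}}} k(x,y)\,q(x,y)\,w(T_x),
$$
a double sum over all spanning trees $T$ and all oriented edges $(x,y)$. The guiding observation is that the rooted tree $T_x$ together with the extra directed edge $(x,y)$ has weight exactly $k(x,y)\,w(T_x)$, and is, generically, an oriented spanning tree-loop $\overline{H}$ with $w(\overline{H})=k(x,y)\,w(T_x)$. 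I would split the inner sum according to whether $\{x,y\}$ is an edge of the underlying tree $T$.

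For the \emph{tree-edge case}, when $\{x,y\}\in T$, the contribution cancels. Fixing such an edge and removing it splits $T$ into the component $C_x\ni x$ and the component $C_y\ni y$; rooting at $x$ orients the edge as $(y,x)$, rooting at $y$ orients it as $(x,y)$, while every other edge keeps the same orientation in $T_x$ and $T_y$. Hence $w(T_x)=k(y,x)\,P$ and $w(T_y)=k(x,y)\,P$ with a common factor $P$, so the two ordered pairs $(x,y)$ and $(y,x)$ together contribute $k(x,y)\,k(y,x)\,P\,[\,q(x,y)+q(y,x)\,]=0$ by the antisymmetry of $q$ recorded after \eqref{q(z,z')}. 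Summing over tree edges, the entire tree-edge block vanishes.

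For the \emph{non-tree-edge case}, when $\{x,y\}\notin T$, adding the directed edge $(x,y)$ to $T_x$ closes the unique tree-path from $y$ to $x$ into a single oriented loop $\overline{\ell}$ of length at least three, while every remaining edge of $T_x$ still points toward that loop; this is precisely an oriented spanning tree-loop $\overline{H}\in O(\mathcal{H})$ (Definition \ref{tl}) of weight $w(\overline{H})=k(x,y)\,w(T_x)$. Conversely, each $\overline{H}$ arises in this way from exactly as many (tree, edge) pairs as its loop has edges: deleting any loop edge $(x_i,x_{i+1})$ of $\overline{H}$ and forgetting orientation yields a spanning tree rooted at $x_i$, with $(x_i,x_{i+1})$ the reinstated non-tree edge. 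Summing the single-edge heat $q(x_i,x_{i+1})$ over the loop edges reproduces the circulation $q(\overline{\ell})=\sum_{\overline{e}\in\overline{\ell}}q(\overline{e})$, so this block equals $\frac{1}{W}\sum_{\overline{H}\in O(\mathcal{H})} w(\overline{H})\,q(\overline{\ell})$, which is the claim. The step I expect to require the most care is this last correspondence: one must verify both that $T_x\cup\{(x,y)\}$ has the exact structure of an oriented tree-loop (trees rooted on the loop, loop consistently oriented) and that the multiplicity is exactly the loop length, since it is this multiplicity that upgrades the per-edge heat $q(x,y)$ into the full loop circulation $q(\overline{\ell})$.
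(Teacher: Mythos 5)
Your proof is correct and follows essentially the same route as the paper's: insert the Kirchhoff formula, split the (tree, directed edge) sum according to whether the edge lies in the tree, cancel the tree-edge block pairwise via the re-rooting identity $k(x,y)w(T_x)=k(y,x)w(T_y)$ and the antisymmetry of $q$, and regroup the non-tree-edge block into oriented spanning tree-loops, with the loop-length multiplicity assembling the per-edge heats into $q(\overline{\ell})$. Your treatment of the bijection and its multiplicity is in fact spelled out more carefully than in the paper's own proof.
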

\begin{proof}
See Appendix \ref{avegproof}.
\end{proof}
The main step for proving Proposition \ref{dec} is the next proposition.
\begin{proposition}\label{verification}
For every $x\in \cal V$,
\begin{equation}\label{genstep}
\sum_{\overline{e} =(x,\cdot)} k(\overline{e})\,\big( -V(\overline{e}) + q(\overline{e})\big) = \langle \cal P\rangle. 
\end{equation}
\end{proposition}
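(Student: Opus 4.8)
The plan is to substitute the graphical formulas \eqref{treevedge}--\eqref{loopvedge} for the difference-quasipotential into the left-hand side of \eqref{genstep} and to reorganize the result into the graphical expression for $\langle\mathcal{P}\rangle$ furnished by Lemma \ref{aveg}. Writing $\overline{e}=(x,y)$, I would read each summand $k(\overline{e})\big(-V(\overline{e})+q(\overline{e})\big)$ as the operation of \emph{adjoining the oriented edge} $(x,y)$ to a weighted oriented subgraph: the part $-V_{\text{tree}}(\overline{e})+q(\overline{e})$ attaches $(x,y)$ to a rooted spanning tree $T_u$, while $-V_{\text{loop}}(\overline{e})$ attaches $(x,y)$ to an oriented spanning tree-loop-tree from $O(H^{(e)})$. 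In both cases one obtains a connected oriented spanning subgraph on $|\mathcal{V}|$ edges carrying one distinguished out-edge at $x$, hence containing exactly one cycle. It is convenient first to pass to the equivalent form of $V_{\text{tree}}$ from the Remark following Proposition \ref{dec}, so that the tree contribution is recorded as a cycle-heat $q(\overline{C})=q(x,y)-q_T(x\to y)$ (the fundamental cycle oriented so that $(x,y)$ runs $x\to y$) times the tree weight $k(x,y)\sum_u w(T_u)$; in particular the terms with $\{x,y\}\in T$ drop out, since then the loop degenerates and $q(\overline{C})=0$.

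Next I would classify these configurations by the out-degree of $x$. The \emph{good} configurations are those in which every vertex has out-degree one; these are exactly the oriented spanning tree-loops $\overline{H}\in O(\mathcal{H})$, arising either from $V_{\text{tree}}$ with root $u=x$ (when the distinguished edge sits on the loop of $\overline{H}$) or from $V_{\text{loop}}$ (when it sits on a tree of $\overline{H}$ and therefore points toward the loop, so that $\sigma_H(\overline{e})=-1$ by Definition \ref{tlt}(b)). A short computation of the heat factor, using the antisymmetry of $q$, shows that in each case the good configuration contributes $+\tfrac1W\,q(\overline{\ell})\,w(\overline{H})$ with $\overline{\ell}$ its loop. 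Since for fixed $x$ every $\overline{H}\in O(\mathcal{H})$ has a unique out-edge at $x$, summing over the neighbours $y$ produces each element of $O(\mathcal{H})$ precisely once, and these surviving terms assemble into $\tfrac1W\sum_{\overline{H}\in O(\mathcal{H})}w(\overline{H})\,q(\overline{\ell})=\langle\mathcal{P}\rangle$ by Lemma \ref{aveg}.

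It then remains to show that the \emph{bad} configurations, those in which $x$ has out-degree two (and some vertex out-degree zero), cancel, via the sign-reversing involution that \emph{swaps which of the two out-edges at $x$ is the distinguished one}. A bad configuration is counted once for each of its two out-edges at $x$: deleting one of them leaves either a spanning tree (a $V_{\text{tree}}$ term) or a spanning tree-loop-tree (a $V_{\text{loop}}$ term), and re-adjoining the deleted edge reproduces the same configuration. If the unique cycle is \emph{not} directed, then $x$ is its source and both out-edges lie on the cycle, so both markings are $V_{\text{tree}}$ terms whose cycle-heats $q(\overline{C})$ are evaluated for the two opposite traversals of the same undirected cycle and are thus negatives of each other. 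If the cycle \emph{is} directed, then one out-edge of $x$ lies on it and the other points away into a tree; marking the cycle-edge gives a $V_{\text{tree}}$ term $+q(\overline{\ell})\,w$, while marking the other gives a $V_{\text{loop}}$ term with $\sigma_H(\overline{e})=+1$, whose overall minus sign yields $-q(\overline{\ell})\,w$. In either case the paired contributions have equal weight and opposite sign, so all bad configurations drop out and only $\langle\mathcal{P}\rangle$ survives.

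The routine parts are the weight bookkeeping (each adjunction simply multiplies the tree or tree-loop-tree weight by $k(x,y)$ to give the full configuration weight) and the observation that every configuration is connected with a single cycle. The main obstacle is the sign analysis of the involution: one must verify that the cycle-heat factor and the $\sigma_H(\overline{e})$ convention conspire so that the two markings of each bad configuration always cancel across both the directed and non-directed cases, and that the good configurations carry the correct $+$ sign. This is exactly where the antisymmetry of $q$ and the orientation convention for $\sigma_H$ in Definition \ref{tlt}(b) enter, and it is the step I would write out in full detail.
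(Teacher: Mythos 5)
Your strategy is in essence the paper's own: substitute the decomposition \eqref{treevedge}--\eqref{loopvedge}, read each product $k(\overline{e})(\cdots)$ as adjoining the oriented edge $\overline{e}$ to a rooted spanning tree or to an oriented tree-loop-tree, let the surviving terms assemble into $\frac{1}{W}\sum_{\overline{H}\in O(\mathcal{H})}w(\overline{H})\,q(\overline{\ell})$, and finish with Lemma \ref{aveg}. The paper performs the cancellations in three separate steps (a tree--tree edge swap inside $V_{\text{tree}}$, a $\sigma_H$-sign cancellation inside $V_{\text{loop}}$, and a cross-identification of the two leftover sums), whereas you repackage them as a single sign-reversing involution that swaps the marked out-edge at $x$; that is a cleaner bookkeeping of the same combinatorics, and your treatment of the good configurations and their $+$ sign is correct.

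The gap is that your case analysis of the bad configurations is not exhaustive. You assert that if the unique cycle of a bad configuration is directed, then one of the two out-edges of $x$ lies on it. That is false: $x$ need not lie on the cycle at all. Take vertices $\{a,b,c,x,r\}$ with the directed cycle $a\to b\to c\to a$ and the two out-edges $(x,a)$ and $(x,r)$; this is a bad configuration (out-degree $2$ at $x$, $0$ at $r$) whose cycle is directed but touches no edge at $x$. For such configurations, removing either out-edge of $x$ leaves a graph still containing the cycle, hence a tree-loop-tree and never a spanning tree, so both markings are $V_{\text{loop}}$ terms and neither of your two cancellation mechanisms applies; as written, this entire family of terms is never cancelled and the identity is not established. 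These are precisely the terms the paper kills in its first step on $V_{\text{loop}}$ (the case ``$x$ is not on the loop part and is not the root of the separated tree part \dots because $\sigma_H$ has different signs''). The repair stays inside your framework: since contracting the cycle to a point leaves a tree, exactly one out-edge of $x$ starts the unique path from $x$ to the cycle; marking that edge gives $\sigma_H(\overline{e})=-1$ (it points toward the loop, and $r$ then lies in the tree component containing $x$), while marking the other gives $\sigma_H(\overline{e})=+1$ (it points away, and $r$ lies in the detached tree), with identical configuration weight and identical loop heat, so the involution again produces cancelling pairs. You must add this third case for the proof to close.
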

\begin{proof}
    We refer to the decomposition \eqref{treevedge} and \eqref{loopvedge} of $V(\overline{e})$. We start by calculating $\sum_{\overline{e} =(x,\cdot)} k(\overline{e}) V_{\text {tree}}(\overline{e})$. If the edge $e = \{x,z\}$ is located on a spanning tree $T$, then $q_{T}(x\rightarrow z )=q(x,z)$, otherwise $\,q_{T}(x\rightarrow z )=(q_{T}(x\rightarrow z ) + q(z,x)) + q(x,z)$, where the first two terms give $q$ over an oriented loop made by the path from $x$ to $z$ on $T$ together with the edge $(z,x)$. It follows that
    \begin{align}
        \sum_{\overline{e} =(x,\cdot)} k(\overline{e}) V_{\text {tree}}(\overline{e})&=\dfrac{1}{W}\; \sum_{\overline{e} =(x,\cdot)} k(\overline{e})\sum_{y\in \cal V}\sum_{T\in \cal T} w( T_y)\,q(\overline{e} ) \notag\\&\quad +\dfrac{1}{W}\,\sum_{\overline{e} =(x,\cdot)} k(\overline{e}) \sum_{y\in \cal V}\, \, \sum_{ T\in \cal T, e \notin T} w( T_y)\,(q_{T}(x\rightarrow z )+q(z,x)).
    \end{align}
    In the first line we can use that $\sum_{y\in \cal V}\sum_{T\in \cal T} w( T_y)=W$ and the first sum is thus equal to $\sum_{\overline{e} =(x,\cdot)} k(\overline{e})q(\overline{e})$. Now look at the second line. Take $\overline{e} = (x,z)$ and  $T_y$. We look at two cases: $y = x$ and $y \neq x$. If $y = x$, then by adding the edge $\overline{e}$ to  $T_x$ an oriented  spanning tree-loop is made (but $q$ over the loop is in the opposite direction of the oriented spanning tree-loop). Every oriented spanning tree-loop with $x$ on the loop can be made in this way.\\
     If $y \neq x$ and the edge $(x,z)$ is not in $T_y$, there exists another edge such  $(x,z')$  in $T_y$. In the case that $y$ is on the path from $x$ to $z$ in the tree, by adding the edge $\overline{e} = (x,z)$ to $T_y$, again a loop is made. If $(x,z')$ is on this loop, then there exists another spanning  tree $T'$ made by removing the edge $\{x,z'\}$ from $T$ and adding the edge $\{x,z\}$ such that 
    \begin{equation}k(x,z) w( T_y)\,(q_{T}(x\rightarrow z )+q(z,x))+
    k(x,z') w( T'_y)\,(q_{T'}(x\rightarrow z' )+q(z',x))=0\end{equation}
	where $k(x,z) w( T_y)= k(x,z') w( T'_y)$. We thus have 
	\begin{align}\label{prob1}
        \sum_{\overline{e} =(x,\cdot)} k(\overline{e}) V_{\text {tree}}(\overline{e})&=\; \sum_{\overline{e} =(x,\cdot)} k(\overline{e})q(\overline{e} )-\frac{1}{W} \sum_{H,x\in \ell}\,\,\sum_{\overline{H}\in O(H)} w(\overline{H})q(\overline{\ell})\notag\\
        & \quad -\dfrac{1}{W}\,\sum_{\overline{e} =(x,\cdot)} k(\overline{e}) \sum_{\substack {y\notin \ell_{x\rightarrow z,x}}}\sum_{T\in \cal T, e \notin T} w( T_y)\,(q_{T}(z\rightarrow x )+q(x,z)) .
    \end{align}
    
    Now we look at the loop term $V_\text{loop}(\overline{e})$. Let $\overline{e} = (x,z)$. Take an arbitrary $H \in \cal H_e$. If $x$ is not on the loop part  and is not the root of the separated tree part in the oriented  tree-loop-tree $\overline{H}$, then in the graph $G$, $x$ has at least two neighbours, $z_1$ and $z_2$. Let $e_1 = \{x,z_1\}$ and $e_2 = \{x,z_2\}$, then
    \begin{equation} 
    k(x,z_1) \sum_{\substack{\overline{H}\in O(H^{(e_1)})\\ x \text{ is no root}}} \sigma_H (\overline{e_1}) q(\overline{\ell}) w(\overline{H}) + k(x,z_2) \sum_{\substack{\overline{H}\in O(H^{(e_2)})\\ x \text{ is no root}}}\sigma_H (\overline{e_2}) q(\overline{\ell}) w(\overline{H})=0
    \end{equation}
    because $\sigma_H$ has different signs. For more than two neighbours on the tree the result is the same.
    In case that $x$ is the root of the separate tree of some $\overline{H} \in O(H^{(e)})$, then by adding the edge $\overline{e}$ to $\overline{H}$, an oriented spanning tree-loop is made. We thus have
    \begin{align}\label{prob2}
        \sum_{\overline{e} =(x,\cdot)} k(\overline{e}) V_\text{loop}(\overline{e}) & = -\frac{1}{W}\sum_{\substack{H \in \cal H_{e}\\ x \notin \ell}}\, \sum_{\overline{H}\in O(H)}w(\overline{H})q(\overline{\ell})\notag\\
        &+\frac{1}{W}\sum_{\substack{\overline{e} =(x,\cdot)\\ x\in \ell}} k(\overline{e})\,\sum_{H\in \cal H_{e}}\, \sum_{\overline{H}\in O(H^{(e)})}\sigma_H(\overline{e})\, q(\overline{\ell})\,w(\overline{H}).
    \end{align}
    
    In the second line, $k(\overline{e})w(\overline{H})$ (see Fig.~\ref{fig:secondline-a}) equals $k(\overline{e'})w(T_r)$ (see Fig.~\ref{fig:secondline-b}), where $\overline{e'} = (x,z')$ with $z'$ such that $e'$ is on the oriented loop part of $\overline{H}$ and $r$ is the root of the tree part of $\overline{H}$. 
    The second line of \eqref{prob2} thus corresponds with the second line of \eqref{prob1} (see Fig.~\ref{graphicalobject}) and as a result we get 
    \begin{align}
        \sum_{\overline{e} =(x,\cdot)} k(\overline{e})\,V(\overline{e})&=\sum_z  k(\overline{e})\,(V_{\text{loop}}(\overline{e})+V_{\text{tree}}(\overline{e}))\notag\\
        &=\; \sum_{\overline{e} =(x,\cdot)} k(\overline{e})q(\overline{e} )-\frac{1}{W} \sum_{H}\,\,\sum_{\overline{H}\in O(H)} w(\overline{H})q(\overline{\ell}).
    \end{align}
    The proof is finished by using Lemma \ref{aveg}.
    
\begin{figure}[H]
     \centering
     \begin{subfigure}{0.49\textwidth}
         \centering
         \includegraphics[scale = 0.35]{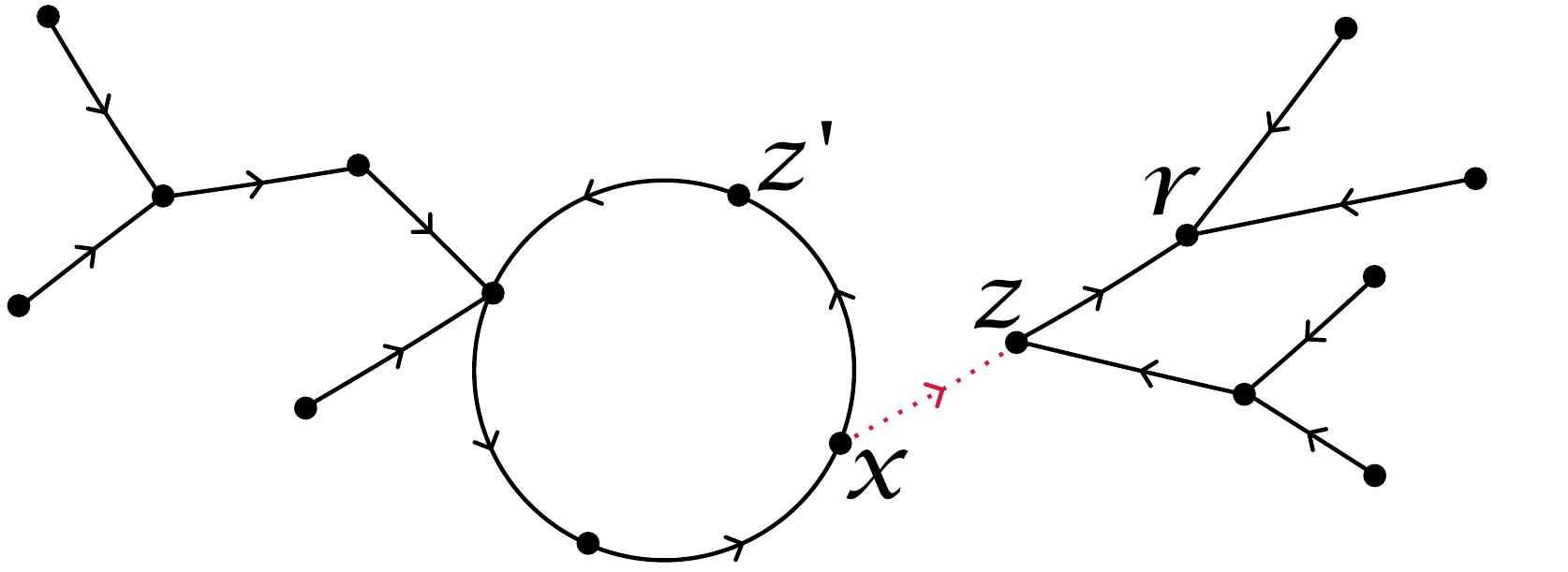}
         \caption{$\overline{e}\cup \overline{H}$}
         \label{fig:secondline-a}
     \end{subfigure}
     \hfill
     \begin{subfigure}{0.49\textwidth}
         \centering
         \includegraphics[scale = 0.35]{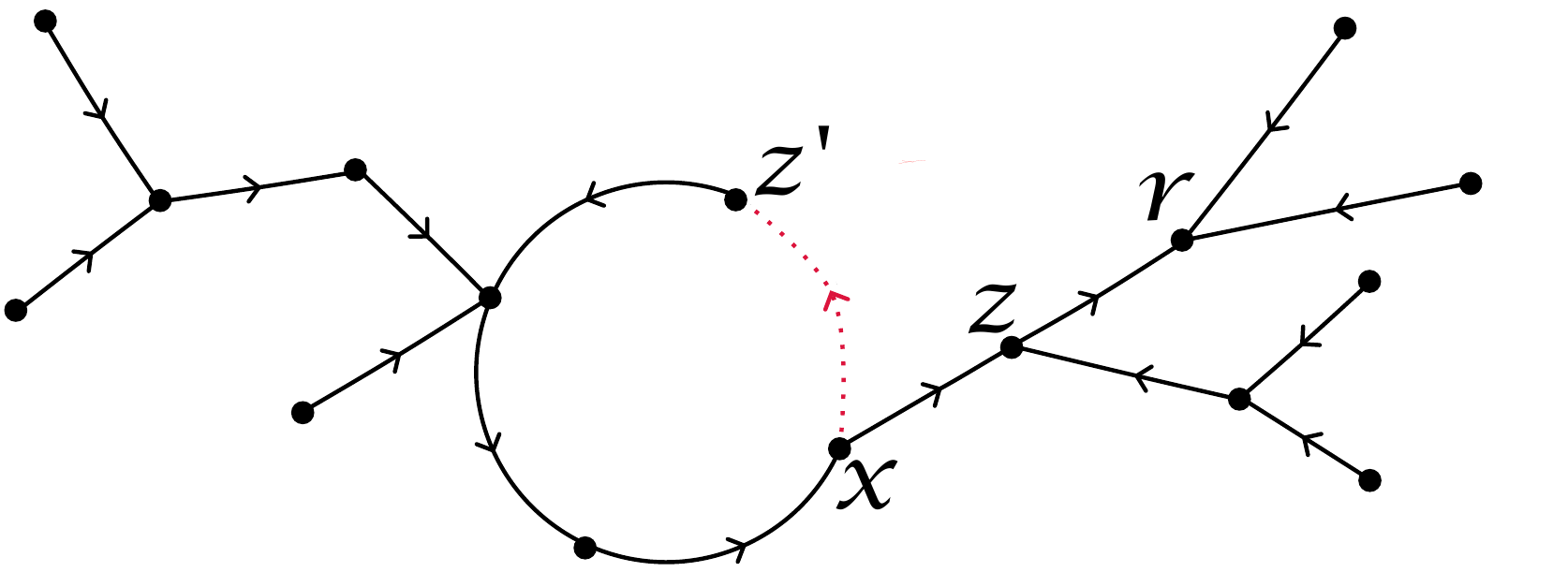}
         \caption{$\overline{e'}\cup T_r$}
         \label{fig:secondline-b}
     \end{subfigure}
        \caption{The second line of \eqref{prob2} corresponds with the second line of \eqref{prob1}.}
        \label{graphicalobject}
\end{figure}
    
\end{proof}
    
\begin{proposition}\label{boundedinedge}
    Under Condition 2, the quasipotential $V(\overline{e})$ is bounded for every edge $\overline{e}$, uniformly for $\beta \uparrow \infty$. 
\end{proposition}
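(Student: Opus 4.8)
The plan is to start from the graphical decomposition $V(\overline e) = V_{\text{tree}}(\overline e) + V_{\text{loop}}(\overline e)$ of Proposition~\ref{dec} and to bound the two pieces separately. The guiding observation is that by \eqref{q(z,z')} the heat labels $q_\alpha(x,y)$ are independent of $\beta$ and hence uniformly bounded, so that every trace of $\beta$-dependence sits in the ratios of rate-products $w(\cdot)/W$. Throughout I will use that $\frac1\beta\log k_\lambda(x,y)\to\phi_\alpha(x,y)$ yields $\frac1\beta\log w(\overline H)\to\phi_\alpha(\overline H)$ for any oriented subgraph, together with the fact that, by the very definitions \eqref{phi}, $\frac1\beta\log W\to\phi^*$, since $W=\sum_{z}\sum_{T}w(T_z)$ is dominated by the rooted spanning trees realizing $\max_{z,T}\phi_\alpha(T_z)=\phi^*$.

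The tree part is immediate and requires no extra hypothesis. From \eqref{treevedge},
\[
|V_{\text{tree}}(\overline e)| \le \frac1W\sum_{T\in\mathcal T}|q_T(x\to y)|\sum_{u\in\mathcal V}w(T_u)
\le \Big(\max_{T\in\mathcal T}|q_T(x\to y)|\Big)\,\frac1W\sum_{T\in\mathcal T}\sum_{u\in\mathcal V}w(T_u),
\]
and since $\sum_{T}\sum_{u}w(T_u)=W$ the last factor is exactly $1$. Hence $|V_{\text{tree}}(\overline e)|\le \max_{T}|q_T(x\to y)|$, a bound assembled from finitely many $\beta$-independent heat labels and therefore uniform as $\beta\uparrow\infty$.

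The loop part is where Condition~2 enters. Bounding \eqref{loopvedge} crudely,
\[
|V_{\text{loop}}(\overline e)|\le \frac1W\sum_{H\in\mathcal H_e}\ \sum_{\overline H\in O(H^{(e)})}|q(\overline\ell)|\,w(\overline H),
\]
a finite sum (the number of spanning tree-loops and of their orientations is fixed by $G$) in which each $|q(\overline\ell)|$ is $\beta$-independent. It therefore suffices to show that each ratio $w(\overline H)/W$ stays bounded as $\beta\uparrow\infty$, for every $\overline H\in O(H^{(e)})\subseteq O(\mathcal H^{(e)})$. By the asymptotics above, $\frac1\beta\log\big(w(\overline H)/W\big)\to \phi_\alpha(\overline H)-\phi^*$, and Condition~2 is precisely the assertion $\phi_\alpha(\overline H)\le\phi^*$ for exactly these oriented tree-loop-trees. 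Consequently the exponential rate of every ratio is nonpositive: the terms with $\phi_\alpha(\overline H)<\phi^*$ decay to $0$, and only the \emph{dominant} terms with $\phi_\alpha(\overline H)=\phi^*$ can survive.

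The main obstacle is precisely this equality case, where $w(\overline H)$ and $W$ share the same exponential order and boundedness of the ratio rests on the subexponential prefactors rather than on the exponents. I would settle it by the same matrix-tree machinery that already underlies the low-temperature asymptotics \eqref{lowkir}: since $w(\overline H)$ is a single product of $|\mathcal V|-1$ rates, one has $w(\overline H)=\tilde B_\lambda(\overline H)\,e^{\beta\phi_\alpha(\overline H)}$ with $\tilde B_\lambda$ subexponential, while $W$ is bounded below by the weight of a rooted spanning tree realizing $\phi^*$, whose prefactor $B_W>0$ is likewise subexponential; in the standard Arrhenius parametrization $k_\lambda=a_\alpha\,e^{\beta\phi_\alpha}$ with constant prefactors $a_\alpha>0$ the ratio then converges to the finite constant $\tilde B(\overline H)/B_W$, and boundedness follows at once. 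What Condition~2 genuinely buys, and what fails without it, is the exclusion of any tree-loop-tree with $\phi_\alpha(\overline H)>\phi^*$: a single such term would force $V_{\text{loop}}(\overline e)$ to grow like $e^{\beta(\phi_\alpha(\overline H)-\phi^*)}$, the exponential relaxation ``delay'' that the condition is designed to forbid. Combining the two bounds gives $|V(\overline e)|\le c_v$ uniformly in $\beta$, as required.
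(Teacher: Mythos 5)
Your proposal is correct and follows essentially the same route as the paper's own proof: the tree part of the decomposition from Proposition~\ref{dec} is bounded because the heats $q_T(x\rightarrow y)$ are $\beta$-independent and the tree weights are dominated by $W$, while the loop part is controlled by the logarithmic asymptotics $w(\overline{H})/W \simeq e^{\beta(\phi(\overline{H})-\phi^*)}$ combined with Condition 2. If anything, you are more careful than the paper at the degenerate case $\phi(\overline{H})=\phi^*$, where the paper's argument stops at logarithmic equivalence and tacitly assumes the subexponential prefactors of $w(\overline{H})$ and $W$ remain comparable --- precisely the point you make explicit and settle under the Arrhenius-type parametrization of the rates.
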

\begin{proof}
Look at the tree term \eqref{treevedge} in  Lemma~\eqref{dec}.  The terms $q_T(x\rightarrow y)$ are bounded and $W \geq w(T_y)$ for all $y$.  Therefore,  $V_{\text{tree}}(\overline{e})$ is uniformly bounded in temperature.\\

 Consider next the loop term \eqref{loopvedge} in Lemma~\eqref{dec}.  There,   $\overline{H}\in O(H^{(e)})$ is  an oriented  tree-loop-tree  created by removing the edge $e$ from the spanning tree-loop $H$ and then giving  direction to the edges.  In the low-temperature asymptotics, the weights are
 \begin{equation}
  w(\overline{H})\simeq e^{\beta \phi(\overline{H})}
 \end{equation}
 and asymptotically in logarithmic sense,
 \begin{equation}
 W \simeq \sum_{T,y} e^{\beta \phi(T_y)}\simeq e^{\beta\phi^*}.
 \end{equation}
On the other hand, $q(\overline{\ell})$  is bounded.  Hence, 
 \begin{equation}
 V_{\text{loop}}(\overline{e}) \simeq e^{\beta(\phi(\overline{H})-\phi^*)}
 \end{equation}
is bounded if $\phi(\overline{H})\leq\phi^*$.
\end{proof}

Now we can finish the proof of the uniform boundedness of the quasipotential.   

\begin{proposition}\label{bdd}
Given Condition 2, the quasipotential $V_\lambda=V$ of \eqref{prv} is uniformly bounded in $\beta\uparrow \infty$.
\end{proposition}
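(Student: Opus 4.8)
The plan is to upgrade the uniform boundedness of the edge-differences $V(\overline{e})$, already established in Proposition~\ref{boundedinedge}, to a uniform bound on the function $V_\lambda$ itself. The two ingredients that must still be brought in are the connectedness of $G$ and the centering condition $\langle V_\lambda\rangle^s_\lambda = 0$ recorded in \eqref{poi}; neither the edge-estimate alone nor the graph structure alone suffices.

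First I would fix the uniform constant coming from Proposition~\ref{boundedinedge}: there is a $c_e$, independent of $\beta$ (and of $\alpha\in\cal A$), with $|V(\overline{e})|\le c_e$ for every oriented edge $\overline{e}$. Since $G$ is connected and finite, any two vertices $x,y$ are joined by a path using at most $|\cal V|-1$ edges. Writing $V_\lambda(x)-V_\lambda(y)$ as the telescoping sum of the edge-differences $V(\overline{e})$ along such a path, I obtain $|V_\lambda(x)-V_\lambda(y)|\le (|\cal V|-1)\,c_e$ for all $x,y\in\cal V$. In particular the oscillation $\max_z V_\lambda(z)-\min_z V_\lambda(z)$ is bounded by $(|\cal V|-1)\,c_e$, uniformly in $\beta$.

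It then remains only to control the additive constant, which is precisely where I invoke $\langle V_\lambda\rangle^s_\lambda = \sum_x V_\lambda(x)\,\rho^s_\lambda(x)=0$. Since $\rho^s_\lambda$ is a probability distribution with strictly positive weights $\rho^s_\lambda(x)>0$, a zero weighted mean forces $\min_z V_\lambda(z)\le 0\le \max_z V_\lambda(z)$. Combining this sign information with the oscillation bound yields, for every $x\in\cal V$,
\[
|V_\lambda(x)| \;\le\; \max_{z} V_\lambda(z) - \min_{z} V_\lambda(z) \;\le\; (|\cal V|-1)\,c_e ,
\]
a bound independent of $\beta$, which is exactly the uniform boundedness \eqref{dynaco} claimed (with $c_v=(|\cal V|-1)\,c_e$).

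I do not expect a genuine obstacle here, since the heavy lifting is done in Proposition~\ref{boundedinedge}. The one conceptual subtlety worth flagging is that the edge-differences determine $V_\lambda$ only up to an additive constant which a priori might diverge as $\beta\uparrow\infty$; the normalization $\langle V_\lambda\rangle^s_\lambda=0$ is exactly what pins this constant down, and together with connectedness closes the argument. Because $c_e$ is uniform in $\alpha$ and the centering holds for every $\lambda$, the final bound is uniform in $\alpha\in\cal A$ as well.
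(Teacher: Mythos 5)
Your proof is correct and follows essentially the same route as the paper's: both use the uniform edge bound from Proposition~\ref{boundedinedge}, telescope edge-differences along paths in the connected graph, and invoke the centering $\langle V_\lambda\rangle^s_\lambda=0$ to guarantee vertices where $V_\lambda$ has opposite signs, thereby pinning down the additive constant. The only cosmetic difference is that the paper bounds by the sum of $|V_\lambda(\overline{e})|$ over all edges while you use a per-edge constant times the path length, and it exhibits explicit points $x_0,x_1$ with $V_\lambda(x_0)\le 0\le V_\lambda(x_1)$ where you phrase the same fact via $\min$ and $\max$.
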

\begin{proof}
From Proposition~\ref{boundedinedge} and by the finiteness of the graph $G$, there exists $B < \infty$ such that
$\sum_{\bar{e} \in \cal{E}} |V_\la(\bar{e})| \leq B$, uniformly in $\la$ for $\be$ large enough. By the centrality condition 
$\sum_x \rho_\la(x) V_\la(x) = 0$, for any $\la$ there exist configurations $x_0$ and $x_1$ such that
$V_\la(x_0) \leq 0 \leq V_\la(x_1)$. Therefore, for any 
$x \in \caV$,
\begin{equation}
V_\la(x) \leq V_\la(x) - V_\la(x_0) \leq \sum_{\bar{e}:\,x_0 \rightarrow x} |V_\la(\bar{e})|
\leq B 
\end{equation}
where the sum is over an arbitrary path in $G$ connecting $x_0$ and $x_1$. Analogously,
\begin{equation}
V_\la(x) \geq V_\la(x) - V_\la(x_1) \geq -\sum_{\bar{e}:\,x_1 \rightarrow x} |V_\la(\bar{e})|
\geq -B 
\end{equation}
That proves the statement.

\end{proof}

\begin{proof}[Proof of Theorems \ref{t1}--\ref{t2}]
The statements of Theorems \ref{t1}--\ref{t2} follow immediately from Proposition~\ref{condition1.2} (the vanishing of $\nabla_\alpha \rho^s_\lambda(x)$ as $\beta \uparrow \infty$), Proposition \ref{condition2} (the vanishing of $\id \rho^s_\lambda(x)/\id (\beta^{-1})$ as $\beta \uparrow \infty$) and  Proposition~\ref{bdd} (the uniform boundedness of $V_\la(x)$ along $\be\uparrow\infty$).
\end{proof}

\section{Summary and conclusion}\label{conc}

We have considered nonequilibrium Markov jump processes on a finite connected graph under the interpretation of local detailed balance.  The main object is the \emph{excess heat} $Q_\ve$ which is computed in the quasistatic limit $\ve\downarrow 0$.  We prove it gives rise to an expression in terms of the quasipotential $V_{\lambda}$ defined in \eqref{prv}: see \eqref{cl},
\begin{equation}\label{ce1}
\lim_{\ve\downarrow 0} Q_\ve =  \int\id \lambda\cdot  \sum_x V_{\lambda}(x)
    \, \nabla_\lambda \rho^s_\lambda(x).
\end{equation}
where $\rho^s_\lambda$ is the unique stationary distribution.\\
Our main result is the formulation of sufficient conditions for that quasistatic excess heat \eqref{ce1} to vanish as the temperature goes to zero. First, the low-temperature asymptotics of the stationary distribution ensures that $\nabla_\lambda \rho^s_\lambda(x) \rightarrow 0$.  Secondly, we need the boundedness of the quasipotential $V_\lambda$, uniformly in $\beta\uparrow \infty$. For that boundedness, it suffices to consider the differences of $V_\lambda$ over an edge. The result can be inferred by explicit inspection, where the condition arises more specifically from the analysis of tree-loop and tree-loop-tree contributions at low temperatures.  The main idea is however contained in the matrix-forest theorem, which we discuss in Appendix \ref{mf}.  It gives a graphical representation of the quasipotential, and it allows the evaluation of the zero-temperature limit.   \\
As a conclusion, a proof is obtained for the validity of the Nernst postulate to be extended to nonequilibrium Markov jump processes.  \\

\noindent {\bf Acknowledgment:} 
KN thanks A. Lazarescu and W. O'Kelly de Galway for previous inspiring discussions on the subject. The work was concluded while authors FK and IM visited KN at the Institute of physics  in Prague.  They are grateful for the hospitality there. 
\newpage

\appendix

\section{Matrix-forest theorem}\label{mf}
As it may seem less clear how \eqref{treevedge} and \eqref{loopvedge} arise, we give here their origin from the matrix-forest theorem.\\

Consider a continuous-time irreducible Markov process on a finite state space $\cal V$ characterized by transition rates $k(x,y)\geq 0$ for $x \, , y \in \cal V$. Let $L$ be the backward generator,
\begin{align}\label{bl}
L_{xy}&=k(x,y), \quad x\neq y \,\, \text{and } \, x,y\in \cal V\notag\\
L_{xx}&=-\sum_y k(x,y).
\end{align} 
Consider $e^{tL}g (x) = \langle g(X_t)\,|\,X_0=x\rangle$.
\begin{equation}\label{exworkint}
V_g(x) = \int_0^\infty [\langle g(X_t)\,|\,X_0=x\rangle - \langle g\rangle] \id t
\end{equation}
where we subtract the asymptotic stationary value and
\begin{equation*}
\langle g(x_t)\,|\,x_0=x\rangle\xrightarrow[]{t\uparrow \infty}\langle g\rangle.
\end{equation*}
Putting $f(x)=g(x)-\left\langle g\right\rangle$, 
\begin{equation}\label{intv}
V_g(x) = V_f(x) = \int_0^\infty e^{Lt}f(x) \,\id t
\end{equation}
so then 
\begin{equation}
V_f(x) = \lim_{b \to 0}\dfrac{-1}{(L - b I)}f(x)\end{equation}
where $\dfrac{1}{L-b I}$ is the resolvent-inverse of the backward generator $L$ (see \cite{drazin}).\\

For our purposes, the set $\cal V$ is the vertex set of a connected graph $G$, and the transitions happen over its edges.  We use \cite{che1,MForest1,MForest2} to obtain a graphical representation of $V_f$; put $V:= V_f$. A spanning forest is a collection of trees that forms a spanning subgraph.                                                                        
Define the set $\cal F^{x \to y}_m$ to be the set of all spanning forests in $G$ with $m$ edges having the properties: every tree in the forest is a rooted tree, $y$ is the root of one of the trees and $x$ and $y$ are in the same tree (so there is a path from $x$ to $y$). $F_m^{xy}$ denotes an element from the set $\cal F^{x \to y}_m$.
Define $\cal F_m$ as the union of sets  $\cal F^{x \to x}_m$ in graph $G$.

\begin{proposition}
\begin{equation}V(x) = \sum_{y} \dfrac{ w\left(\cal F_{n-2}^{x\rightarrow y}\right)}{w\left(\cal F_{n-1}\right)}f(y).\end{equation}
\end{proposition}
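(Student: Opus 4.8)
The plan is to extract the behaviour of the resolvent $(bI-L)^{-1}$ as $b\downarrow 0$ from a forest expansion of its numerator and denominator, and then let centrality do the cancelling. Starting from \eqref{intv}, I would write
\begin{equation*}
V(x)=\lim_{b\downarrow 0}\big[(bI-L)^{-1}f\big](x)=\lim_{b\downarrow 0}\sum_{y}\big[(bI-L)^{-1}\big]_{xy}\,f(y),
\end{equation*}
and apply Cramer's rule $[(bI-L)^{-1}]_{xy}=\operatorname{adj}(bI-L)_{xy}/\det(bI-L)$. The whole problem then reduces to the small-$b$ asymptotics of a single adjugate entry and of the determinant, after which the sum over $y$ will be controlled by the centrality $\langle f\rangle=0$ (recall $f=g-\langle g\rangle$).

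Next I would invoke the directed matrix-forest theorem (the all-minors matrix-tree theorem of \cite{che1,MForest1,MForest2}), with every edge of a spanning rooted forest oriented toward its root and $w(F)=\prod_{\overline e}k(\overline e)$. It gives the two polynomial identities
\begin{equation*}
\det(bI-L)=\sum_{j=0}^{n-1} b^{\,n-j}\,w(\cal F_{j}),\qquad
\operatorname{adj}(bI-L)_{xy}=\sum_{j=0}^{n-1}b^{\,n-1-j}\,w(\cal F^{x\to y}_{j}),
\end{equation*}
with $n=|\cal V|$, where the coefficient of $b^{\,n-1-j}$ in the $(x,y)$ adjugate entry counts exactly the spanning rooted forests with $j$ edges in which $y$ is a root and $x$ lies in the tree rooted at $y$ — this being the definition of $\cal F^{x\to y}_{j}$. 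Keeping the two lowest powers of $b$, and using that a forest with $n-1$ edges is a spanning tree so that $w(\cal F_{n-1})=W$ and $w(\cal F^{x\to y}_{n-1})=\sum_{T}w(T_y)=:w(y)$ with $\rho^s(y)=w(y)/W$ by \eqref{kir},
\begin{equation*}
\det(bI-L)=b\,w(\cal F_{n-1})+b^{2}\,w(\cal F_{n-2})+O(b^{3}),\qquad
\operatorname{adj}(bI-L)_{xy}=w(y)+b\,w(\cal F^{x\to y}_{n-2})+O(b^{2}).
\end{equation*}

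Finally I would divide the two expansions and read off the Laurent series,
\begin{equation*}
\big[(bI-L)^{-1}\big]_{xy}=\frac{w(y)}{b\,W}+\frac{w(\cal F^{x\to y}_{n-2})}{W}-\frac{w(y)\,w(\cal F_{n-2})}{W^{2}}+O(b).
\end{equation*}
Summing against $f(y)$ and using $\sum_y w(y)f(y)=W\langle f\rangle=0$, both the pole term $\tfrac1{bW}\sum_y w(y)f(y)$ and the spurious finite term $-\tfrac{w(\cal F_{n-2})}{W^{2}}\sum_y w(y)f(y)$ vanish, so the limit exists and
\begin{equation*}
V(x)=\sum_{y}\frac{w(\cal F^{x\to y}_{n-2})}{W}\,f(y)=\sum_{y}\frac{w(\cal F^{x\to y}_{n-2})}{w(\cal F_{n-1})}\,f(y),
\end{equation*}
which is the claim. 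The main obstacle is the second step: correctly invoking the directed matrix-forest theorem so that the coefficient of each power of $b$ in the adjugate entry is identified with precisely the forest class $\cal F^{x\to y}_{j}$ (orientation toward roots, $y$ a root, $x$ in the component of $y$) and in the determinant with $\cal F_{j}$; once the conventions are pinned down, the remainder is bookkeeping, with centrality invoked twice — to kill the $b^{-1}$ pole and the extra $b^{0}$ contribution.
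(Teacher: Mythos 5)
Your proposal is correct and follows essentially the same route as the paper: both rest on the matrix-forest (Chebotarev--Agaev) expansion of the resolvent in forest polynomials, with centrality $\langle f\rangle=0$ killing the divergent spanning-tree term, the only cosmetic difference being that the paper quotes the ratio identity for $(I-cL)^{-1}$ directly from the cited reference and takes $c\to\infty$, whereas you rederive it via Cramer's rule and expand the Laurent series in $b=1/c$. If anything, your version is slightly more careful, since it makes explicit that centrality is needed twice (for the $b^{-1}$ pole and the spurious $b^{0}$ term), a point the paper's interchange of $\sum_y$ and $\lim_{c\to\infty}$ glosses over.
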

\begin{proof}
From \cite{che1}, for all $c>0$,
\begin{align}
\bigg( \dfrac{1}{I - c L}\bigg)_{xy} = \frac{\sum_{m=0}^{n-1} c ^m w(\cal F^{x \to y}_m)}{\sum_{m=0}^{n-1}c ^m w(\cal F_m)}
\end{align}
where $w(\cal F^{x \to y}_k)$ is the weight of the set $\cal F^{x \to y}_k$
\begin{equation}
w(\cal F^{x \to y}_m)= \sum_{F_m^{xy}\in \cal F^{x \to y}_m }\prod_{(z,z')\in F_k^{xy}}k(z,z').
\end{equation} 
Thus, 
\begin{align}
V(x) & =\sum_{y} \left(\lim_{c \rightarrow \infty} \dfrac{c}{ I-c L}\right)_{xy} f(y) \notag\\
& =\sum_{y} \lim_{c \to \infty} \frac{\sum_{m=0}^{n-1} c^{m+1}w(\cal F^{x \to y}_m)f(y)}{\sum_{m=0}^{n-1}c^m w(\cal F_m)}\notag\\
& =\sum_{y} \lim_{c \to \infty} \frac{\sum_{m=0}^{n-2} c^{m+1}w(\cal F^{x \to y}_m)f(y)}{\sum_{m=0}^{n-1}c^m w(\cal F_m)} + \sum_{y} \lim_{c \to \infty} \frac{ c^{n}w(\cal F^{x \to y}_{n-1})f(y)}{\sum_{m=0}^{n-1}c^mw(\cal F_m)}.
\end{align}
Next, use that $\left<f\right> = 0$ and hence, $\sum_{y} \rho(y)f(y) = 0$.  For the stationary distribution $\rho$ we  use the Kirchhoff formula \eqref{kir}. Therefore, $\sum_y w(\cal F^{x \to y}_{n-1})f(y) = 0$, and the second term in the last line above is equal to zero. To continue the calculation, 
\begin{align}
V(x) & = \sum_{y} \lim_{c \to \infty} \frac{\sum_{m=0}^{n-2} c^{m+1}w(\cal F^{x \to y}_m)f(y)}{\sum_{m=0}^{n-1}c^m w(\cal F_m)} \notag\\
& = \sum_{y} \dfrac{ w\left(\cal F_{n-2}^{x\rightarrow y}\right)}{ w\left(\cal F_{n-1}\right)}f(y).
\end{align}
\end{proof}

Define $\cal F^{x\rightarrow y} := \cal F_{n-2}^{x\rightarrow y}$ as the set of all spanning forests consisting of two trees. Remark that $\cal F_{n-1} $ is the set of all rooted spanning trees and $W$ is the sum over the weights of all rooted spanning trees, so then, $w(\cal F_{n-1})=W$.
\begin{corollary}
    The solution $V$ with $\langle V\rangle=0$ of $LV = -f$ with $\left< f \right>=0$ is given by 
     \begin{equation}\label{potentialforest}
        V(x) = \frac{1}{W} \sum_{y} w\left(\cal F^{x\rightarrow y}\right)f(y). 
     \end{equation}
\end{corollary}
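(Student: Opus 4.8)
The plan is to show that the integral-defined $V_f$ of \eqref{intv} is exactly the unique centered solution of $LV=-f$, and then to translate the formula of the preceding Proposition into the stated notation. First I would confirm that $V_f(x)=\int_0^\infty e^{Lt}f(x)\,\id t$ does solve $LV=-f$. Differentiating under the integral and using $\frac{\id}{\id t}e^{Lt}=Le^{Lt}$ gives $LV_f=\int_0^\infty \frac{\id}{\id t}e^{Lt}f\,\id t=\lim_{t\uparrow\infty}e^{Lt}f-f$. By irreducibility the chain is ergodic with a strictly positive spectral gap, so $e^{Lt}f(x)=\langle f(X_t)\mid X_0=x\rangle\to\langle f\rangle=0$ exponentially fast; this both makes the integral converge and kills the boundary term at infinity, leaving $LV_f=-f$.

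Next I would verify the centering $\langle V_f\rangle=0$. Since $\rho$ is stationary, $\langle e^{Lt}g\rangle=\langle g\rangle$ for every function $g$ and every $t\geq 0$; taking $g=f$ and integrating gives $\langle V_f\rangle=\int_0^\infty\langle f\rangle\,\id t=0$. For uniqueness I would note that any two centered solutions of $LV=-f$ differ by an element of $\ker L$, which by irreducibility is one-dimensional and spanned by the constants; the only centered constant is $0$, so the centered solution is unique. This pins $V_f$ down as the $V$ in the statement.

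Finally, the displayed identity of the Proposition reads $V(x)=\sum_y w(\cal F_{n-2}^{x\to y})f(y)/w(\cal F_{n-1})$ for this same $V_f$. Substituting the abbreviations introduced just before the Corollary, namely $\cal F^{x\to y}:=\cal F_{n-2}^{x\to y}$ together with $w(\cal F_{n-1})=W$ (because $\cal F_{n-1}$ is precisely the set of rooted spanning trees), immediately yields \eqref{potentialforest}. The substantive content is entirely carried by the Proposition; the only mild obstacle is the well-posedness bookkeeping — justifying the vanishing boundary term through the spectral gap and the one-dimensionality of $\ker L$ — after which the Corollary is a direct relabeling.
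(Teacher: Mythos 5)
Your proposal is correct and takes essentially the same route as the paper: the substantive content is entirely carried by the preceding Proposition, and the Corollary follows by the relabeling $\mathcal{F}^{x\rightarrow y}:=\mathcal{F}^{x\rightarrow y}_{n-2}$ together with the identification $w(\mathcal{F}_{n-1})=W$. The well-posedness bookkeeping you add (exponential ergodicity of the finite irreducible chain making the integral converge and killing the boundary term, stationarity giving $\langle V_f\rangle=0$, and uniqueness because $\ker L$ consists of constants) is exactly what the paper leaves implicit, and it is sound.
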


\subsection{Quasipotential for a specific source }
Here we consider the quasipotential of \eqref{potentialforest} for a specific  source. We  focus on the  case  $f(y)=\cal P(y)-\left\langle \cal P\right\rangle $ in  \eqref{potentialforest}, where $\cal P(y)=\sum_x k(y,x)q(y,x)$ and $q(x,y)=-q(y,x)$. We define
\begin{equation}V^\cal P(x):=\dfrac{\sum_y w(\cal F^{x\rightarrow y})\cal P(y)}{W}, \quad V^{\left\langle \cal P \right\rangle}(x):= \left\langle \cal P \right\rangle \dfrac{ \sum_y w(\cal F^{x\rightarrow y})}{ W}.\end{equation}
So then, the quasipotential can be written as 
\begin{equation}\label{vf}
V(x)=V^\cal P(x)-V^{\left\langle \cal P\right\rangle}(x)
\end{equation}
 Proposition \ref{splitv} shows that the quasipotential in \eqref{vf} can be decomposed  into two terms, one related to spanning trees only and the other containing loops.\\
$F^{xy}$ denotes a forest in the set of $\cal F^{x \rightarrow y}$. Write $ k_{yx}:=  k(y,x)$ and $q(y,x):=q_{yx}$.\\
\begin{lemma}\label{forestcase}
	For each $x$ on the graph $G$,
	\begin{equation}
	\sum_y w(\cal F^{x\rightarrow y} )\cal P(y) = \sum_{z,y}	\sum_{\substack{F^{xy} \in \cal F^{x\rightarrow y} \\  (z,y)\notin F^{xy}}}\,w(F^{xy})\, k_{yz}\,q_{yz}.
	\end{equation}
\end{lemma}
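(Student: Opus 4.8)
The plan is to reduce the claimed identity to the vanishing of a single ``diagonal'' sum and then kill that sum with a sign-reversing involution powered by the antisymmetry $q_{yz}=-q_{zy}$.

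First I would expand the left-hand side. Using $\cal P(y)=\sum_z k_{yz}q_{yz}$ together with $w(\cal F^{x\rightarrow y})=\sum_{F^{xy}\in\cal F^{x\rightarrow y}}w(F^{xy})$, the left-hand side becomes $\sum_{y,z}\sum_{F^{xy}\in\cal F^{x\rightarrow y}}w(F^{xy})\,k_{yz}\,q_{yz}$, now an unrestricted sum over all two-tree rooted spanning forests in $\cal F^{x\rightarrow y}$. The right-hand side is the same expression with the inner sum restricted to forests for which $(z,y)\notin F^{xy}$. Hence the claim is equivalent to
\[
S:=\sum_{y,z}\ \sum_{\substack{F^{xy}\in\cal F^{x\rightarrow y}\\ (z,y)\in F^{xy}}} w(F^{xy})\,k_{yz}\,q_{yz}=0,
\]
the remaining contribution coming precisely from forests that do contain the edge $z\to y$.

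Next I would construct an involution on the triples $(y,z,F^{xy})$ indexing $S$. Given such a triple with $(z,y)\in F^{xy}$, the vertex $y$ is the root of the tree carrying $x$, and $z$ is its child through the edge $z\to y$. I delete $(z,y)$, obtaining a three-tree forest $\Phi$ in which $y$ and $z$ are roots of distinct components, and then insert the reversed edge $(y,z)$ to form $\tilde F$. Since $y$ had no outgoing edge as a root, after the swap $z$ becomes the root and $\tilde F\in\cal F^{x\rightarrow z}$ with $(y,z)\in\tilde F$; the assignment $(y,z,F^{xy})\mapsto(z,y,\tilde F)$ is its own inverse and has no fixed point because the simple graph forbids $y=z$. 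Writing $\Phi=F^{xy}\setminus\{(z,y)\}$, the connecting-edge weights factor out as $w(F^{xy})=k_{zy}\,w(\Phi)$ and $w(\tilde F)=k_{yz}\,w(\Phi)$, so the two terms of an orbit are $k_{zy}k_{yz}\,w(\Phi)\,q_{yz}$ and $k_{yz}k_{zy}\,w(\Phi)\,q_{zy}$, which cancel by $q_{zy}=-q_{yz}$. Summing over orbits yields $S=0$ and hence the lemma.

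The step I expect to be the main obstacle is verifying that the involution is well defined, i.e.\ that excising $z\to y$ and splicing in $y\to z$ always returns a genuine element of $\cal F^{x\rightarrow z}$. No cycle appears and exactly two trees remain because $y$ and $z$ sit in different components of $\Phi$, so the only delicate point is to track which component contains $x$ after the cut: if $x$ stays with $y$ it is reattached to the tree of $z$ through the new edge, while if $x$ already moved to the component of $z$ it simply remains there. Both cases leave $x$ in the root tree of $\tilde F$, which closes the argument.
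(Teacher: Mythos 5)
Your proof is correct and follows essentially the same route as the paper's: the paper also cancels the contribution of forests containing the reverse edge $(z,y)$ by pairing each such $F^{xy}$ with the edge-reversed forest $F^{xz}\in\cal F^{x\rightarrow z}$ (observing that adding $(y,z)$ to the former gives the same weighted object as adding $(z,y)$ to the latter) and invoking $q(y,z)=-q(z,y)$. Your write-up merely makes this pairing explicit as a fixed-point-free sign-reversing involution and verifies its well-definedness (the three-tree intermediate forest and the location of $x$), which the paper handles more informally via a figure.
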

\begin{proof}
     The product $w(F^{xy})\, k_{yz}$ is the weight when adding $(y,z)$ to the forest $F^{xy}$. Let us consider  forests $F^{xy}$ and $F^{xz}$ which have different directions for the edge $\{z,y\}$: $(z,y)$ is in the forest $F^{xy}\in \cal F^{x\rightarrow y}  $ and the edge $(y,z)$ is in the forest $F^{xz}\in \cal F^{x\rightarrow z}  $ .  Adding the edge $(y,z)$ to the forest $F^{xy}$ is the same as adding the edge $(z,y) $ to the forest $F^{xz}$; see Fig.~\ref{lem4}.
\begin{figure}[H]
     \centering
     \begin{subfigure}{0.49\textwidth}
         \centering
         \includegraphics[scale = 0.5]{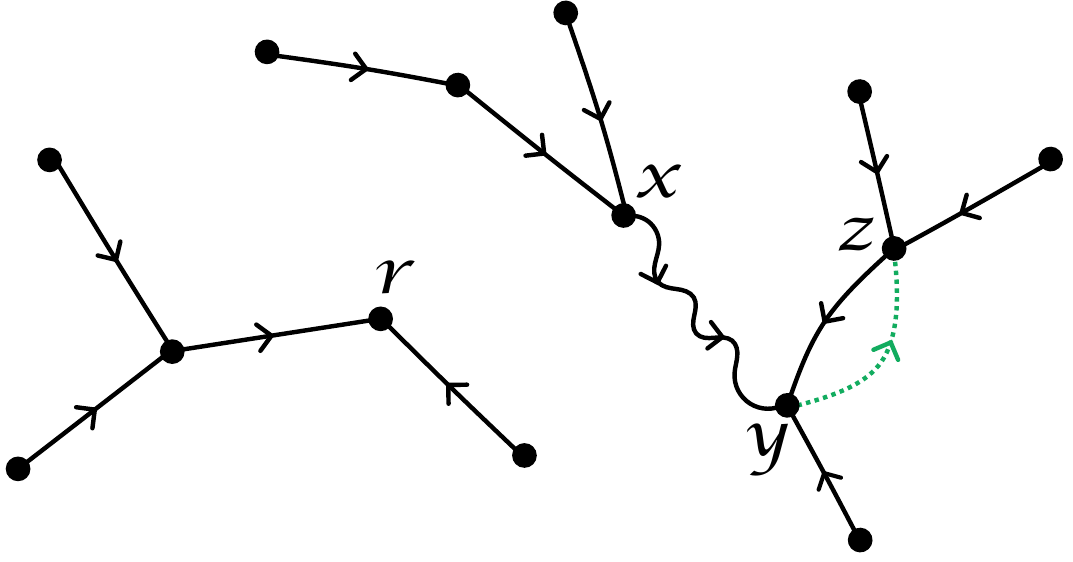}
         \caption{$F^{xy} \cup (y,z)$}
     \end{subfigure}
     \hfill
     \begin{subfigure}{0.49\textwidth}
         \centering
         \includegraphics[scale = 0.5]{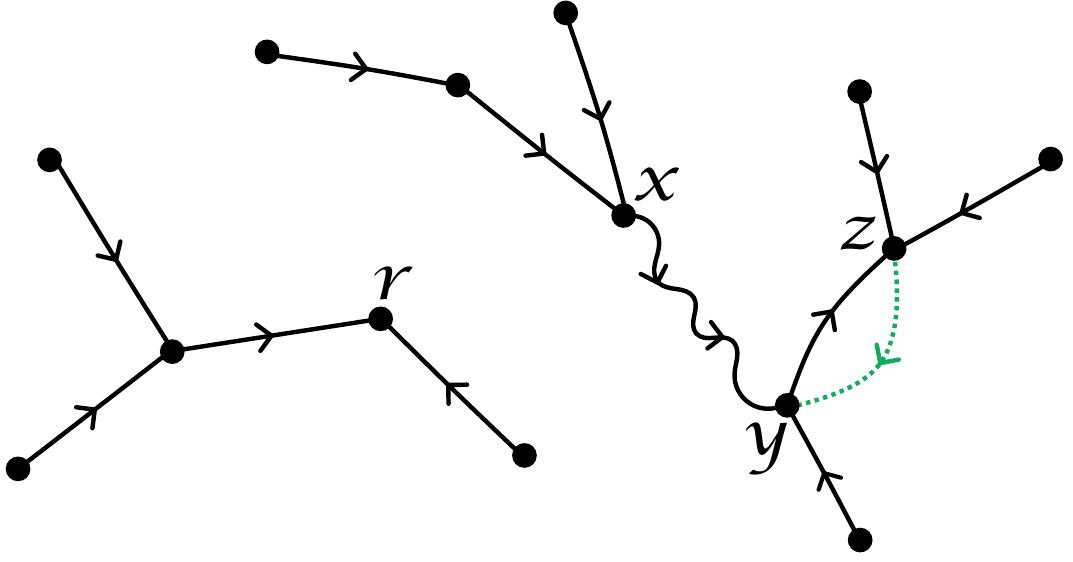}
         \caption{$F^{xz} \cup (z,y)$}
     \end{subfigure}
        \caption{\small{Adding the edge $(y,z)$ to $F^{xz}  $ and adding the edge $(y,z)$ to the forest $F^{xy}$.}}\label{lem4}
\end{figure}
and
 \begin{equation}w(F^{xy})\,k(y,z)q(y,z)\, +\,  w(F^{xz})\,k(z,y)q(z,y)=0.
 \end{equation}
So then
\begin{align}
 \sum_y w(\cal F^{x\rightarrow y} )\cal P(y) &=\sum_y w(\cal F^{x\rightarrow y} )\sum_zk_{yz}q_{yz}\notag\\
 &=\sum_{z,y}	\sum_{\substack{F^{xy} \in \cal F^{x\rightarrow y} \\  (z,y)\notin F^{xy}}}\,w(F^{xy})\, k_{yz}\,q_{yz}.
\end{align}
\end{proof}

\begin{lemma}\label{twocases}
By adding the edge $(y,z) \in G$ to the forest $F^{xy} \in \cal F^{x\rightarrow y}$ where $(z,y)\notin F^{xy}$, the new graph is either a  rooted spanning tree or an  oriented  tree-loop-tree.
\end{lemma}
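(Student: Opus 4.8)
The plan is to read off the structure of $F^{xy}$ from out-degrees and then track how inserting the oriented edge $(y,z)$ changes them. Since $F^{xy}\in\cal F^{x\rightarrow y}=\cal F_{n-2}^{x\rightarrow y}$ is a spanning forest with $n-2$ edges on $n$ vertices, it has exactly two components, each a rooted tree; write $T_1$ for the one containing both $x$ and $y$ (rooted at $y$ by definition of $\cal F^{x\rightarrow y}$) and $T_2$ for the other (rooted at some $r$). In a tree directed toward its root every vertex has exactly one outgoing edge except the root, which has none. Thus $F^{xy}$ has precisely two vertices of out-degree $0$, namely the roots $y$ and $r$, and all remaining vertices have out-degree $1$. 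In particular $y$ carries no outgoing edge, so $(y,z)\notin F^{xy}$ holds automatically, and adding $(y,z)$ raises the out-degree of $y$ to $1$, leaving $r$ as the unique vertex of out-degree $0$ and all others of out-degree $1$. The total number of edges becomes $n-1$.

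I would then distinguish two cases according to where $z$ lies, using the elementary structural fact that a weakly connected oriented graph in which every vertex has out-degree $1$ is exactly an oriented tree-loop (a single directed cycle with rooted trees feeding into it, as in Definition \ref{tl}), while one with a single out-degree-$0$ vertex and all others of out-degree $1$ is exactly a rooted tree. If $z\in T_2$, the new edge merges the two components into one, which then contains the unique sink $r$ with all other vertices of out-degree $1$; this is a rooted spanning tree with root $r$. If instead $z\in T_1$, the two components persist: the component $T_1\cup\{(y,z)\}$ has every vertex of out-degree $1$, hence is an oriented tree-loop, while $T_2$ still contains the unique sink $r$ and so remains a rooted tree. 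By Definition \ref{tlt} their union is then an oriented tree-loop-tree.

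The one point needing care is the hypothesis $(z,y)\notin F^{xy}$, which I would invoke in the second case to guarantee that the directed path $z\to\cdots\to y$ inside $T_1$ (all edges oriented toward the root $y$) has length at least two, so that closing it with $(y,z)$ produces a genuine loop of length $\geq 3$ rather than a degenerate back-and-forth along the single undirected edge $\{y,z\}$. The main obstacle, then, is purely the orientation bookkeeping: checking that the cycle $z\to\cdots\to y\to z$ is coherently directed and that the off-loop vertices of $T_1$ keep directed paths terminating on this loop, so that $T_1\cup\{(y,z)\}$ is a bona fide \emph{oriented} tree-loop. The edge count $n-2\mapsto n-1$, matching the $n-1$ edges of both a rooted spanning tree and a spanning tree-loop-tree, serves as a sanity check that no further possibility can arise.
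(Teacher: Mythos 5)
Your proof is correct and takes essentially the same route as the paper's: a case split on whether $z$ lies in the component of $y$ (giving an oriented tree-loop-tree) or in the other tree (giving a rooted spanning tree). The paper's own proof is a terser version of this argument that appeals to figures, so your out-degree bookkeeping and your explicit use of the hypothesis $(z,y)\notin F^{xy}$ to exclude a degenerate $2$-cycle add rigor rather than a genuinely different approach.
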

\begin{proof}
Consider the forest $F^{xy}$.  It has two trees: one is rooted in $y$ and the other tree is rooted in some vertex $r$. Vertices $x$ and $y$ are located in a same tree and the edge $(z,y)$ is not in the forest. If the vertex $z$ is on the same tree with $y$, adding the edge $(y,z)$  creates  an oriented  tree-loop-tree,  Fig.~\ref{lem}(a), and if $z$ is in another tree then  a rooted  spanning tree is created,  Fig.~\ref{lem}(b). 
\begin{figure}[H]
     \centering
     \begin{subfigure}{0.49\textwidth}
         \centering
         \includegraphics[scale = 0.5]{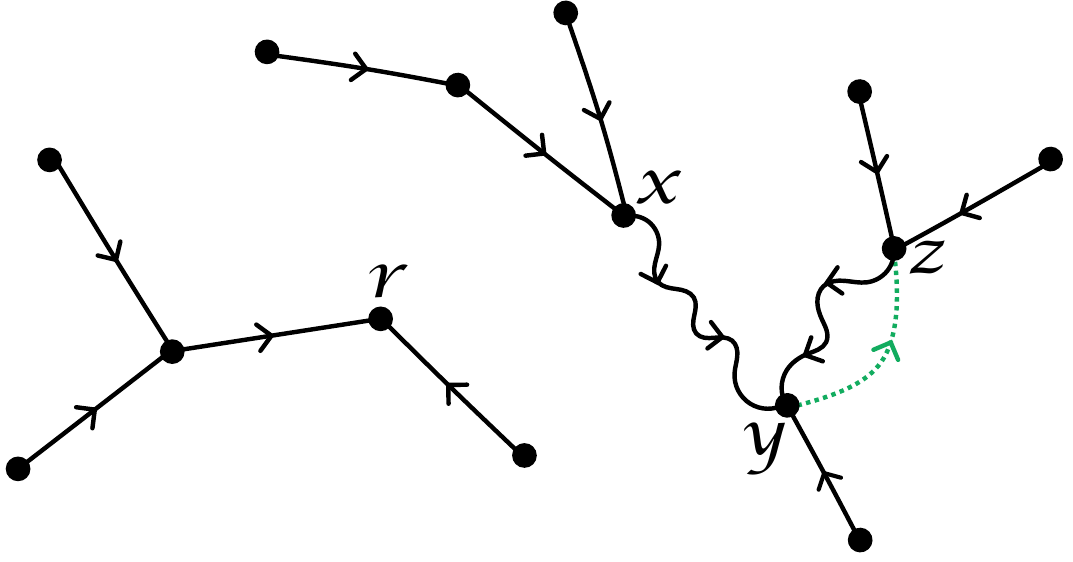}
         \caption{$F^{xy} \cup (y,z)$}
     \end{subfigure}
     \hfill
     \begin{subfigure}{0.49\textwidth}
         \centering
         \includegraphics[scale = 0.5]{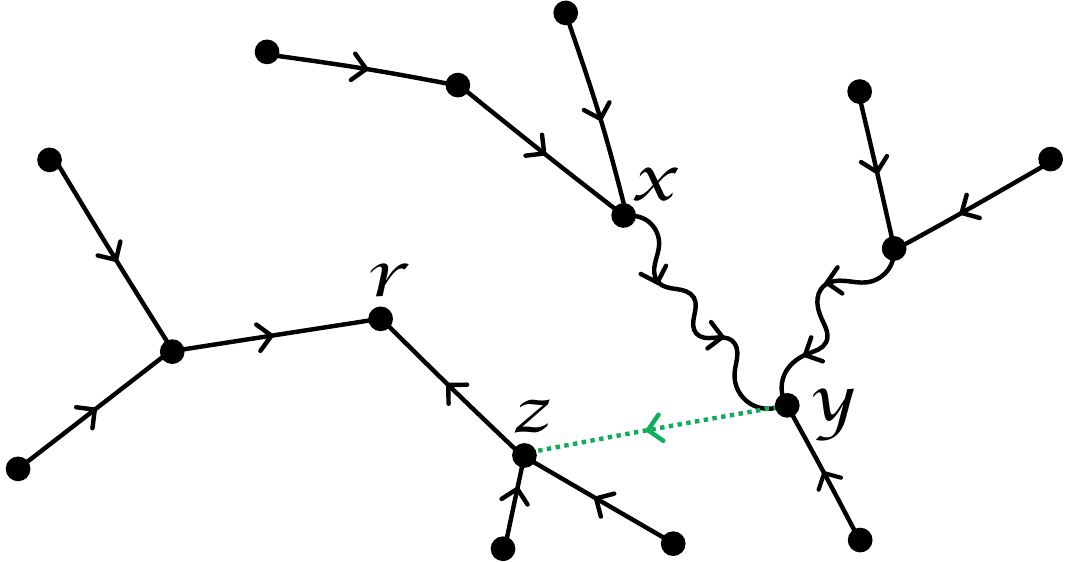}
         \caption{$F^{xy} \cup (y,z)$}
     \end{subfigure}
    \caption {\small{The edge $(y,z)$ is added to the forest $F^{xy}$, either an oriented tree-loop-tree is created (a) or a spanning rooted tree (b).} }\label{lem}
\end{figure}
\end{proof}

We need extra notation for the next Lemma.  Put $\cal K_x$ for the set of all   tree-loop-trees such that $x$ is located on the tree-loop part. If $\kappa \in \cal K _x$, then $O(\kappa)$ is the set of oriented  tree-loop-trees for all possible directions in $\kappa$ and $\overline{\kappa}$ denotes an element from the set $O(\kappa)$. $\cal L$ is the set of all non-oriented loops in the graph $G$ and $\cal K_{\ell,x}$ denotes the set of all   tree-loop-trees including the non-oriented loop $\ell$ and $x$ in the tree-loop part. 
\begin{lemma}\label{lemma2}
	 $V^\cal P(x)$ splits into two parts; one where we only sum over spanning trees and one where loops are present:
	\begin{equation}\label{split}
	V^\cal P(x)= V^\cal P_{\text{tree}}(x) + V^\cal P_{\text{loop}}(x).
	\end{equation}
	Here, the first part is equal to 
\begin{align}
V^\cal P_{\text{tree}}(x) = \dfrac{1}{W}\; \sum_{y\in \cal V}\sum_{T\in \cal T } w(T_y)\,q_{T }(x\rightarrow y )
\end{align}
 and
\begin{equation}
q_{T }(x\rightarrow y ):= \sum_{(u,u')\in (x\xrightarrow{T }y)}q(u,u')
\end{equation} 
where $(x\xrightarrow{T }y)$ is the path from $x$ to $y$ in the spanning tree $T$. The second part is equal to
\begin{align}
V^\cal P_{\text{loop}}(x) =\dfrac{1}{W}\,\sum_{\ell\in \cal L}\sum_{\kappa \in\cal K_{\ell,x}}\,\sum_{\overline{\kappa} \in O(\kappa ) }w(\overline{\kappa})q(\overline{\ell} )
\end{align}
where $\overline{\ell}$ is  same  as the loop in $\overline{\kappa}$ and $q(\overline{\ell})=\sum_{(u,u')\in \overline{\ell}}q(u,u')$.
\end{lemma}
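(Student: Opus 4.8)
The plan is to start from the identity in Lemma~\ref{forestcase},
\begin{equation*}
\sum_y w(\cal F^{x\rightarrow y})\cal P(y) = \sum_{z,y}\sum_{\substack{F^{xy}\in\cal F^{x\rightarrow y}\\(z,y)\notin F^{xy}}} w(F^{xy})\,k_{yz}\,q_{yz},
\end{equation*}
and to read the product $w(F^{xy})\,k_{yz}$ as the weight of the oriented graph obtained by adjoining the edge $(y,z)$ to the forest $F^{xy}$. By Lemma~\ref{twocases} that graph is either a rooted spanning tree or an oriented tree-loop-tree, and this dichotomy partitions the right-hand side into exactly two sums, which I would identify with $W\,V^\cal P_{\text{tree}}(x)$ and $W\,V^\cal P_{\text{loop}}(x)$ respectively.

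For the tree case, adjoining $(y,z)$ merges the two trees of $F^{xy}$ (the one rooted at $y$ and containing $x$, and the one rooted at some $r$) into a single spanning tree $T_r$, with weight $w(T_r)=w(F^{xy})\,k_{yz}$. Conversely, every rooted spanning tree $T_r$ arises this way by deleting any one of its edges $(y,z)$, subject to the constraint that $x$ lie in the component containing $y$. Thus the tree contribution equals $\sum_{r}\sum_{T_r}\sum_{(y,z)} w(T_r)\,q(y,z)$, the inner sum running over edges of $T_r$ whose removal leaves $x$ on the $y$-side. Since an edge $(y,z)$ of $T_r$ meets this constraint precisely when it lies on the directed path from $x$ to the root $r$, summing $q(y,z)$ over these edges reconstructs the path sum $q_{T}(x\rightarrow r)$, yielding exactly $\sum_{r}\sum_{T_r} w(T_r)\,q_T(x\rightarrow r)=W\,V^\cal P_{\text{tree}}(x)$.

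For the loop case, adjoining $(y,z)$ when $z$ already lies in the tree rooted at $y$ closes the directed path from $z$ back to $y$ into an oriented loop $\overline{\ell}$, producing an oriented tree-loop-tree $\overline{\kappa}$ with $x$ in its tree-loop part and $w(\overline{\kappa})=w(F^{xy})\,k_{yz}$. The crucial observation is that a fixed $\overline{\kappa}$ whose loop has length $m$ is generated exactly $m$ times: deleting any one of the $m$ loop edges opens the loop into a tree rooted at the tail of the deleted edge, and since $x$ sits in the tree-loop part it automatically lands in the correct tree for each such deletion. Each deletion contributes $w(\overline{\kappa})$ times the heat $q$ of one loop edge, so the $m$ contributions sum to $w(\overline{\kappa})\,q(\overline{\ell})$. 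Summing over all oriented tree-loop-trees with $x$ on the tree-loop part, organized by the underlying loop $\ell\in\cal L$ and the orientations $O(\kappa)$ for $\kappa\in\cal K_{\ell,x}$, gives $W\,V^\cal P_{\text{loop}}(x)$, and combining the two cases proves \eqref{split}.

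The main obstacle is the bookkeeping that turns both correspondences into exact bijections rather than mere inequalities. In the tree case one must verify that ``$x$ on the $y$-side after deleting $(y,z)$'' is equivalent to ``$(y,z)$ lies on the path from $x$ to $r$,'' so that the edgewise heats assemble into the full path heat $q_T(x\rightarrow r)$ with neither over- nor under-counting. In the loop case one must confirm that each of the $m$ loop-edge deletions returns a legitimate forest in $\cal F^{x\rightarrow\cdot}$ with $x$ correctly placed, so that every $\overline{\kappa}$ is counted with total weight $w(\overline{\kappa})\,q(\overline{\ell})$. Both steps rely on the orientation conventions---edges directed toward roots and consistently around the loop---fixed in Definitions~\ref{tl} and~\ref{tlt}.
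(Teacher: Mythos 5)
Your proposal is correct and takes essentially the same route as the paper's own proof: it starts from Lemma~\ref{forestcase}, splits the sum via the dichotomy of Lemma~\ref{twocases}, and establishes the same two weight identities --- that each rooted spanning tree $T_r$ collects the heats $q(y,z)$ exactly over the edges of the path from $x$ to the root, giving $w(T_r)\,q_T(x\rightarrow r)$, and that each oriented tree-loop-tree $\overline{\kappa}$ with $x$ in its tree-loop part is produced once per loop edge, so its contributions assemble into $w(\overline{\kappa})\,q(\overline{\ell})$. The bijective bookkeeping you single out as the main obstacle is precisely the content of the paper's intermediate relations, so nothing is missing.
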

\begin{proof}
Consider the first case in Lemma \ref{twocases}; there are different possible  tree-loop-trees for every  loop in the graph $G$ and it follows there are different possible oriented tree-loop-tree for every oriented loop. Take an  edge $(u,u')$ located on the  (oriented) loop of the oriented tree-loop-tree, rremoving the edge $(u,u')$ from the oriented tree-loop-tree gives a forest where $u$ and $u'$ are in the same connected component. There is a forest $F^{xu}$ such that $u'\in F^{xu} $ and $(u,u') \notin F^{xu}$, and 
\begin{equation}
\sum_{(u,u')\in \ell}w(F^{xu})k_{uu'}q_{uu'}=w(\overline{\kappa})q(\overline{\ell}).
\end{equation}
 Let us go back to Lemma \ref{twocases}.  For a rooted spanning  tree $T_y$ in graph $G$, there is a unique path $(x\xrightarrow{T }y)$ from $x$ to $y$ on this tree.  Then, 
\begin{equation}\label{treeforest}
    w(T_y)q_{T}(x\rightarrow y)=\sum_{(u,u')\,\in (x\xrightarrow{T }y)}w(F^{xu})k_{uu'}q_{uu'}
\end{equation}
where $F^{xu}\in \cal F^{x\rightarrow u}$. We have used the fact that by removing  an edge $\{u,u'\}$ located on the path  $(x\xrightarrow{T }y)$ in spanning tree $T_y$, a forest including two trees, $ \tau_u$ (toward $u$) and $\tau_y $, is created. Obviously, the edge $\{u,u'\} \notin F^{xu} $ and there is no other path between $u$ and $u'$ in the new forest. So then, the sum over all possible oriented  spanning trees in the left-hand side of  relation \eqref{treeforest} will give the weight of  all possible forests such that 
\begin{align}
  \sum_{y\in \cal V}\sum_{T \in \cal T}w(T_y)q_{T}(x\rightarrow y)&=\sum_{y\in \cal V}\sum_{z\in A_{F^{xy}}}w(\cal F^{x\rightarrow y})k_{yz}q_{yz} 
\end{align}
where
\begin{equation} A_{F^{xy}} = \{z \in \cal V ; \quad \nexists \,  (z\xrightarrow{F^{x y}}y) \in  F^{x y}\} \end{equation}
is the set of all vertices  from which there is no path to $y$ in the forest $F^{xy}$. Hence, according to the first case in Lemma \ref{twocases}, the tree-term of $V^\cal P$ is
\begin{equation}
V^\cal P_\text{tree}(x) = \dfrac{1}{W}\; \sum_{y\in \cal V}\sum_{T\in \cal T} w(T_y)\,q_{T}(x\rightarrow y ).
\end{equation}
\end{proof} 
We are ready for the main result of this Appendix.
\begin{proposition}\label{splitv}
	The graphical representation of the quasipotential in \eqref{potentialforest} for  
	\begin{equation} 
	f(y)=\cal P(y)-\left\langle \cal P \right\rangle,  \quad \cal P(y)=\sum_x k(y,x)q(y,x), \quad q(x,y)=-q(y,x)
	\end{equation} 
	consists of two distinct classes of terms, trees and loops: 
	\begin{align*} 
	V(x) &= V_{\text {tree}}(x)+V_{\text{loop}}(x)
	\end{align*}
	where 
	\begin{align} 
	& V_\text{tree}(x) = \dfrac{1}{W}\; \sum_{y\in \cal V}\sum_{T\in \cal T} w(T_y)\,q_{T}(x\rightarrow y ) \label{vtree}\\
	&V_{\text{loop}}(x)=\dfrac{1}{W}\sum_{\ell \in \cal L}\,\sum_{\kappa \in\cal K_{\ell,x}}\,\sum_{\overline{\kappa} \in O(\kappa ) }w(\overline{\kappa})q(\overline{\ell} )
 -\frac{1}{W}\sum_y w(\cal F^{x\rightarrow y})\left\langle \cal P \right\rangle\label{vloop}.
	\end{align}
\end{proposition}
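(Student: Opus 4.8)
The plan is to obtain Proposition \ref{splitv} as the assembly of the forest representation \eqref{potentialforest} with the decomposition already secured in Lemma \ref{lemma2}. The starting point is the identity \eqref{vf}, which for the source $f(y) = \cal P(y) - \langle \cal P\rangle$ splits the quasipotential as $V(x) = V^\cal P(x) - V^{\langle \cal P\rangle}(x)$, where the centering piece is $V^{\langle \cal P\rangle}(x) = \langle \cal P\rangle\,\frac{1}{W}\sum_y w(\cal F^{x\to y})$ and the power piece is $V^\cal P(x) = \frac{1}{W}\sum_y w(\cal F^{x\to y})\cal P(y)$. The combinatorial heart is contained entirely in Lemma \ref{lemma2}, which shows that $V^\cal P$ itself separates into $V^\cal P = V^\cal P_{\text{tree}} + V^\cal P_{\text{loop}}$. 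I would simply invoke it and set $V_{\text{tree}}(x) := V^\cal P_{\text{tree}}(x)$, which is precisely \eqref{vtree}, while folding the centering constant into the loop component by defining $V_{\text{loop}}(x) := V^\cal P_{\text{loop}}(x) - V^{\langle \cal P\rangle}(x)$; substituting the explicit form of $V^{\langle \cal P\rangle}$ then reproduces \eqref{vloop}. At the level of the proposition this is pure bookkeeping, and no new combinatorics enters.

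If instead one had to prove the statement without Lemma \ref{lemma2}, the genuine work is the reorganization of the forest sum, which I would carry out in two stages. First, using Lemma \ref{forestcase} together with the antisymmetry $q(y,z) = -q(z,y)$, I would rewrite $\sum_y w(\cal F^{x\to y})\cal P(y)$ as a sum over two-tree forests $F^{xy}$ with one extra edge $(y,z)$ adjoined, the oppositely oriented pairs cancelling. Second, I would apply the dichotomy of Lemma \ref{twocases}: adjoining $(y,z)$ produces either a rooted spanning tree, when $z$ sits in the tree not containing $y$, or an oriented tree-loop-tree, when $z$ sits in the same tree as $y$. Collecting the spanning-tree outcomes and matching each path contribution via \eqref{treeforest} reassembles $V^\cal P_{\text{tree}}$, while collecting the tree-loop-tree outcomes and summing the adjoined edge over each oriented loop yields $V^\cal P_{\text{loop}}$.

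I expect the main obstacle to be verifying that the edge-addition map of Lemma \ref{twocases} is weight-preserving and exhaustive in both stages: that every oriented tree-loop-tree in $\bigcup_{\ell}\bigcup_{\kappa\in\cal K_{\ell,x}} O(\kappa)$ arises exactly once with the correct weight $w(\overline{\kappa})\,q(\overline{\ell})$, and that the spanning-tree contributions reassemble into complete path sums $q_T(x\to y)$ without double counting. Since Lemmas \ref{forestcase}--\ref{lemma2} already establish these bijections, the only point left at the level of Proposition \ref{splitv} is the harmless convention of attaching the term $-V^{\langle \cal P\rangle}$ to the loop component rather than distributing it, which leaves $V_{\text{tree}}$ in the clean form \eqref{vtree} and absorbs the centering into \eqref{vloop}.
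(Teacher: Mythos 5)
Your proposal is correct and takes essentially the same approach as the paper: the paper's own proof is precisely the bookkeeping assembly you describe, invoking \eqref{vf} to split $V(x) = V^{\mathcal P}(x) - V^{\langle \mathcal P\rangle}(x)$, applying Lemma \ref{lemma2} to decompose $V^{\mathcal P}$ into tree and loop parts, and absorbing the centering term $-V^{\langle \mathcal P\rangle}$ into \eqref{vloop}. Your supplementary sketch of how one would re-derive the combinatorics without Lemma \ref{lemma2} (via Lemma \ref{forestcase}, the dichotomy of Lemma \ref{twocases}, and the path identity \eqref{treeforest}) is exactly the content of the paper's proof of that lemma, so nothing is missing.
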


\begin{proof}
From Eq.~\eqref{vf} we get that 
\begin{equation}
V(x) = V^\cal P(x) - V^{\left<\cal P\right>}(x).
\end{equation}
Now use the representation of $V^\cal P(x)$ from Lemma \ref{lemma2}. The representation of $\left\langle \cal P \right\rangle$ is given in Lemma \ref{aveg}. 
\end{proof}

\begin{lemma}\label{diffforest}
Consider a connected graph $G$ such that the  edge $(x,x')\in \cal E(G)$ then 
\begin{equation}
\sum_z w(\cal F^{x\rightarrow z})-\sum_y w(\cal F^{x'\rightarrow y})=0.
\end{equation}
\end{lemma}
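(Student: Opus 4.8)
The plan is to recognize that the quantity $\sum_z w(\cal F^{x\rightarrow z})$ does not in fact depend on the base vertex $x$ at all, so that the asserted difference vanishes identically; the adjacency of $x$ and $x'$, and the connectivity of $G$, are only inherited from the context of differences taken across an edge (recall that this lemma is precisely what makes the normalization term $V^{\langle \cal P\rangle}(x)$ in \eqref{vf} independent of $x$, so that it drops out of any difference-quasipotential).

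First I would recall that, by definition, $\cal F^{x\rightarrow z}$ is the set of rooted spanning forests with exactly two trees in which the tree containing $x$ is rooted at $z$, while the root of the \emph{second} tree is left free; its weight $w(\cal F^{x\rightarrow z}) = \sum_{F\in \cal F^{x\rightarrow z}} w(F)$ is the total weight of all such forests. The central observation is then a partition argument: every rooted two-tree spanning forest $F$ of $G$ contains $x$ in exactly one of its two trees, and that tree carries a unique root. Classifying each $F$ according to that root $z$ shows that the family $\{\cal F^{x\rightarrow z}\}_{z\in \cal V}$ partitions the collection of \emph{all} rooted two-tree spanning forests of $G$. Consequently $\sum_z w(\cal F^{x\rightarrow z})$ equals the total weight $\Sigma := \sum_F w(F)$ of all rooted two-tree spanning forests, an expression in which $x$ no longer appears. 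The identical argument applied to $x'$ gives $\sum_y w(\cal F^{x'\rightarrow y}) = \Sigma$ as well, and subtracting the two yields zero, which is the claim.

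The main (and essentially only) delicate point is verifying the partition, i.e.\ that the classification ``by the root of the tree containing $x$'' is both exhaustive and free of double counting: every rooted two-tree forest lands in exactly one $\cal F^{x\rightarrow z}$ (its $x$-tree has a single root), and none is omitted (the root $z$ necessarily lies in $\cal V$, with empty sets $\cal F^{x\rightarrow z}$ simply contributing $0$). Since this bookkeeping is manifestly symmetric in $x$ and $x'$, no use of the edge $(x,x')$ is actually required; should a strictly combinatorial phrasing be preferred, the identity on the set of rooted two-tree forests furnishes a weight-preserving bijection between $\bigcup_z \cal F^{x\rightarrow z}$ and $\bigcup_y \cal F^{x'\rightarrow y}$ directly, again forcing the two sums to coincide.
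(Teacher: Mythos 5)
Your proof is correct and is essentially the paper's own argument: the paper likewise observes that every rooted two-tree spanning forest lies in exactly one $\mathcal{F}^{x\rightarrow z}$ (namely with $z$ the root of the tree containing $x$), so that the identity map gives a weight-preserving correspondence between $\bigcup_z \mathcal{F}^{x\rightarrow z}$ and $\bigcup_y \mathcal{F}^{x'\rightarrow y}$ --- phrased there as a case analysis on whether $x'$ sits in the $x$-tree or in the other tree, rather than via your common total $\Sigma$. Your side remark that the adjacency of $x,x'$ is never used is also consistent with the paper, whose proof makes no use of the edge $(x,x')$.
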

\begin{proof}
 Take arbitrary $z$ and $F \in \cal F^{x \to z}$. Then $F$ consists of two disconnected trees $\tau_1$ and $\tau_2$, where $\tau_1$ is a tree rooted in $z$, $x \in \tau_1$ and $\tau_2$ is a tree rooted in some vertex $r$. There are two possibilities, $x'\in \tau_1$ or $x' \in \tau_2$. If $x' \in  \tau_1$, then also $F \in \cal F^{x' \to z}$. If $x' \in \tau_2$, then $F \in \cal F^{x'\to r}$. In the same way, if we fix $y$, every forests $F \in \cal F^{x' \to y}$ corresponds to a forests $F \in \cal F^{x \to z}$ for some $z$. This is a one-to-one correspondence.
\end{proof}

\subsection{Difference-quasipotential on an edge}
Consider an edge $e:=\{x,y\}$.  We write $\overline{e}:=(x,y)$ when it has a direction. Put $V(\overline{e}):=V(x)-V(y)$.  From Proposition \eqref{splitv}, 
\begin{align}
V(\overline{e})=  V_{\text {tree}}(\overline{e})+ V_{\text{loop}}(\overline{e}).
	\end{align}
Recall \eqref{vtree} 
\begin{align}
    V_\text{tree}(\overline{e}) &= \dfrac{1}{W}\; \sum_{u\in \cal V}\sum_{T\in \cal T} w(T_u)\,q_{T}(x\rightarrow u )-\dfrac{1}{W}\; \sum_{u\in \cal V}\sum_{T\in \cal T} w(T_u)\,q_{T}(y\rightarrow u )\notag\\
    &=\dfrac{1}{W}\; \sum_{u\in \cal V}\sum_{T\in \cal T} w(T_u)\big(q_{T}(x\rightarrow u )+q_{T}(u\rightarrow y )\big)\notag\\
    &=\dfrac{1}{W}\; \sum_{u\in \cal V}\sum_{T\in \cal T} w(T_u)\,q_{T}(x\rightarrow y ).
\end{align}
According to Proposition \ref{splitv} and  Lemma \ref{diffforest} the difference of loop terms is given as follows:

\begin{align}\label{diffloopv1}
V_{\text {loop}}(\overline{e})=\dfrac{1}{W}\sum_{\ell \in \cal L}\sum_{\kappa \in\cal K_{\ell,x}}\,\sum_{\overline{\kappa} \in O(\kappa) } w(\overline{\kappa} )q(\overline{\ell} )-\dfrac{1}{W}\sum_{\ell \in \cal L}\sum_{\kappa \in\cal K_{\ell,y}}\,\sum_{\overline{\kappa}\in O(\kappa ) }w(\overline{\kappa} )q(\overline{\ell} )
	\end{align}
where $\cal K_{\ell,x}$ denotes the set of all  tree-loop-trees such that $x$ is located in the tree-loop part. Take one tree-loop-tree, if $x$ and $y$  both are in the tree-loop part, then the result of equation \eqref{diffloopv1} is zero. We continue with  the case that in tree-loop-trees, $x$ and $y$ are located in different parts, one of them is located in the tree-loop part and the other is located in the tree part:
\begin{align}\label{diffloop2}
V_{\text{loop}}(\overline{e})=\dfrac{1}{W}\sum_{\ell \in \cal L}&\bigg(\sum_{\kappa\in\cal K_{\ell,x|y}}\,\sum_{\overline{\kappa} \in O(\kappa) }w(\overline{\kappa} )q(\overline{\ell} )-\sum_{\kappa\in \cal K_{\ell,y|x}}\,\sum_{\overline{\kappa} \in O(\kappa) }w(\overline{\kappa} )q(\overline{\ell} )\bigg).
	\end{align}
Here, $\cal K_{\ell,x|y}$ denotes a   tree-loop-tree with  loop $\ell$ such that $x$ is located in the tree-loop part and $y$ is located in the tree part. In the case that the edge $e$ is not located on a tree-loop $V_{\text{loop}}(\overline{e})=0$.  Put $\cal H_{\ell,e}$ as the set of all spanning tree-loops (including the loop $\ell$)  such that the edge $e$ is located on a tree and $\cal H_{\ell }^{(e)}$ is the set of all  tree-loop-trees which are created by removing the edge $e$ from a spanning tree-loop of $\cal H_{\ell,e}$. Corresponding to what state of the edge   $e$ is closer to the loop, the components of $\cal H_{\ell,e}$ can be split in two groups. The set of spanning tree-loops where $x$ is closer to the loop is denoted by $\cal H_{\ell_x, e }$, while $\cal H^{(e)}_{\ell_x }$  is  the set of all  tree-loop-trees such that $x$ is located on the tree-loop part. Rewrite \eqref{diffloop2} as
	\begin{align}
V_{\text {loop}}(\overline{e})=\dfrac{1}{W}\bigg(\sum_{\ell \in \cal L}\sum_{H \in \cal H^{(e)}_{\ell_x }}\sum_{\overline{ H}\in O(H) }w(\overline{ H})q(\overline{\ell} )-\sum_{\ell \in \cal L}\sum_{H \in \cal H^{(e)}_{\ell_y }}\sum_{\overline{ H}\in O(H) }w(\overline{ H})q(\overline{\ell} )\bigg).
	\end{align}
	
 Notice that  for every spanning tree-loop including the edge $e$ on a tree either $ \cal H^{(e)}_{\ell_x } $ or $ \cal H^{(e)}_{\ell_y } $ happens. \\

As an example, consider the graph in Fig.~\ref{exgraph} which has three loops. The possible tree-loops are shown in Fig.~\ref{extl}.

To find the differences of quasipotentials over the edge $\overline{e}=(x,w)$ the tree-loops including the edge $e:\{x,w\}$ in a tree are engaging, see $H_2$ and $H_3$ in Fig.~\ref{extl}. 

We first look at the  unoriented tree-loop  $H_{\ell,e}$ and we remove the edge $e$. In that way a tree-loop-tree is created.  Secondly, we consider  different possible orientations for the created  tree-loop-trees.  We rewrite \eqref{diffloop2},
\begin{align}
V_{\text{loop}}(x,y)=\frac{1}{W}\sum_{\ell \in \cal L}\,\sum_{H\in \cal H_{\ell,e}}\,\sum_{\overline{H}\in O(H)}\sigma_H(\overline{e})\, q(\overline{\ell})\,w(\overline{H})
	\end{align}
where $\sigma_H(\overline{e})=\pm 1$. If the edge $\overline{e}$ is oriented towards the loop of $H$, then  $\sigma_H(\overline{e})=-1$; otherwise it is positive. So then,

\begin{align}\label{diffloop3}
V_{\text{loop}}(\overline{e})=\frac{1}{W}\,\sum_{H\in \cal H_{e}}\, \sum_{\overline{H}\in O(H^{(e)})}\sigma_H(\overline{e})\, q(\overline{\ell})\,w(\overline{H}),\quad e\in \text{trees}
\end{align}
where $\cal H_e$ is a set of all spanning  tree-loops including the edge $e$ in a tree. If $H\in \cal H_e$, then $H^{(e)}$ denotes a tree-loop-tree which is created by  removing the edge $e$ from the tree-loop $ H$. $O(H^{(e)})$ is the set of all oriented  tree-loop-trees made by giving all possible orientations to $H^{(e)}$ (remember Definition \ref{tlt}).
Finally, the difference of the quasipotential over a directed edge $\overline{e}$ is

\begin{align}
V(\overline{e})&=  V_{\text {tree}}(\overline{e})+ V_{\text{loop}}(\overline{e})\notag\\
&=\dfrac{1}{W}\; \sum_{u\in \cal V}\sum_{T\in \cal T} w(T_u)\,q_{T}(x\rightarrow y )\notag\\
&+\frac{1}{W}\,\sum_{H\in \cal H_{e}}\, \sum_{\overline{H}\in O(H^{(e)})}\sigma_H(\overline{e})\, q(\overline{\ell})\,w(\overline{H}),\quad e\in \text{trees}.
	\end{align}
 
\section{Proof of Lemma \ref{aveg}}\label{avegproof}
\begin{lemma}
	For any directed edge $\overline{e}=(x_e,y_e)$, the average of $\cal P(x_e)=\sum_{y_e} k(\overline{e})q(\overline{e})$ is
	\begin{equation}
	\left\langle \cal P \right\rangle =\sum_{x_e}\sum_{y_e}k(\overline{e})q(\overline{e})\rho (x_e)=\frac{1}{W}\sum_{\overline{H} \in O(\cal H)}w(\overline{H})q(\overline{\ell})
	\end{equation}
\end{lemma}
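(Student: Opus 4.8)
The plan is to start from the definition $\langle\mathcal{P}\rangle=\sum_x\mathcal{P}(x)\rho^s(x)$, substitute the Kirchhoff representation \eqref{kir}--\eqref{kir2} of the stationary weights, and then reorganize the resulting sum over (rooted spanning tree, outgoing edge)-pairs into a sum over oriented spanning tree-loops. Writing $\rho^s(x)=w(x)/W$ with $w(x)=\sum_{T\in\mathcal{T}}w(T_x)$ and $\mathcal{P}(x)=\sum_y k(x,y)\,q(x,y)$ yields
\begin{equation}
\langle \mathcal{P}\rangle = \frac{1}{W}\sum_{x}\sum_{T\in\mathcal{T}} w(T_x)\sum_{y} k(x,y)\,q(x,y).
\end{equation}
The key observation is that each product $w(T_x)\,k(x,y)$ equals the weight $w(\overline{H})$ of a suitable oriented spanning tree-loop, so the whole expression is a weighted sum over $O(\mathcal{H})$ once the right bookkeeping of the heat factor $q$ is done.

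First I would establish the underlying bijection. In a rooted spanning tree $T_x$ every vertex except the root $x$ has exactly one outgoing edge (directed toward $x$), while $x$ has none. Adjoining the oriented edge $(x,y)$ with $\{x,y\}\in\mathcal{E}$ raises the out-degree of $x$ to one, so the resulting spanning subgraph $\overline{H}$ has every vertex of out-degree one and therefore contains a unique oriented loop $\overline{\ell}$, namely $(x,y)$ closed up by the unique directed path from $y$ back to $x$ in $T_x$, with the remaining branches forming rooted trees feeding into the loop. Thus $\overline{H}\in O(\mathcal{H})$ and $w(\overline{H})=w(T_x)\,k(x,y)$ by \eqref{ww}. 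Conversely, given $\overline{H}\in O(\mathcal{H})$ with oriented loop $\overline{\ell}$, deleting any one loop edge $(x,y)$ lowers the out-degree of its tail $x$ to zero and destroys the only cycle, leaving a spanning tree rooted at $x$, i.e. a $T_x$ with $w(T_x)\,k(x,y)=w(\overline{H})$. Hence the pairs $(T_x,(x,y))$ appearing above are in one-to-one correspondence with the pairs $(\overline{H},\overline{e})$ consisting of an oriented spanning tree-loop together with a distinguished edge $\overline{e}=(x,y)$ of its loop.

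With the bijection in hand, I would regroup the triple sum by the resulting tree-loop $\overline{H}$. A fixed $\overline{H}$ with oriented loop $\overline{\ell}=((x_1,x_2),\ldots,(x_m,x_1))$ is produced exactly $m$ times, once for each loop edge $(x_i,x_{i+1})$, and each such term carries the common weight $w(\overline{H})$ together with the heat $q(x_i,x_{i+1})$ of that single edge. Summing the contributions of all loop edges collapses $\sum_i q(x_i,x_{i+1})$ into $q(\overline{\ell})$, giving
\begin{equation}
\langle \mathcal{P}\rangle = \frac{1}{W}\sum_{\overline{H}\in O(\mathcal{H})} w(\overline{H})\sum_{\overline{e}\in\overline{\ell}} q(\overline{e}) = \frac{1}{W}\sum_{\overline{H}\in O(\mathcal{H})} w(\overline{H})\,q(\overline{\ell}),
\end{equation}
which is the claimed identity.

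The main obstacle is the combinatorial heart of the argument: verifying that adjoining an outgoing edge at the root of a rooted spanning tree always closes exactly one oriented loop (equivalently, that the out-degree-one spanning subgraphs are precisely the oriented spanning tree-loops) and that the inverse operation of deleting a loop edge returns a genuine rooted spanning tree rooted at its tail. Once this correspondence and its multiplicity — $m$ preimages, one per loop edge — are pinned down, the rearrangement of the single-edge heats into $q(\overline{\ell})$ is immediate.
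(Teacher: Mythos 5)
Your overall strategy---rewriting $\langle\mathcal{P}\rangle$ via the Kirchhoff formula as a sum over pairs (rooted spanning tree, edge leaving the root) and regrouping those pairs into oriented spanning tree-loops, each counted once per loop edge---is the same as the paper's, and your handling of the generic case, the inverse map, and the multiplicity-$m$ bookkeeping that collapses single-edge heats into $q(\overline{\ell})$ is correct. However, the bijection you assert is false as stated, and this is a genuine gap. When the adjoined edge $(x,y)$ is the reverse of an edge already present in $T_x$ (i.e.\ $\{x,y\}$ is an edge of the underlying tree $T$, oriented $(y,x)$ toward the root $x$), the ``unique oriented loop'' you close up is the $2$-cycle $x\to y\to x$. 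That is not a loop in the paper's sense (a loop must have pairwise distinct edges, hence at least three edges in a simple graph), and the resulting object is not an element of $O(\mathcal{H})$: it uses the single undirected edge $\{x,y\}$ with both orientations and is not an orientation of any spanning subgraph of $G$. So all pairs $(T_x,(x,y))$ with $\{x,y\}\in T$ fall outside your correspondence, and their total contribution $\frac{1}{W}\sum w(T_x)\,k(x,y)\,q(x,y)$ is not termwise zero, so it cannot simply be ignored.

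The missing step is to show that these degenerate terms cancel among themselves, which is precisely the ``first case'' in the paper's proof. For $\{x,y\}\in T$ one has $w(T_x)\,k(x,y)=w(T_y)\,k(y,x)$, because the two rooted versions of the same tree $T$ differ only in the orientation of the edge $\{x,y\}$; combined with the antisymmetry $q(x,y)=-q(y,x)$, the term of the pair $(T_x,(x,y))$ exactly cancels the term of the pair $(T_y,(y,x))$, and this pairing is a fixed-point-free involution on the set of degenerate pairs, so their sum vanishes. With that cancellation supplied, your bijection---now restricted to pairs $(T_x,(x,y))$ with $\{x,y\}\notin T$---together with your regrouping argument completes the proof along essentially the same lines as the paper.
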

where $w(\overline{H})=\prod_{(z,z')\in \overline{H}}k(z,z')$.
\begin{proof}
Take a rooted spanning tree $T_{x_e}$ and the edge $\overline{e}=(x_e,y_e)$. We consider two cases:\\
\emph{First case,} if the edge $\overline{e}'=(y_e,x_e)\in T_{x_e}$, then 
\begin{equation}k(\overline{e})w(T_{x_e})q(\overline{e})=k(\overline{e}')w(T_{y_e})q(\overline{e})=-k(\overline{e}')w(T_{y_e})q(\overline{e}').\end{equation}
From here we can conclude that 
\begin{equation}\sum_{T\in \cal T}\sum_{\overline{e},\overline{e}' \in T_{x_e}}k(\overline{e})w(T_{x_e})q(\overline{e})=0.\end{equation}
\emph{Second case,} if the edge $\overline{e}'=(y_e,x_e)\notin T_{x_e}$, then
$k(\overline{e})w(T_{x_e})$
is the same as the weight of a graphical object made by adding the edge $\overline{e}$ to the rooted tree $T_{x_e}$. That graphical object is a tree-loop. As  a consequence, 
\begin{equation}
k(\overline{e})w(T_{x_e})q(\overline{e})=w(\overline{H})q(\overline{e})
\end{equation}
where the loop of $\overline{H}$ is in the  same direction as $\overline{e}$. Summing over the edges in loop $\overline{\ell}$ (in clockwise or counter clockwise direction) in the underlying graph $G$ gives
\begin{equation}
\sum_{ \overline{e} \in \overline{\ell}} k(\overline{e})w(T_{x_e})q(\overline{e})=w(\overline{H}_{\ell})q(\overline{\ell}).
\end{equation}
Finally, we obtain
\begin{align}
   \sum_{x_e}\sum_{y_e}k(\overline{e})q(\overline{e})\rho (x_e)&=\frac{1}{W}    \sum_{x_e}\sum_{y_e}\sum_{T\in \cal T} k(\overline{e})q(\overline{e}) w(T_{x_e})\notag\\
   &=\frac{1}{W} \sum_{\overline{e}}\sum_{T\in \cal T} k(\overline{e})q(\overline{e}) w(T_{x_e})\notag\\
   &= \frac{1}{W} \sum_{\overline{e}}\sum_{T\ni\overline{e}' } k(\overline{e})q(\overline{e}) w(T_{x_e})+ \sum_{\overline{e}}\sum_{T\not\owns  \overline{e}' } k(\overline{e})q(\overline{e}) w(T_{x_e})\notag\\
   &=\frac{1}{W} \sum_\ell w(\overline{H}_{\ell})q(\overline{\ell})\notag\\
   &=\frac{1}{W}\sum_{\overline{H} \in O(\cal H)}w(\overline{H})q(\overline{\ell}).
\end{align}

\end{proof}

\section{Examples and illustrations}\label{exam}
This section is meant to clarify the graphical conditions of the main Theorems \ref{t1}--\ref{t2}.  It is not essential for the proofs and it can be skipped at first reading.  We first illustrate the notation with some simple examples.\\

\begin{example}\label{tree}
 Consider the graph $G$ in Fig.~\ref{ex}.
\begin{figure}[H]
    \centering
    \includegraphics[scale=0.35]{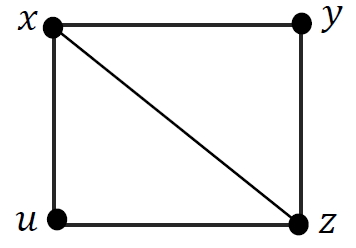}
   \caption{Graph $G$ made by three loops.}
    \label{ex}
    \end{figure}
The transition rates are
\begin{align}
 k(y,x)=k(y,z)=k(x,u)=k(u,z)=e^{-\beta}
\end{align}
and the rates over all other edges are equal to one. Definition \ref{phi} gives 
 \begin{align}
 &\phi(y,x)=\phi(y,z)=\phi(x,u)=\phi(u,z)=-1, \notag\\
 &\phi(x,y)=\phi(z,y)=\phi(u,x)=\phi(z,u)=\phi(z,x)=\phi(x,z)=0.
\end{align}
To find $\phi^*$ we need to  look at all  rooted spanning trees in the graph. The spanning trees are in Fig.~\ref{ext}.
        \begin{figure}[H]
    \centering
    \includegraphics[scale=0.4]{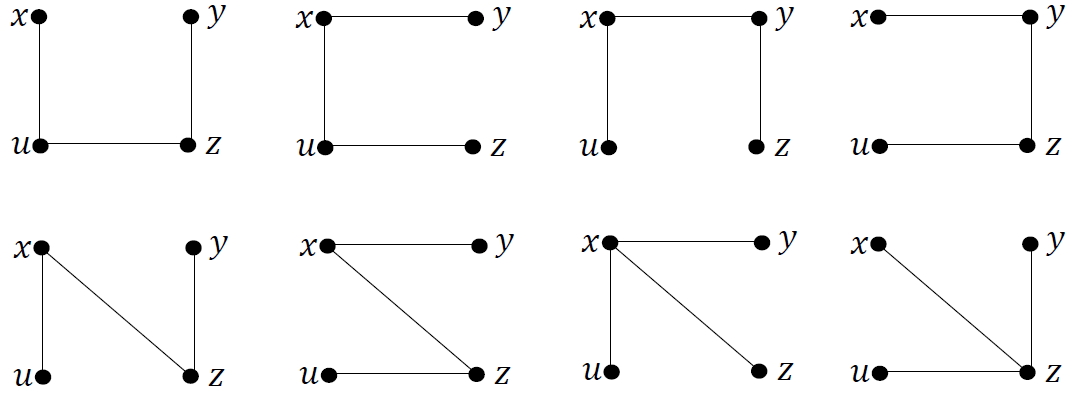}
   \caption{Spanning trees of the graph in Fig.~\ref{ex}. See \cite{intr} for more details.}
    \label{ext}
    \end{figure}
We find 
 \begin{align}
 \phi(x)=\phi(z)=\phi(u)=-1 , \quad \phi(y)=0
\end{align}
and thus $\phi^*=0$. Denote 
     \begin{equation}e_1:=\{x,y\}, \quad e_2:=\{y,z\}, \quad e_3:=\{z,u\}, \quad  e_4:=\{x,u\}, \quad  e_5:=\{x,z\}.\end{equation}
    To check  Condition 1a  we also  need to find all spanning tree-loops in the graph which are shown in Fig.~\ref{ex-tl}.
        \begin{figure}[H]
    \centering
    \includegraphics[scale=0.8]{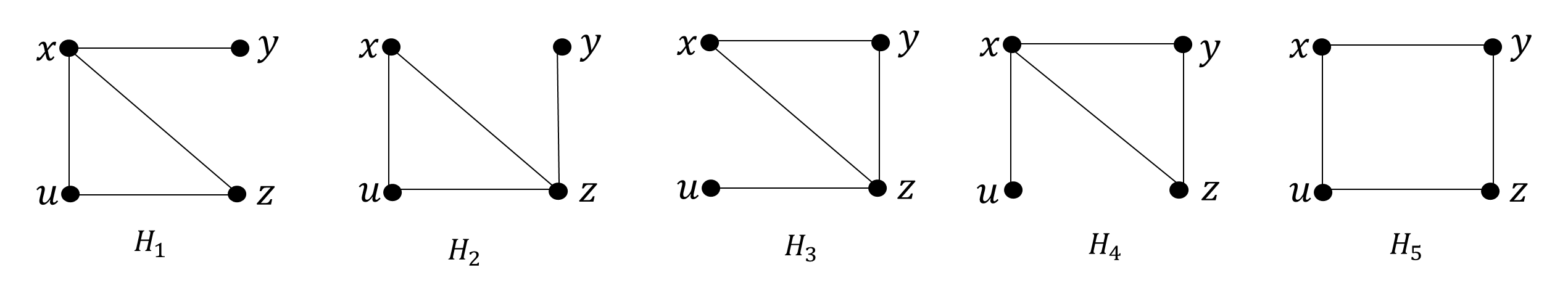}
   \caption{All tree-loops in the graph of Fig.~\ref{ex}.}
       \label{ex-tl}
    \end{figure}
For every edge, we need to look at the set of all spanning tree-loops that include this edge on a tree, 
   \begin{equation}\cal H_{e_1}=\{H_1\}, \quad \cal H_{e_2}=\{H_2\}, \quad \cal H_{e_3}=\{H_3\}, \quad \cal H_{e_4}=\{H_4\}, \quad \cal H_{e_5}=\emptyset. \end{equation}
    The next step is to construct for every possible edge the set of all oriented tree-loop-trees $O(\cal H^{(e)})$ (but that is not possible for the edge $e_5$). For example, if we look at the edge $e_1$, we need the set $O(H^{(e_1)}_1)$, which consists of two elements depending on the two possible orientations in the loop of $H_1$ see Fig.~\ref{ex-tl1}. 
       \begin{figure}[H]
    \centering
    \includegraphics[scale=0.20]{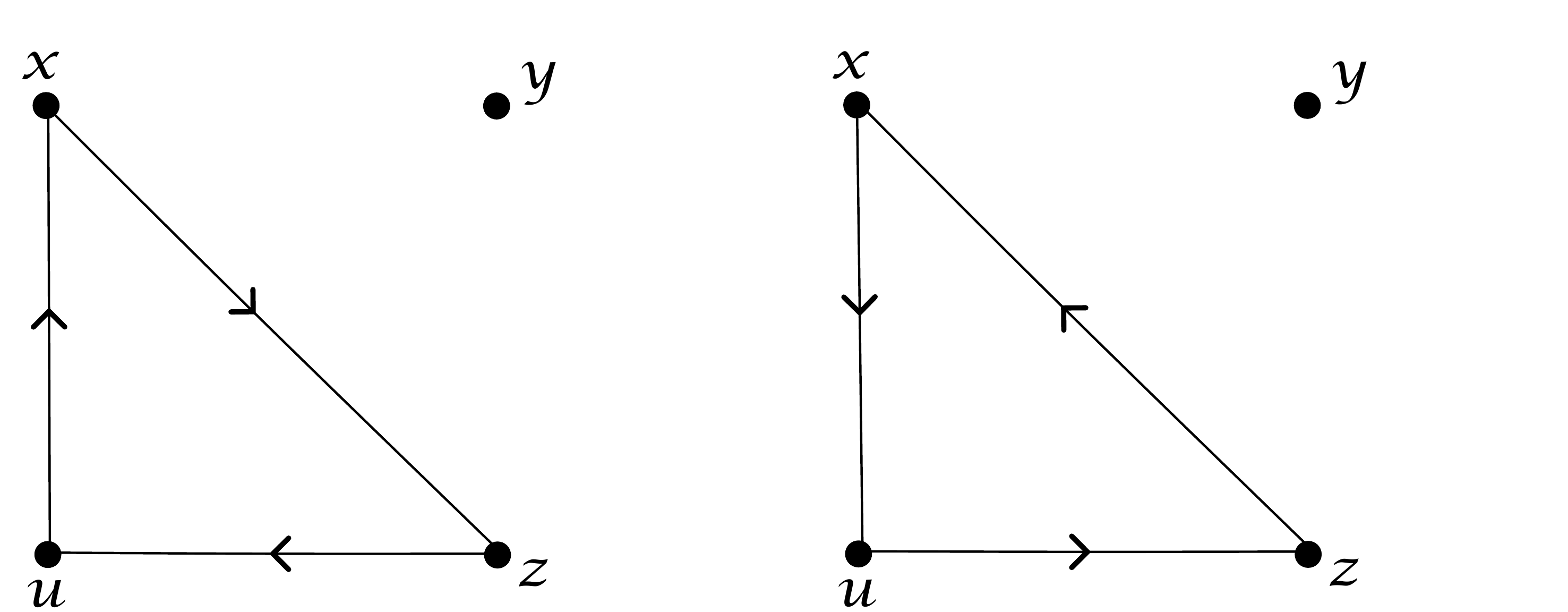}
   \caption{ The possible oriented tree-loop-trees created by removing the edge $e_1$ from  $H_{1}$.}
       \label{ex-tl1}
    \end{figure} 
Therefore, $O(H_1^{(e_1)})=\{\overline{H}_{1,1},\overline{H}_{1,2}\}$, where $\phi(\overline{H}_{1,1})=0 $ and $\phi(\overline{H}_{1,2})=-2$. We now do the same for the other edges and get, 
     \begin{align}
O(H_2^{(e_2)})&=\{\overline{H}_{2,1},\overline{H}_{2,2}\}, \quad \phi(\overline{H}_{2,1})=0,\, \phi(\overline{H}_{2,2})=-2\notag\\
O(H_3^{(e_3)})&=\{\overline{H}_{3,1},\overline{H}_{3,2}\}, \quad \phi(\overline{H}_{3,1})=-1,\, \phi(\overline{H}_{3.2})=-1\notag\\
O(H_4^{(e_4)})&=\{\overline{H}_{4,1},\overline{H}_{4,2}\}, \quad \phi(\overline{H}_{4,1})=-1,\, \phi(\overline{H}_{4,2})=-1.
\end{align}
We see that Condition 1a is satisfied.\\
To check  Condition 1b, we can use the Kirchhoff formula which expresses the stationary distribution in terms of weights of rooted spanning  trees; see e.g. \cite{intr}.  Here,  the state $y$ is the unique dominant state: $x^*=y$, unique maximizer of $\phi(z)$ in \eqref{phi}. 
\end{example}

Next we give a counterexample to  Condition 2, which can be interpreted as a ``no-delay condition''; for details of such a physical interpretation see the earlier discussion in \cite{jir} and the recent \cite{nernst}.

\begin{example}\label{nod}
Consider the graph in  Fig.~\ref{ex2} made by a centered triangle such that each state is symmetrically connected to the state in the center. 
\begin{figure}[H]
    \centering
    \includegraphics[scale=1]{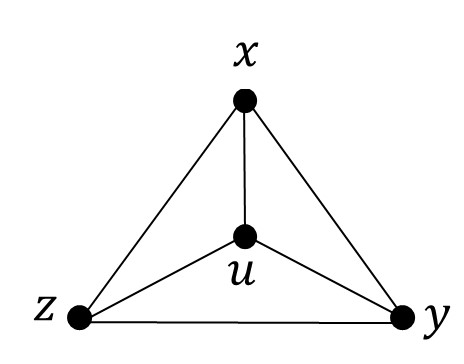}
   \caption{The graph has four loops.}
    \label{ex2}
    \end{figure}
A random walker is moving on the graph with transition rates 
\begin{align}
&k(x,y)=k(y,z)=k(z,x)= \nu, \quad k(y,x)=k(x,z)=k(z,y)= \nu e^{-\beta \ve}\notag\\
&k(x,u)=k(y,u)=k(z,u)=\mu\,e^{-\beta a}, \quad k(u,x)=k(u,y)=k(u,z)=\mu\,e^{-(a+\Delta)\beta} 
\end{align}
where $\varepsilon, \, a, \, \Delta \geq 0$.
The spanning trees with the largest weight are rooted in $u$; they start from a state on the triangle and visit the next state on the outer triangle in clockwise direction before going to the center. Therefore, $\phi^*=-a$.\\
There are $12$ spanning tree-loops, shown in Fig.~\ref{ex2tl}.

\begin{figure}[h!]
    \centering
    \includegraphics[scale=1]{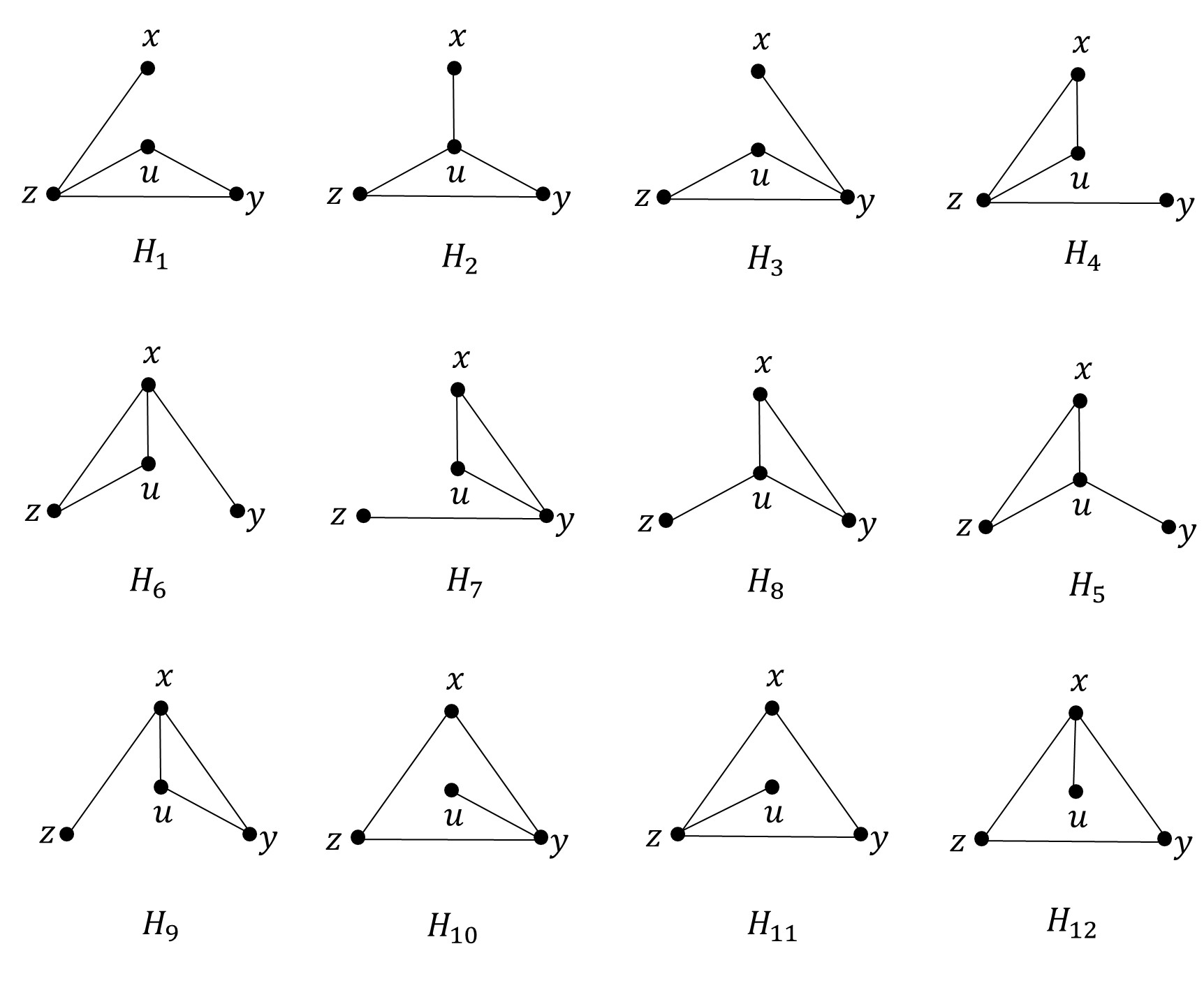}
   \caption{All spanning tree-loops in the graph of Fig.~\ref{ex2}.}
    \label{ex2tl}
    \end{figure}
    
Putting
\begin{align}
    e_1:=\{x,y\}, \quad e_2:=\{y,z\}, \quad e_3:=\{z,x\}, \quad  e_4:=\{x,u\},\quad  e_5:=\{y,u\},\quad  e_6:=\{z,u\}
\end{align}
we have
\begin{align}
    &\cal H_{e_1}=\{H_3,H_6\}, \quad \cal H_{e_2}=\{H_4,H_7\}, \quad \cal H_{e_3}=\{H_1,H_9\}, \notag\\
    &\cal H_{e_4}=\{H_2,H_{12}\}, \quad \cal H_{e_5}=\{H_5,H_{10}\}, \quad \cal H_{e_6}=\{H_8,H_{11}\}.
\end{align}
 The $\phi$'s for all small loops in the clockwise direction are equal to $- (2\,a+\Delta)$ and in counter-clockwise direction equal to $-(2\,a+\Delta+\varepsilon)$. The set $\cal H^{(e_1)}$ contains two tree-loop-trees made by removing the edge $\{x,y\}$ from $H_3$ and $H_6$. If we give orientations to both, we get four oriented tree-loop-trees for which the $\phi$ value is either equal to $-(2\,a+\Delta)$ or to $-(2\,a+\Delta+\varepsilon)$. That scenario repeats itself for  $\cal H^{(e_2)}$ and $\cal H^{(e_3)}$. Next, we take an edge connecting the outer triangle to the center and look at the sets $\cal H^{(e_4)}=\cal H^{(e_5)}=\cal H^{(e_6)}$ all containing two elements. Giving orientations to these two tree-loop-trees we get four oriented tree-loop-trees. The $\phi$ of the large loop in clockwise direction is zero and in counter-clockwise direction $-3 \varepsilon$. We see that Condition 1a is satisfied for the edges $e_1, e_2, e_3$ but not for the edges $e_4, e_5, e_6$. In fact,  it is easy to convince oneself that $V(x), V(y)$ and $V(z)$ are diverging for $a>0$ while $V(u)$ is uniformly bounded only when $\Delta\, >\, a$.  That is interesting because it shows that the conditions of our Theorems are not necessary:  for $a<\Delta$ the heat capacity and the excess heat do go to zero at absolute zero.  The reason is that the divergence of the quasipotential
 $V(x)= V(y) = V(z)$ in $\beta\uparrow \infty$ is slower than how their stationary probabilities go to zero.  For $a> \Delta$ also $V(u)$ is diverging and the heat capacity as well.  That again shows the physical content of the condition 1a:  if $a > \Delta$, there is a high barrier between the states $x,y,z$ on the one hand and $z$ on the other hand.  Starting say from state $x$ shows much delay in reaching the dominant state $u$. 
 
 \end{example}


\begin{thebibliography}{10}

\bibitem{callen}
H.~B.~Callen, Thermodynamics and an introduction to thermostatistics. John Wiley \& Sons, 1985.

\bibitem{paul}
L.~Pauling, \textit{The structure and entropy of ice and of other crystals with some randomness of atomic arrangement}. J. Am. Chem. Soc. {\bf 57}, 2680 (1935).

\bibitem{aL}
M.~Aizenman and E.H.~Lieb,\textit{ The third law of thermodynamics and
the degeneracy of the ground state for lattice systems}. J. Stat. Phys. (1981).

\bibitem{kast}
P.W.~Kasteleyn, \textit{The statistics of Dimers on a lattice: I. The number of dimer arrangements on a quadratic lattice}. Physica {\bf 27}, 1209 (1962).

\bibitem{temp}
H.N.V.~Temperley and M. E.~Fisher, \textit{Dimer problem in statistical mechanics --an exact result}. Phil. Mag. {\bf 6}, 1061 (1961).

\bibitem{lieb} 
E.H.~Lieb, \textit{Exact solution of the problem of the entropy of two-dimensional ice}.~ Phys. Rev.~ Lett. {\bf 18}, 692 (1967);\\
---, \textit{Residual entropy of square ice}.  Phys. ~Rev. {\bf 162}, 162 (1967).

\bibitem{prig}
P.~Glansdorff, G.~Nicolis and I.~Prigogine, \textit{The thermodynamic stability
theory of non-equilibrium states}. Proc. Nat. Acad. Sci. USA {\bf 71}, 197--199 (1974).
	
\bibitem{oono}
Y.~Oono and M.~Paniconi, \textit{Steady state thermodynamics}. Prog. Theor. Phys. Suppl. {\bf 130}, 29 (1998).
	
\bibitem{kom2}
T.S.~Komatsu, N.~Nakagawa, S.I.~Sasa and H.~Tasaki, \textit{
Steady state thermodynamics for heat conduction -- microscopic derivation}.
Phys. Rev. Lett., {\bf 100}, 230602 (2008).\\---,
\textit{Representation of nonequilibrium steady states in large mechanical systems}. J. Stat. Phys.  {\bf 134},  401 (2009).

\bibitem{dGM}
S.R.~De Groot and P.~Mazur,  Non-Equilibrium Thermodynamics.
Dover Books on Physics, reprinted by Courier Corporation, 2013.

\bibitem{epl}
E.~Boksenbojm, C.~Maes, K.~Neto\v{c}n\'{y} and J.~Pe\v{s}ek,\textit{ Heat capacity in nonequilibrium steady states}. Europhys. Lett.\textbf{96}, 40001 (2011). 

\bibitem{cejp} 
J.~Pe\v{s}ek, E. Boksenbojm, and K.~Neto\v{c}n\'{y},
\textit{Model study on steady heat capacity in driven stochastic systems.} Cent. Eur. J. Phys. {\bf 10}(3), 692--701 (2012).
 
 \bibitem{cal}
C.~Maes and K.~Netočný, \textit{Nonequilibrium Calorimetry}. J. Stat. Mech. {\bf 114004} (2019).
 
\bibitem{jir}
J.~Pe\v{s}ek, \textit{Heat processes in non-equilibrium stochastic systems}, PhD Thesis (2014). 

\bibitem{man}
D.~Mandal,\textit{ Nonequilibrium heat capacity}. Phys. Rev. E \textbf{88}, 062135 (2013).

\bibitem{subas}
J.-T. Hsiang, C. H. Chou, Suba\ifmmode \mbox{\c{s}}\else \c{s}\fi{}\ifmmode \imath \else \i \fi{}, Y. and B. L. Hu, \textit{ Quantum thermodynamics from the nonequilibrium dynamics of open systems: Energy, heat capacity, and the third law}. Phys. Rev. E,  {\bf 97}, 012135 (2018). 

\bibitem{dls}
B.~Schmittmann and R.K.P.~Zia, \textit{Statistical mechanics of driven diffusive systems}.  Phase transitions and critical phenomena \textbf{17}, Academic Press, London (1995).

\bibitem{nernst}
F.~Khodabandehlou, C.~Maes and K.~Netočný, \textit{A Nernst heat theorem for nonequilibrium jump processes}.  arXiv:2207.10313v4 [cond-mat.stat-mech](2023).

  
\bibitem{simon}
 F.~Khodabandehlou, S.~Krekels and I.~Maes, \textit{Exact computation of heat capacities for active particles on a graph}. Journal of Statistical Mechanics: Theory and Experiment, \textbf{12} 123208 (2022).
 
\bibitem{pritha}
P.~Dolai, C.~Maes and K~Netočný, \textit{Calorimetry for active systems}. To be published in SciPost Physics, arXiv:2208.01583v2 [cond-mat.stat-mech] (2023).
    
\bibitem{drazin}
F.~Khodabandehlou and I.~Maes, \textit{Drazin-Inverse and heat capacity for driven random walks on the ring}. arXiv:2204.04974 (2022).

\bibitem{ldb}
C.~Maes, \textit{Local detailed balance}.  SciPost Phys. Lect. Notes {\bf32}, (2021).





\bibitem{poiss}
E.~Pardoux and A.~Yu.~Veretennikov, \textit{On the Poisson equation and diffusion approximation I}. Ann.
Prob.\textbf{ 29}, 1061--1085 (2001).\\
---, \textit{On the Poisson equation and diffusion approximation 2}.  Ann.Prob., \textbf{31}, 1166--1192 (2003).

\bibitem{pm}
X.~Sun, Y.~Xie, \textit{The Poisson equation and application to multi-scale SDEs with state-dependent switching}.  arXiv:2304.04969v1 [math.PR] (2023).

\bibitem{heatb} 
C.~Maes and K.~Neto\v{c}n\'{y},  \textit{Heat bounds and the blowtorch theorem}. Annales Henri Poincar\'e \textbf{14}(5), 1193--1202 (2013).

\bibitem{lowT}
C.~Maes, K.~Neto\v{c}n\'{y} and W.~O'Kelly de Galway, \textit{Low temperature behavior of nonequilibrium multilevel systems}. Journal of Physics A: Math. Theor. \textbf{47}, 035002 (2014).

\bibitem{kir}
S.~Chaiken  and D.J.~Kleitman, \textit{Matrix tree theorem}. Journal of combinatorial theory A \textbf{24}, 377--381 (1978). 



\bibitem{intr}
F.~Khodabandehlou, C.~Maes and K.~Netočný, \textit{Trees and forests for nonequilibrium purposes: an introduction to graphical representations}.  J, Stat. Phys. \textbf{189}, 41 (2022).

\bibitem{bala}
R.~Balakrishnan and  K.~Ranganathan, A Textbook of Graph Theory. Springer --- New York Heidelberg Dordrecht London, 2012.


\bibitem{MForest1}
P.~Chebotarev and R.~Agaev,  \textit{Forest matrices around the Laplacian matrix}. arXiv:math/0508178v2 (2005).
 
\bibitem{che1}
R.~Agaev and P.~Chebotarev, \textit{The matrix of maximum out forests of a digraph and its applications}. Autom Remote Control \textbf{ 61}, 1424--1450 (2006).
	
\bibitem{MForest2}
P.~Chebotarev and E.~Shamis,  \textit{The matrix-forest theorem and measuring relations in small social groups}. Autom Remote Control \textbf{58,} 1505,0602070, (1997).	

\end{thebibliography}
\end{document}